\documentclass[final, 11pt]{article}


\def\showauthornotes{0}
\def\showkeys{0}
\def\showdraftbox{0}

\usepackage{bbm}

\usepackage{xspace,xcolor,enumerate}
\usepackage{amsmath,amsfonts,amssymb}
\usepackage{color,graphicx}

\ifnum\showkeys=1
\usepackage[color]{showkeys}
\fi

\definecolor{darkred}{rgb}{0.5,0,0}
\definecolor{darkgreen}{rgb}{0,0.5,0}
\definecolor{darkblue}{rgb}{0,0,0.5}

\usepackage[pdfstartview=FitH,pdfpagemode=None,colorlinks,linkcolor=darkred,filecolor=blue,citecolor=darkred,urlcolor=darkred,pagebackref]{hyperref}

\usepackage{dsfont}

\usepackage[capitalise]{cleveref}

\setlength{\topmargin}{-1 in} \setlength{\oddsidemargin}{0 in}
\setlength{\evensidemargin}{0 in} \setlength{\textwidth}{6.5 in}
\setlength{\textheight}{9 in} \setlength{\headsep}{0.75 in}
\setlength{\parindent}{0 in} \setlength{\parskip}{0.05 in}

\ifnum\showauthornotes=1
\newcommand{\Authornote}[2]{{\sf\small\color{red}{[#1: #2]}}}
\newcommand{\Authorcomment}[2]{{\sf \small\color{gray}{[#1: #2]}}}
\newcommand{\Authorfnote}[2]{\footnote{\color{red}{#1: #2}}}
\else
\newcommand{\Authornote}[2]{}
\newcommand{\Authorcomment}[2]{}
\newcommand{\Authorfnote}[2]{}
\fi

\ifnum\showdraftbox=1
\newcommand{\draftbox}{\begin{center}
  \fbox{%
    \begin{minipage}{2in}%
      \begin{center}%
        \begin{Large}%
          \textsc{Working Draft}%
        \end{Large}\\
        Please do not distribute%
      \end{center}%
    \end{minipage}%
  }%
\end{center}
\vspace{0.2cm}}
\else
\newcommand{\draftbox}{}
\fi

\newtheorem{theorem}{Theorem}[section]

\newtheorem{definition}[theorem]{Definition}
\newtheorem{lemma}[theorem]{Lemma}

\newtheorem{corollary}[theorem]{Corollary}
\newtheorem{claim}[theorem]{Claim}



\def\FullBox{\hbox{\vrule width 6pt height 6pt depth 0pt}}

\def\qed{\ifmmode\qquad\FullBox\else{\unskip\nobreak\hfil
\penalty50\hskip1em\null\nobreak\hfil\FullBox
\parfillskip=0pt\finalhyphendemerits=0\endgraf}\fi}

\def\qedsketch{\ifmmode\Box\else{\unskip\nobreak\hfil
\penalty50\hskip1em\null\nobreak\hfil$\Box$
\parfillskip=0pt\finalhyphendemerits=0\endgraf}\fi}

\newenvironment{proof}{\begin{trivlist} \item {\bf Proof:~~}}
   {\qed\end{trivlist}}

\newenvironment{proofof}[1]{\begin{trivlist} \item {\bf Proof
#1:~~}}
  {\qed\end{trivlist}}


\def\to{\rightarrow}
\def\eps{\varepsilon}
\def\epsilon{\varepsilon}

\def\phi{\varphi}

\newcommand{\defeq}{\stackrel{\mathrm{def}}=}


\newcommand{\ie}{i.e.,\xspace}

\newcommand{\etal}{et al.\xspace}

\newcommand{\mper}{\,.}
\newcommand{\mcom}{\,,}

\newcommand{\R}{{\mathbb R}}
\DeclareMathOperator*{\E}{\mathbb{E}}
\newcommand{\C}{{\mathbb C}}
\newcommand{\N}{{\mathbb{N}}}

\newcommand{\F}{{\mathbb F}}


\usepackage{nicefrac}


\newcommand{\abs}[1]{\ensuremath{\left\lvert #1 \right\rvert}}

\newcommand{\norm}[1]{\ensuremath{\left\lVert #1 \right\rVert}}

\newcommand{\ip}[1]{\left\langle #1 \right\rangle}




\newcommand{\Esymb}{\mathbb{E}}
\newcommand{\Psymb}{\mathbb{P}}

\DeclareMathOperator*{\ExpOp}{\Esymb}

\DeclareMathOperator*{\ProbOp}{\Psymb}
\renewcommand{\Pr}{\ProbOp}

\newcommand{\Prob}[2]{\Pr_{{#1}}\left[{#2}\right]}

\newcommand{\Ex}[2]{\ExpOp_{{#1}}\left[{#2}\right]}


\newfont{\inhead}{eufm10 scaled\magstep1}
\newcommand{\deffont}{\sf}











\newcommand{\Szemeredi}{Szemer\'edi\xspace}


\newtheorem{prop}[theorem]{Proposition}

\newtheorem{dfn}[theorem]{Definition}
\newtheorem{note}[theorem]{Note}

\newtheorem{remark}[theorem]{Remark}

\newtheorem{examplea}[theorem]{Example}

\def\rank{{\rm{rank}}}
\def\deg{{\rm{deg}}}
\def\bias{{\rm{bias}}}

\def\tP{\widetilde{P}}
\renewcommand{\hat}{\widehat}
\def\cF{\mathcal{F}}
\def\dim{{\mathrm{dim}}}
\def\tsigma{\widetilde{\sigma}}
\def\deg{{\mathrm{deg}}}
\def\tQ{\widetilde{Q}}
\def\rO{{\rm O}}
\def\tmu{\widetilde{\mu}}
\def\obs{\mathrm{obs}}
\DeclareMathOperator*{\argmin}{arg\,min}
\def\cC{\mathcal{C}}
\def\Der{\mathrm{Der}}

\def\tgamma{\widetilde{\gamma}}
\def\tsigma{\widetilde{\sigma}}
\def\tcF{\widetilde{\cF}}

\newcommand\ignore[1]{}

\newcommand{\restate}[2]{\medskip
\noindent{\bf #1 (restated).}{\sl #2}}

\title{%
Algorithmic regularity for polynomials and applications
}%

\author{
 Arnab Bhattacharyya\thanks{Indian Institute of Science.  Email: \texttt{arnabb@csa.iisc.ernet.in}.}
 \and
 Pooya Hatami\thanks{ University of Chicago. Email: \texttt{pooya@cs.uchicago.edu}.}
 \and
 Madhur Tulsiani\thanks{TTI Chicago.  Email: \texttt{madhurt@ttic.edu}. Research supported by NSF
    Career Award CCF-1254044.}
}

\date{\today}

\begin{document}

\sloppy

\maketitle

\draftbox

\begin{abstract}
In analogy with the regularity lemma of \Szemeredi \cite{Szem75},
regularity lemmas for polynomials shown by Green  and Tao \cite{GT09} and by Kaufman and Lovett \cite{KL08} give a way of modifying a given collection of
polynomials $\cF  = \{P_1,\ldots,P_m\}$ to a new collection $\cF'$ so that the polynomials in
$\cF'$ are ``pseudorandom''.  These lemmas have various applications, such as (special cases) of
Reed-Muller testing and worst-case to average-case reductions for polynomials. However, the
transformation from $\cF$ to $\cF'$ is not algorithmic for either regularity lemma.
We define new notions of regularity for polynomials, which are
analogous to the above, but which allow for
an efficient algorithm to compute the pseudorandom collection
$\cF'$. In particular, when the field is of high characteristic, in
polynomial time, we can refine $\cF$ into $\cF'$ where  every nonzero 
linear combination of polynomials in $\cF'$ has desirably small Gowers norm.


\medskip
Using the algorithmic regularity lemmas, we show that if a polynomial
$P$ of degree $d$ is within (normalized) Hamming distance
$1-\frac{1}{|\F|} 
-\eps$ of some unknown polynomial of degree $k$ over a prime field
$\F$ (for $k < d < |\F|$), then there is an
efficient algorithm for finding a degree-$k$ polynomial $Q$, which is within distance $1-\frac{1}{|\F|}
-\eta$ of $P$, for some $\eta$ depending on $\eps$. This can be thought of as decoding the
Reed-Muller code of order $k$ \emph{beyond} the list decoding radius, in the sense of finding one
close codeword, when the received word $P$
itself is a polynomial (of degree larger than $k$ but smaller than $|\F|$). 

\medskip
We  also obtain an algorithmic version of the worst-case to average-case reductions by Kaufman and Lovett
\cite{KL08}. They show that if a polynomial of degree $d$ can be weakly approximated by a polynomial of
lower degree, then it can be computed exactly using a collection of polynomials of degree at most
$d-1$. We give an efficient (randomized) algorithm to find this collection.
\end{abstract}

\thispagestyle{empty}
\newpage

\setcounter{page}{1}


\section{Introduction}
Regularity is a notion of ``pseudorandomness'' that allows one to decompose a given object
into a collection of simpler objects which appear random according to
certain statistics. The famous regularity lemma of
\Szemeredi \cite{Szem75, Szem78} says that any dense graph can be partitioned into a collection of
bounded number of ``pseudorandom'' bipartite graphs. The \Szemeredi regularity lemma has numerous
applications in combinatorics and property testing. 

The original proof by \Szemeredi was
non-algorithmic and the question of finding an algorithm for computing the regularity partition was
first considered by Alon \etal \cite{AlonDLRY94}, motivated (at least partly) by the problem of
converting some of the applications of the regularity lemma into algorithms. The regularity lemma is
often used to guarantee the existence of certain structures in a graph, and an algorithmic version
of the lemma allows one to \emph{find} these structures in a given graph. Since then, there have
been numerous improvements and extensions to the algorithmic version of the \Szemeredi regularity
lemma, of which the works \cite{AlonN06, FriezeK99, KohayakawaRT03, FischerMS07} constitute a
partial list (see \cite{FischerMS07} for a detailed discussion).

In studying a special case of the inverse conjecture for Gowers' norms over finite fields, Green and
Tao \cite{GT09} introduced a notion of regularity for a collection of polynomials. We call a
collection $\cF = \{P_1,\ldots,P_m\}$ of a bounded number of polynomials over  $\F^n$ 
for  a prime field $\F$, a {\deffont factor} of degree $d$, if all the polynomials in the
collection have degree at most $d$. The Green-Tao notion of regularity states that a given factor is
regular if every linear combination $\sum_{i=1}^m c_i \cdot P_i$ of the polynomials in the given
factor has high rank \ie if $k \leq d$ is the degree of the highest-degree polynomial with a
non-zero coefficient in the linear combination $Q = \sum_{i} c_i \cdot P_i$, then $Q$ cannot be
written as a function $\Gamma$ of some bounded number $M$ of degree-$(k-1)$ polynomials. Their
notion in fact allows $M$ to be a \emph{function of the number of polynomials in} $\cF$ (which is
$m$ here).

Green and Tao show that any given factor $\cF$ of bounded degree-$d$ with the number of polynomials
$m$ denoted as $\dim(\cF)$, and a function $F: \N \to \N$, one can ``refine'' it to a factor $\cF'$
of degree-$d$ which is $F$-regular \ie it is regular in the above sense, with the lower bound on the
rank given by $F(\dim(\cF'))$. Here by $\cF'$ being a refinement of $\cF$, we mean that each
polynomial $P$ in $\cF$ can be expressed as a function of polynomials from $\cF'$. Also, they show that
 the number of polynomials in $\cF'$ depends only on $d, \dim(\cF)$ and the function $F$, but is
 independent of the number of variables $n$. This can again be compared with the \Szemeredi
 regularity lemma, which is proved by showing that starting from any partition of a given graph,
 one can ``refine'' it to a regular partition, such that the number of pieces only depend on the
 regularity parameter and the number of pieces in the original partition.

\newcommand{\eF}[1]{e_{\F}(#1)}
We consider (and solve) the question of finding an algorithmic version of the above regularity
lemma, motivated by one of the applications in \cite{GT09}. 
For $c \in \F$, let $\eF{c}$ denote $e^{2\pi ic/|\F|}$, where $i$ is the square root of
$-1$. 
For $f: \F^n \to \C$, let $\Delta_h(f)(x) =
f(x+h)\overline{f(x)}$ denote the multiplicative \emph{derivative} of $f$ in the direction $h$. 
Then $\norm{f}_{U^{k+1}}$, the {\em $(k+1)$th Gowers norm of $f$}, is defined as
\[
\norm{f}_{U^{k+1}} ~\defeq~ \abs{\Ex{x,h_1,\ldots h_k \in \F^n}{\Delta_{h_1} \ldots \Delta_{h_k}
    f(x)}}^{1/2^k} \mper
\]
Green and Tao show the following:
\begin{theorem}[\cite{GT09}]\label{thm:gt09}
If a polynomial $P$ of degree $d < |\F|$ satisfies
$\norm{\eF{P}}_{U^{k+1}} \geq \eps$, then there exists a polynomial
$Q$ of degree at  most $k$, such that 
\[
\abs{\ip{\eF{P},\eF{Q}}} ~=~ \abs{\Ex{x \in \F^n}{\eF{P(x)-Q(x)}}} ~\geq~ \eta \mcom
\] 
for some $\eta$ depending only on $\eps$ and $d$. 
\end{theorem}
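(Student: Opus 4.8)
The plan is to route the argument through the Green--Tao polynomial regularity lemma (the tool this paper subsequently makes algorithmic) and reduce the statement to an equidistribution computation for regular factors. \textbf{Step 1 (regularize $P$).} View $\{P\}$ as a factor of degree $d$ and refine it, via the Green--Tao regularity lemma with a growth function $F$ to be fixed at the end in terms of $\eps$ and $d$, into an $F$-regular factor $\calB=\{R_1,\dots,R_C\}$ of degree at most $d$ with $C\le C(\eps,d)$ and $P=\Gamma(R_1,\dots,R_C)$ for some $\Gamma\from\F^C\to\F$. Since every nonzero $\F$-linear combination of the $R_i$ has rank at least $F(C)$, the $R_i$ are $\F$-linearly independent as polynomials and, invoking the bias--rank correspondence for polynomials of degree below $|\F|$ (the one place the hypothesis $d<|\F|$ is used), every nonzero linear combination of them has bias $o_F(1)$. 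Split $\calB=\calB_{\le k}\cup\calB_{>k}$ according to whether $\deg R_i\le k$, and write $L=\{i:\deg R_i\le k\}$.

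\textbf{Step 2 (equidistribution along Gowers cubes).} The analytic heart is the near-orthogonality of high-rank factors. Writing $\omega\cdot h=\sum_i\omega_ih_i$ for $\omega\in\{0,1\}^{k+1}$, I claim that for $F$ large enough the tuple $\bigl(R_i(x+\omega\cdot h)\bigr)_{i\in[C],\,\omega\in\{0,1\}^{k+1}}$, with $x,h_1,\dots,h_{k+1}$ uniform over $\F^n$, is $o_F(1)$-close to being uniform over the affine subspace of $(\F^C)^{\{0,1\}^{k+1}}$ cut out exactly by the identities $\sum_{\omega}(-1)^{|\omega|}R_i(x+\omega\cdot h)=0$ for $i\in L$ (which hold identically because those $R_i$ have degree $\le k$), with no constraint at all on the coordinates indexed by $i\notin L$. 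What we actually need from this is: whenever field elements $\beta^{(\omega)}_i$ are prescribed with $\beta^{(\omega)}_i\neq0$ for some $i\notin L$, one has $\Ex{x,h}{\eF{\sum_{i,\omega}\beta^{(\omega)}_iR_i(x+\omega\cdot h)}}=o_F(1)$, while if all such coordinates vanish the average factors through $L$ up to $o_F(1)$. This is a standard consequence of the high-rank hypothesis on $\calB$ (again via the bias--rank machinery valid for $d<|\F|$), but it is the technical crux.

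\textbf{Steps 3--4 (extract a correlated polynomial of degree $\le k$).} Fourier-expand $\eF{P}=\sum_{\alpha\in\F^C}\hat\Gamma(\alpha)\,\eF{\sum_j\alpha_jR_j}$ and substitute into the standard expansion of $\norm{\eF P}_{U^{k+1}}^{2^{k+1}}$ as the average over $x,h_1,\dots,h_{k+1}$ of the product over $\omega\in\{0,1\}^{k+1}$ of the copies (conjugated at odd $|\omega|$) of $\eF P$ at the points $x+\omega\cdot h$. Expanding the product and applying Step 2, every term carrying a nontrivial character on some coordinate $i\notin L$ contributes $o_F(1)$, and the number of terms is bounded in terms of $\eps,d,|\F|$; using $\sum_\alpha|\hat\Gamma(\alpha)|^2=1$ this yields $\eps^{2^{k+1}}\le\norm{\eF P}_{U^{k+1}}^{2^{k+1}}\le o_F(1)+\bigl(\sum_{\alpha\text{ supported on }L}|\hat\Gamma(\alpha)|\bigr)^{2^{k+1}}$. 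Choosing $F$ large enough (permissible since $C$ is already bounded in terms of $\eps,d$), we get some $\alpha^\ast$ supported on $L$ with $|\hat\Gamma(\alpha^\ast)|\ge\eta_0=\eta_0(\eps,d)>0$. Set $Q\defeq\sum_{i\in L}\alpha^\ast_iR_i$, a polynomial of degree $\le k$. Then $\ip{\eF P,\eF Q}=\sum_\alpha\hat\Gamma(\alpha)\,\Ex{x}{\eF{\sum_j\alpha_jR_j(x)-Q(x)}}$; for $\alpha\neq\alpha^\ast$ the polynomial $\sum_j\alpha_jR_j-Q$ is a nonzero $\F$-linear combination of the $R_i$ (here their linear independence is used), hence has bias $o_F(1)$, whereas for $\alpha=\alpha^\ast$ the expectation is $1$. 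Since $\sum_\alpha|\hat\Gamma(\alpha)|\le|\F|^{C/2}$ is bounded, $|\ip{\eF P,\eF Q}|\ge|\hat\Gamma(\alpha^\ast)|-o_F(1)\ge\eta$ for a suitable $\eta=\eta(\eps,d)$, which is what we wanted.

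\textbf{Main obstacle, and an alternative.} The genuinely hard part is Step 2 --- showing that a high-rank polynomial factor equidistributes along the $2^{k+1}$ vertices of a Gowers cube with \emph{no} linear relation among the $2^{k+1}$ evaluations of $R_i$ beyond the one forced by $\deg R_i\le k$; this is where the high-rank conclusion of the regularity lemma does its work and where $d<|\F|$ is essential. A more self-contained route avoids the regularity lemma by induction on $k$: the case $k=0$ is immediate from $\norm{\eF P}_{U^1}=\abs{\Ex{x}{\eF{P(x)}}}$ (take $Q\equiv0$), and for the inductive step the identity $\norm{\eF P}_{U^{k+1}}^{2^{k+1}}=\Ex{h}{\norm{\Delta_h\eF P}_{U^k}^{2^k}}$ with $\Delta_h\eF P=\eF{P(\cdot+h)-P}$ (a polynomial of degree $\le d-1<|\F|$) produces, for a positive-density set of directions $h$, a degree-$(k-1)$ polynomial $Q_h$ correlated with $P(\cdot+h)-P$; one then has to ``integrate'' the family $\{Q_h\}$ into a single degree-$k$ polynomial $Q$, a step that requires $k!$ to be invertible --- hence characteristic $>k$, guaranteed by $k<d<|\F|$ --- and is itself the delicate point of that approach.
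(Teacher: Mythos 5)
The paper does not prove Theorem~\ref{thm:gt09}; it is cited from Green and Tao~\cite{GT09} and serves only as motivation, so there is no in-paper proof to compare against line by line. Your outline is essentially a faithful reconstruction of the Green--Tao route (regularize $\{P\}$, invoke equidistribution of a high-rank factor along Gowers cubes, read off a large degree-$\le k$ Fourier mode of $\Gamma$), and you rightly flag the cube-equidistribution step as the technical heart. The paper's own closest analogue is the proof of Theorem~\ref{thm:rmdecoding}, which contains a constructive form of the present statement but takes a somewhat different path in two places. To produce the low-degree factor it does not regularize $\{P\}$ directly; it observes that $\bias(\partial^dP)=\norm{\eF{P}}_{U^d}^{2^d}$ is large, applies the algorithmic Bogdanov--Viola lemma to $\partial^dP$, and recovers $P$ via the Taylor identity $P=\frac{1}{d!}\partial^dP(x,\dots,x)+(\text{lower degree})$ --- this is where $d<|\F|$ enters, in place of your invocation of the bias--rank correspondence (which is itself a theorem proved by a mutual induction with the regularity lemma, a circularity your Step~1 glosses over). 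To extract a large low-degree coefficient, Claim~\ref{claim:largecoeff} avoids your $2^{k+1}$-fold cube expansion: it splits $\eF{P}=g+h$ into degree-$\le k$ and degree-$>k$ characters, bounds $\norm{h}_{U^{k+1}}$ via the triangle inequality plus $\gamma$-uniformity of the factor, and recurses; near-orthogonality (Lemma~\ref{lem:nearorthogonality}) is then used only to identify Fourier coefficients with inner products. Two small inaccuracies in your sketch: the consistent tuples in Step~2 satisfy more relations than one per $i\in L$ whenever $\deg R_i<k$ (cf.\ Lemma~\ref{lem:sigmabox}), though this does not affect the qualitative conclusion you use; and ``$F$ large enough'' must mean ``$F$ a fast-growing \emph{function},'' since the refined dimension $C$ depends on $F$ --- the regularity lemma's role is precisely to bound $C$ once $F$ is fixed, resolving the apparent circularity.
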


\newcommand{\gnorm}[2]{\norm{#1}_{U^{#2}}}
\newcommand{\dist}{\mathsf{Dist}}
\newcommand{\rdistance}{1-\frac{1}{|\F|}}

\paragraph{Testing and Decoding Reed-Muller codes beyond the list-decoding radius.}
The above result has a direct interpretation in terms of Reed-Muller codes over $\F$, which raises
the algorithmic question we consider. The Reed-Muller code of order $k$ over $\F^n$ is simply the set
of polynomials of degree at most $k$ over $\F^n$. 
If for a given polynomial $P$ of
degree $d$, there exists a polynomial $Q$ of degree $k$ such that $\Prob{x \in \F^n}{P(x) = Q(x)}
\geq \frac{1}{|\F|} +\eps$, which is the same as saying that $\dist(P,Q) \leq 1 - \frac{1}{|\F|} -
\eps$ (for $\dist(P,Q)$ denoting the normalized Hamming distance),  
then it follows from the definition of Gowers norms and the Gowers Cauchy-Schwarz inequality that $\gnorm{\eF{tP}}{k+1} \geq \eps$ for a nonzero $t\in \F$. 
Then, \cref{thm:gt09} gives that there exists a
polynomial $\tilde{Q}$ of degree $k$ such that $\abs{\ip{\eF{tP},\eF{\tilde{Q}}}} \geq \eta$, which can be
translated to saying that $\dist(P,Q') \leq 1 - \frac{1}{|\F|} - \eta'$ for some $\eta' > 0$ and $Q'$
of degree $k$. 

Thus, the Gowers norm gives an approximate test for checking if for a given $P$,
there exists a $Q$ of degree at most $k$ within Hamming distance $\rdistance - \eps$. If there
exists a $Q$, then the Gowers norm is large and if the Gowers norm is larger than $\eps$, 
then there exists a $Q'$ within distance $\rdistance - \eta'$. This is remarkable because the
list-decoding radius of Reed-Muller of order $k$ codes is only $1-\frac{k}{\abs{\F}}$ for $k < |\F|$
\cite{G10}, and the test works even beyond that. In fact the Hamming distance of a random $P$
is $\rdistance - o(1)$ from all $Q$ of degree $k$ and the test works all the way up to that distance.

However, note that \cref{thm:gt09} only shows that this test works for a $P$ which is a polynomial
of degree $d < |\F|$.
Tao and Ziegler~\cite{TZ} later showed that this test  works even when $P$ is an
  arbitrary function and $k<|\F|$.
For fields of low characteristic, the only general testing results for all functions
of the above flavor (which works beyond the list-decoding radius)
were proved by Samorodnitsky
\cite{Samorodnitsky07} for Reed-Muller codes of order 2 over
$\F_2^n$ and by Green and Tao~\cite{GT05} for Reed-Muller codes of order 2 over $\F_5^n$.

Given the above, it is natural to consider the \emph{decoding}
analogue of the above question: 
\begin{quote}{Given
$P$ of degree $d$ over $\F^n$, if there exists $Q$ of degree $k$ such that $\dist(P,Q) \leq
\rdistance - \eps$, can one find a $Q'$ (in time polynomial in $n$) such that $\dist(P,Q') \leq
\rdistance - \eta$ for some $\eta$ depending on $\eps$?}
\end{quote}
Note that $d,k$ and $\abs{\F}$ are assumed to be constants and
dependence on these is allowed, but not on $n$. Also, observe that there might be
exponentially many such $Q$ since we are in the regime beyond the
list-decoding radius (see \cite{KLP12}), but the question turns out to
be tractable since we only ask for one such $Q$ and allow a loss from
$\eps$ to $\eta$.

Such a decoding question was solved for Reed-Muller codes of order 2 over $\F_2$, 
for any given function $f$ (instead of a polynomial $P$ of bounded degree) by \cite{TW11}.
We solve the above decoding question for polynomials $P$ of degree $d$ and the Reed-Muller code of
order $k$ for $k \leq d < |\F|$. This special case can be interpreted as follows: if $P$ is of
degree-$d$ for some degree $k \leq d < |\F|$, then we can think of $P$ being obtained from some $Q$
of degree $k$ (which is a codeword) by adding the ``noise'' $P-Q$ of degree $d$. Thus, when the
noise is ``structured'' \ie given by a degree-$d$ polynomial, we can decode in the above sense (of
finding \emph{some} codeword within a given distance) even beyond the list-decoding radius. 

Both our algorithmic version of the regularity lemma and the decoding algorithm, are randomized
algorithms that run in time $O(n^d)$ and output the desired objects with high probability. 
This is linear time for regularity since even writing a
polynomial of degree $d$ takes time $\Omega(n^d)$. 
For the question of finding a degree $k$ polynomial within a given distance of $P$, it is possible
that one may be able to do this in time $O(n^k)$, but we do not achieve this.

We remark that we do not give a regularity lemma for the notion of regularity stated by Green and
Tao. We in fact define a related analytical notion  of regularity, and show that one can prove an efficient
regularity lemma for this notion, and also that this notion suffices for their application and our
algorithmic version of it. 
However, our algorithmic version of the regularity lemma (for this notion) 
only works when $|\F| > d$.  This is because our notion of regularity is based on
the Gowers norm, and even to prove the regularity lemma, one needs to use (and modify) the proof of
Green and Tao for the inverse theorem in \cite{GT09}, and their proof only works when $|\F| > d$.

We define other notions later, which work over small fields. However, the above notion based on
Gowers norms is conceptually much simpler and may be useful for other applications. We would like to point out that a similar notion of regularity, called analytic rank, was introduced by Gowers and Wolf~\cite{gowers-wolf-2} and later used by Tao and Ziegler~\cite{TZ12} in their proof of inverse theorem for Gowers norms.

\paragraph{Regularity for Low Characteristics and Efficient Worst-Case to Average-Case Reductions
  for Polynomials.}
Kaufman and Lovett \cite{KL08} develop a more involved notion of regularity to carry over part
of the Green-Tao result to the setting when $|\F|$ is small. 

The technical part of the  Green-Tao proof
proves the following result: If $P$ is a polynomial of degree $d$ such that 
$\bias(P) = \abs{\Ex{x}{\eF{P(x)}}} \geq \delta$, then there exist polynomials $Q_1,\ldots,Q_{M}$ of degree at
most $d-1$, and a function $\Gamma: \F^M \to \F$ such that $P = \Gamma(Q_1,\ldots,Q_M)$. Here $M$
depends only on $d,\delta$ and $|\F|$. This can be read as saying that a biased degree-$d$
polynomial can be computed by few polynomials of degree $d-1$.

Kaufman and Lovett \cite{KL08} manage to derive the above conclusion even when $|\F|$ is
small, by defining and analyzing a more sophisticated notion of regularity. The notion is bit
technical and we defer the description to \cref{sec:notions}. They also use their analog of
the above result to derive an interesting application. Let $P$ be a degree-$d$ polynomial which can
be \emph{weakly approximated} by a lower degree polynomial $Q$, of degree $k < d$. Here by weak
approximation we mean $\ip{\eF{P},\eF{Q}} \geq \delta$. Then Kaufman and Lovett show that $P$ can in
fact be \emph{computed} by lower degree polynomials \ie there exist $Q_1,\ldots,Q_m$ of degree at
most $d-1$ and a function $\Gamma: \F^M \to \F$, such that $P = \Gamma(Q_1,\ldots,Q_M)$. Again, $M$
depends only on $d,\delta$ and $|\F|$.

We consider the question of whether it is possible to \emph{find} the polynomials $Q_1,\ldots,Q_M$
and the function $\Gamma$ efficiently? To this end, we define a version of the Kaufman-Lovett
regularity notion, for which we can prove an algorithmic regularity lemma over low
characteristics. We also use this lemma to give an efficient algorithm for finding the above
polynomials $Q_1,\ldots,Q_M$ with high probability. 

\bigskip

We believe that our notions of regularity and the algorithmic regularity lemmas for these will also
be useful in other applications. We collect the various notions of regularity and provide a more
technical overview in \cref{sec:notions}. We also provide a brief overview of the proofs below.

\subsubsection*{Proof Overview}
The key technical part of both the Green-Tao result and the Kaufman-Lovett result is proving that a
biased polynomial of degree $d$ can be computed by a few polynomials of degree at most $d-1$. Both
proofs proceed by applying a lemma of Bogdanov and Viola \cite{BogV} 
to find a factor $\cF$ of degree $d-1$ such that the polynomials in $\cF$ compute $P$ correctly on \emph{most}
inputs $x \in \F^n$. Then both proofs proceed to refine $\cF$ to $\cF'$ by their respective notions
of regularity, and show that the polynomials in $\cF'$ must in fact compute $P$ exactly. It is an
easy observation, which we make in \cref{sec:bv}, that the Bogdanov-Viola lemma is
algorithmic. 

What remains is then to show that one can refine $\cF$ to $\cF'$ efficiently. As discussed before this is not possible using the original notions of regularity, and we achieve this by
defining related but somewhat different notions of regularity. For these notions, we are able to go
from $\cF$ to $\cF'$ efficiently. But then, we also need to prove that $\cF'$ which is now regular
according to our new notion of regularity, still computes $P$ exactly. For this we need to show that
the Green-Tao (resp. Kaufman-Lovett) proof goes through with our notion. This requires a tighter analysis of both the proofs by Green-Tao and Kaufman-Lovett, and amounts to showing that the new notions of regularity, although are weaker, they still obey similar equidistribution properties that are used in both proofs. 

The key difficulty in proving an algorithmic regularity lemma for polynomials is the same as that in
the case of \Szemeredi's lemma for graphs: proving a certificate of non-regularity. For example, in
the Green-Tao notion, a factor $\cF = \{P_1,\ldots,P_m\}$ is \emph{not regular}, if a linear
combination $\sum_i c_i \cdot P_i$ can be expressed as function $\Gamma(Q_1,\ldots,Q_M)$ for some
lower degree polynomials $Q_1,\ldots,Q_M$ and an appropriate $M$. They then proceed by adding
$Q_1,\ldots,Q_M$ to the factor and removing one of the $P_i$s. However, to provide an algorithmic
version, one needs to \emph{find} $Q_1,\ldots,Q_M$, which is not an easy problem.

We get around this by defining a notion of regularity which requires the Gowers norm of every linear
combination to be small (we call notion $\gamma$-uniformity). 
Now it is easy to check if some linear combination (of say degree $d$) 
has large Gowers norm. However, if this happens, it is not clear how to proceed to refine $\cF$. 
We then look at the
proof of Green and Tao, who show that when a polynomial has large Gowers norm, then an appropriate
derivative of it (when we also think of the direction for the derivative as a variable) 
has large bias. This means that this derivative can now be computed on most inputs
by polynomials of degree $d-1$ (using Bogdanov-Viola), forming a factor $\cF_{d-1}$ 
of degree $d-1$. 
\emph{By induction, we can assume that we can in fact refine $\cF_{d-1}$ to a regular
$\cF_{d-1}'$ factor of degree $d-1$}. If our notion of regularity is good enough to show that a regular
$\cF_{d-1}$ must exactly compute the derivative (and hence also the linear combination), 
we can add all the polynomials in $\cF_{d-1}$ to $\cF$ and proceed. Note that because of the nature
of this induction, the proof of the Green-Tao result and the refinement of a factor to a regular
one, are linked and thus we only obtain an algorithmic lemma with this notion when $|\F| > d$.

For the Kaufman-Lovett version, our notion simply says that the bias of certain linear combinations
of polynomials in the factor and their derivatives is small, 
along with a few other technical
conditions imposed in \cite{KL08}. Although the
notion of regularity here is more complicated, the proof of the algorithmic regularity lemma is in
fact simpler and does not involve an induction as in the above case. However, showing that our
notion suffices for their application requires some work. We defer the details to \cref{sec:notions} and \cref{sec:lowchar}.

\subsection{Definitions}
\begin{dfn}[Factors]
Let $d\geq 0$ and $M_1,\ldots,M_d$ be non-negative integers. By factor of degree $d$ on $\F^n$ we
mean a collection $\cF=(P_{i,j})_{1\leq i\leq d, 1\leq j\leq M_i}$ of polynomials where
$\deg(P_{i,j})= i$ for all $i,j$. 
By the dimension vector of $\cF$ we mean $(M_1,\ldots,M_d)\in
\N^d$ and by the dimension of $\cF$ denoted by $dim(\cF)$ we mean the number of polynomials in the
factor, namely $M_1+\cdots+M_d$. 

Every factor $\cF$ defines a $\sigma$-algebra $\sigma(\cF)$ defined
by atoms of the form $\{x: P_{i,j}(x)=c_{i,j}\}$. We write $\Sigma=\F^{M_1}\times \cdots \times
\F^{M_d}$ and call this the configuration space of $\cF$. Moreover we let $\norm{\cF} \defeq
\abs{\Sigma}$. By abuse of notation we write $\cF:\F^n \rightarrow \Sigma$ for the evaluation map
$\cF(x)= (P_{i,j}(x))_{1\leq i\leq d, 1\leq j \leq M_i}.$

Sometimes if we do not make use of the dimension vector of $\cF$ explicitly, we may write
$\cF=\{P_1,...,P_m\}$. In this notation, we will simply use $i$ as index of the polynomial
$P_i$, and not to denote the degree of $P_i$. 
\end{dfn}

\begin{dfn}[Measurability]
Let $\cF = \{P_1,\ldots,P_m\}$ be a factor of degree $d$ and let $f: \F^n \to \F$ be any given
function. We say that $f$ is measurable in $\cF$ if there exists a function $\Gamma: \F^m \to \F$
such that $f = \Gamma(P_1,\ldots,P_m)$.

We say that $f$ is $\sigma$-close to being measurable if there exists a $\Gamma$  such that
$\Prob{x \in \F^n}{f(x) \neq \Gamma(P_1(x),\ldots,P_m(x))} \leq \sigma$.
\end{dfn}

\begin{dfn}[Refinement]
Let $\cF = \{P_1,\ldots,P_m\}$ be a factor of degree $d$. We say that a factor $\cF' =
\{Q_1,\ldots,Q_M\}$  of degree-$d$ is a refinement of $\cF$ if each polynomial $P_i$ in $\cF$ is
measurable in $\cF'$ \ie there exists $\Gamma: \F^M \to \F^m$, so that 
$(P_1,\ldots,P_m) = \Gamma(Q_1,\ldots,Q_M)$.

We say that $\cF'$ is $\sigma$-close to being a refinement of $\cF$, if there exists $\Gamma: \F^M
\to \F^m$, so that $\Prob{x \in \F^n}{(P_1(x),\ldots,P_m(x)) \neq \Gamma(Q_1(x),\ldots,Q_M(x))} \leq
\sigma$.
\end{dfn}

\begin{dfn}[Derivatives]
For a polynomial $P: \F^n \to \F$ and a point $h \in \F^n$, we define the derivative of $P$ in
direction $h$ as the function $D_h P(x) = P(x+h) - P(x)$. Note that if $P$ is of degree $d$,
then $D_h P$ is of degree $d-1$.

For a function $f: \F^n \to \C$, we will use the derivative to mean the multiplicative derivative
defined as $\Delta_h f(x) = f(x+h) \overline{f(x)}$. Note that if $f = \eF{P}$ for some polynomial
$P$, then $\Delta_h f = \eF{D_h P}$.
\end{dfn}

\section{Approximating a biased polynomial}\label{sec:bv}
The Bogdanov-Viola Lemma~\cite{BogV} states that if a polynomial of 
degree $d$ is biased then it can be approximated by a bounded set of 
polynomials of lower degree. The following is an easy to observe 
algorithmic version of this lemma. We follow the proof of the lemma by Green and Tao
\cite{GT09}.

\begin{lemma}[Algorithmic Bogdanov-Viola lemma]
\label[lemma]{lem:algorithmicBV}
Let $d\geq 0$ be an integer, and $\delta, \sigma, \beta \in (0,1]$ be parameters. 
There exists a randomized algorithm, that given query access to a 
polynomial $P:\F^n\rightarrow \F$ of degree $d$ with
$$
\bias(P) \geq \delta,
$$
runs in time $\rO_{\delta, \beta, \sigma}(n^d)$, and with probability 
$1-\beta$ returns functions $\tP:\F^n \rightarrow \F$ and 
$\Gamma:\F^C\rightarrow \F$, and a set of polynomials $P_1,...,P_C$, 
where $C \leq \frac{|\F|^5}{\delta^2 \sigma \beta}$ and $\deg(P_i)<d$ for all $i\in [C]$, for which 

\begin{itemize}
\item $\Pr_x(P(x) \neq \tP(x))\leq \sigma$, and
\item $\tP(x)= \Gamma(P_1(x),...,P_C(x))$.
\end{itemize}

\end{lemma}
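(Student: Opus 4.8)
The plan is to take the existing (non-algorithmic) proof of the Bogdanov--Viola lemma, in the form given by Green and Tao, and to observe that every step of it can be carried out efficiently using only query access to $P$. In that proof the low-degree polynomials are produced as directional derivatives of $P$: one samples directions $h_1,\dots,h_C\in\F^n$ independently and uniformly, and sets $P_i \defeq D_{h_i}P = P(\cdot+h_i)-P(\cdot)$, which has degree at most $d-1<d$. The reconstruction map $\Gamma:\F^C\to\F$ is an explicit rule --- for instance a plurality vote over the $C$ directions, say $\Gamma(a_1,\dots,a_C)=$ the most frequent value among $m^\ast-a_1,\dots,m^\ast-a_C$, where $m^\ast$ is the most frequent value taken by $P$ (with a randomized tie-break if needed) --- and one sets $\tP\defeq\Gamma(P_1,\dots,P_C)$, so that automatically $\tP(x)=\Gamma(P_1(x),\dots,P_C(x))$. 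The Green--Tao analysis shows that once $C$ is large enough in terms of $|\F|,1/\delta,1/\sigma,1/\beta$, with probability at least $1-\beta$ over the choice of $h_1,\dots,h_C$ one has $\Pr_x(P(x)\neq\tP(x))\leq\sigma$; a bookkeeping of the constants in that argument shows $C=\lfloor|\F|^5/(\delta^2\sigma\beta)\rfloor$ suffices.

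Granting this, the algorithm and its running-time bound are routine. Sampling $h_1,\dots,h_C$ is free. To output each $P_i$ explicitly, note that $D_{h_i}P$ is a polynomial of degree at most $d-1$, so its $O(n^{d-1})$ coefficients can be recovered by evaluating $x\mapsto P(x+h_i)-P(x)$ on an interpolating set of that size; alternatively one interpolates $P$ itself from $O(n^d)$ queries and differentiates symbolically. Since $C$ depends only on $\delta,\sigma,\beta,|\F|$, this costs $O_{\delta,\sigma,\beta}(n^d)$ in total. The map $\Gamma$ is a table of constant size $|\F|^C$; in the mode-based form above it is completely determined once $m^\ast$ is known, and $m^\ast$ can be identified with confidence $1-\beta/3$ by estimating the frequencies of the values of $P$ from $O(|\F|^2\log(|\F|/\beta)/\delta^2)$ independent evaluations of $P$ (and if one prefers a $\Gamma$ defined through conditional most-likely values, its entries can still be estimated to the needed accuracy from a number of random samples depending only on $|\F|^C$, $1/\sigma$, $\log(1/\beta)$). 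Finally $\tP$ is returned implicitly as the composition $\Gamma\circ(P_1,\dots,P_C)$. A union bound over the $O(1)$ randomized sub-steps gives overall success probability $1-\beta$, and the total running time is $\rO_{\delta,\beta,\sigma}(n^d)$.

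The one genuinely non-trivial ingredient is therefore not the algorithmic wrapper but re-deriving the Green--Tao analysis with these explicit quantitative bounds. Its structural core is the claim that $\bias(P)\geq\delta$ forces $\Pr_z(D_vP(z)=t)\leq 1-\delta/|\F|$ for every nonzero $v$ and every nonzero $t$: if $D_vP$ were essentially constant equal to some $t\neq0$, then averaging $\eF{P}$ over cosets of the line $\langle v\rangle$ would make it nearly cancel (using $\sum_{j}\eF{jt}=0$ for $t\neq0$), contradicting $\bias(P)\geq\delta$; the quantitative version comes out of the same averaging-along-lines identity. Combining this per-direction estimate with a second-moment (conditional-expectation) computation over the random choice of $h_1,\dots,h_C$ gives that $P$ is $\sigma$-close to being measurable with respect to the $\sigma$-algebra generated by $P_1,\dots,P_C$, except with probability $\beta$ --- which is exactly what the statement asks for. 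I expect the main technical point to be precisely this last combination, in particular handling the weighting of the atoms of the factor $\{P_i\}$ correctly in the conditional-expectation estimate; everything downstream of it is bookkeeping.
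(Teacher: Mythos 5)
Your algorithmic wrapper matches the paper's: sample random directions $h_1,\dots,h_C$, output $P_i\defeq D_{h_i}P$, and reconstruct $\tP(x)$ from the vector $(D_{h_1}P(x),\dots,D_{h_C}P(x))$; this is manifestly $\rO_{\delta,\sigma,\beta}(n^d)$. The gap is in the decoder $\Gamma$. Your plurality rule --- output the most frequent value among $m^*-a_1,\dots,m^*-a_C$, where $m^*$ is the mode of $P$'s value distribution --- is not sound under the stated hypothesis. The bias condition $\bias(P)\geq\delta$ does not preclude a (near-)tie for the mode. For instance, with $|\F|\geq 3$, $P$ can take two values each with probability close to $1/2$, which gives $\bias(P)=\Omega(1)$ yet makes your plurality vote between $v=P(x)$ and $v=P(x)+(m^*-m^{**})$ (with $m^{**}$ the runner-up) essentially a coin flip; a randomized tie-break then yields $\Pr_x(\tP(x)\neq P(x))\approx 1/2$, not $\leq\sigma$. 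Your fallback --- a $\Gamma$ given by conditional most-likely values --- is the Bayes decoder and is correct, but your claim that its $|\F|^C$ table entries can be estimated cheaply is unsupported: estimating $\Pr(P(x)=b\mid D_{h_i}P(x)=a_i\;\forall i)$ requires hitting the corresponding atom, which may be rare.

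The paper sidesteps this by using a different, more robust decoder. Set $\mu_a(t)\defeq\Pr_z(P(z)=a+t)$, the cyclic shift of $P$'s value distribution by $a$; observe that for fixed $x$ and random $h$, the distribution of $D_hP(x)$ is exactly $\mu_{P(x)}$; estimate each $\tmu_a$ from independent samples of $P$; and set $\tP(x)\defeq\argmin_{r}\|\tmu_r-\mu_{\obs}^{(x)}\|$, the nearest shifted distribution in total-variation to the empirical histogram of $(D_{h_i}P(x))_i$. The key separation fact that makes this correct is $\|\mu_a-\mu_b\|\geq 4\delta/|\F|$ for all $a\neq b$ --- a gap between the \emph{whole} shifted distributions, not between the top two probabilities. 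This is also where your description of the "structural core" drifts from the actual argument: the statement used is a lower bound on $\|\mu_a-\mu_b\|$ (equivalently, on how much the distribution of $D_hP(x)$ for random $h$ separates the possible values of $P(x)$), not a bound on $\Pr_z(D_vP(z)=t)$ for a fixed nonzero direction $v$. Replacing your plurality $\Gamma$ with the TV--nearest-neighbor decoder and proving the $\|\mu_a-\mu_b\|\geq 4\delta/|\F|$ separation closes the gap; everything else in your proposal then goes through.
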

\begin{proof}
The proof will be an adaptation of the proof from~\cite{GT09}. Given query 
access to the polynomial $P$, we can compute the explicit description of 
$P$ in $O(n^d)$ queries. For every $a\in \F$ define the measure 
$\mu_a(t) \defeq \Pr(P(x)=a+t)$. It is easy to see that if 
$\bias(P(x))\geq \delta$ then, for every $a\neq b$, 
\begin{equation}\label{eq:mufar}
\norm{\mu_a - \mu_b}\geq \frac{4\delta}{|\F|}.
\end{equation}

We will try to estimate each of these distributions. Let 
$$
\tmu_a(t)\defeq \frac{1}{C} \sum_{1\leq i \leq C} \mathbbm{1}_{P(x_i)= a+t},
$$
where $C> \frac{|\F|^5}{\delta\cdot \beta_1}$, and $x_1,x_2,...,x_C\in \F^n$ 
are chosen uniformly at random. Therefore by an application of Chebyshev's inequality
$$
\Pr\left(\left|\tmu_a(t)-\mu_a(t)\right|> \frac{\delta}{2|\F|^2}\right) < \frac{\beta_1}{|\F|}, 
$$
for all $t\in \F$ and therefore
\begin{equation}\label{eq:mu}
\Pr\left( \norm{\tmu_a - \mu_a }> \frac{\delta}{2|\F|^2}\right) < \beta_1.
\end{equation}

Now we will focus on approximating $P(x)$. Remember that 
$D_hP(x) = P(x+h)- P(x)$ is the additive derivative of $P(x)$ in direction $h$. We have
$$
\Pr_h(D_hP(x)= r) = \Pr_h(P(x+h)-P(x)=r) = \mu_{P(x)}(r),
$$
where $h\in \F^n$ is chosen uniformly at random. Let $\mathbf{h}= (h_1,...,h_C)\in (\F^n)^C$ 
be chosen uniformly at random, where $C$ is a sufficiently large constant to be chosen later. Define
$$
\mu_\obs^{(x)}(t) \defeq \frac{1}{C} \sum_{1\leq i\leq C} \mathbbm{1}_{D_{h_j}P(x)=t}, 
$$
and let
$$
\tP_h(x) \defeq \argmin_{r\in \F} \norm{\tmu_r - \mu_\obs^{(x)}}.
$$
Now choosing $C\geq \frac{|\F|^5}{\delta^2 \sigma \beta_2}$ follows
\begin{equation}
\Pr_h (\tP_h(x) \neq P(x)) \leq \Pr_h \left( \norm{\mu_\obs^{(x)} - \mu_{P(x)}} \geq \frac{\delta}{|\F|}\right) \leq \sigma \beta_2,
\end{equation}
where the first inequality follows from (\ref{eq:mufar}) and (\ref{eq:mu}). Therefore
$$
\Pr_{x,h} (\tP_h(x) \neq P(x) )= \E_x \E_h \mathbbm{1}_{\tP_h(x) \neq P(x)}\leq \sigma \beta_2,
$$
and thus
$$
\Pr_h\left[ \Pr_x\left(\tP_h(x)\neq P(x)\right) \geq \sigma \right]\leq \beta_2.
$$

Let $P_i\defeq D_{h_i}P$, so that $P_i$ is of degree $\leq d$ and 
$\tP_h$ is a function of $P_1,...,P_C$. Now setting 
$\beta_1:= \frac{\beta}{2|\F|^2}$ and $\beta_2:=\frac{\beta}{2}$ finishes the proof.  
\end{proof}

\section{Notions of Regularity}\label{sec:notions}
Notions of regularity similar to the following one defined by Green and Tao \cite{GT09}, 
play an important role in concepts related to polynomials. 

\begin{dfn}[$F$-regularity]\label[dfn]{dfn:regularfactor}
Let $\cF$ be a factor of degree $d$, and let $F:\N\rightarrow \N$ be a 
growth function. We say that $\cF$ is $F$-regular if for every collection 
of coefficients $\{c_{i,j}\in \F\}_{1\leq i\leq d, 1\leq j \leq M_i}$, we have
$$
\rank_{k-1}\left( \sum_{i=1}^d \sum_{j=1}^{M_i} c_{i,j} P_{i,j} \right) \geq F(\dim(\cF)),
$$
where $k$ is the largest $i$ such that there is a nonzero $c_{i,j}$. 
\end{dfn}

The following, by now well-known, lemma states that given a polynomial 
factor one can refine it to a regular one.

\begin{lemma}[Regularity Lemma for Polynomials] \label[lemma]{lem:gtregularity}
Let $d\geq 1$, $F:\N\rightarrow \N$ be a growth function, and let $\cF$ 
be a factor of degree $d$. Then there exists an $F$-regular extension 
$\cF'$ of $\cF$ of same degree $d$, satisfying 
$$
\dim(\cF') =\rO_{F, d, \dim(\cF)}(1). 
$$
\end{lemma}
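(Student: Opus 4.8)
The plan is to prove \cref{lem:gtregularity} by the standard iterative refinement (energy-increment) argument, exactly analogous to the proof of \Szemeredi's regularity lemma, using rank as the potential. First I would set up the iteration: starting from $\cF_0 = \cF$, I maintain a factor $\cF_t$ of degree $d$. At each step I check whether $\cF_t$ is $F$-regular; if it is, we are done. If not, there is a linear combination $Q = \sum_{i,j} c_{i,j} P_{i,j}$ whose top-degree part has degree $k$ and satisfies $\rank_{k-1}(Q) < F(\dim(\cF_t))$. By definition of rank, this means $Q$ (or more precisely the degree-$k$ part, after separating out the lower-degree contributions which are already measurable in $\cF_t$) can be written as $\Gamma(Q_1,\ldots,Q_{M})$ for some polynomials $Q_1,\ldots,Q_M$ of degree at most $k-1$ with $M < F(\dim(\cF_t))$. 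I then form $\cF_{t+1}$ by adjoining $Q_1,\ldots,Q_M$ to $\cF_t$ (placing each in the appropriate degree slot), and deleting the polynomial $P_{k,j}$ corresponding to one of the nonzero top-degree coefficients $c_{k,j}$ — this $P_{k,j}$ is now measurable in the new factor, so $\cF_{t+1}$ is a refinement of $\cF_t$, hence of $\cF$.

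The key step is to exhibit a bounded potential that strictly increases at each iteration, forcing termination. The natural choice is the \emph{dimension vector read in reverse lexicographic order by degree}: deleting a degree-$k$ polynomial and adding polynomials only of degree $\le k-1$ strictly decreases $M_k$ while possibly increasing $M_1,\ldots,M_{k-1}$. Thus if we order factors by the vector $(M_d, M_{d-1},\ldots,M_1)$ under the lexicographic order where decreasing a higher-degree count dominates any increase in lower-degree counts, each refinement step strictly decreases this ordinal-valued potential. Since the top degree is bounded by $d$, this is a decreasing sequence in $\omega^d$, which must terminate. One then checks that the number of polynomials added at any step is bounded as a function of the current dimension and $F$, so by a routine induction on $d$ (handling degree-$d$ counts first, then degree $d-1$, etc.) the final dimension $\dim(\cF')$ is bounded by some function of $F$, $d$, and $\dim(\cF)$ only — crucially independent of $n$ — giving the claimed $\dim(\cF') = \rO_{F,d,\dim(\cF)}(1)$.

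The main obstacle is bookkeeping the dependence of $\dim(\cF')$ on $F$ and the iteration count, since $F$ can be an arbitrary growth function and the number of polynomials added at step $t$ depends on $F(\dim(\cF_t))$, which itself grows with $t$. The resolution is the degree-by-degree induction: once no more degree-$d$ combinations witness non-regularity, $M_d$ is frozen at some value bounded in terms of the previous data; then within that outer bound one controls $M_{d-1}$, and so on. Each layer contributes only finitely many refinement rounds with a dimension bound determined by the layers above it and $F$, so the whole process is a finite (tower-type) composition of $F$ with itself, yielding a bound depending only on $F, d, \dim(\cF)$. I would also remark that this argument is purely existential — it gives no handle on \emph{finding} the certificate $Q_1,\ldots,Q_M$ — which is precisely the gap that motivates the algorithmic notions developed later in the paper.
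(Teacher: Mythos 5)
Your proposal matches the paper's proof in substance: both argue by well-founded induction on the dimension vector, ordered so that decreasing the count of polynomials at a higher degree dominates any increase at lower degrees (the paper calls this "reverse lexicographic ordering on $\N^d$", you phrase it as a descending chain in $\omega^d$), and both observe that a non-regularity witness lets you trade one degree-$k$ polynomial for at most $F(\dim(\cF_t))$ polynomials of degree at most $k-1$, which strictly decreases that potential. The only minor wrinkle is your opening framing as an "energy-increment" argument with "rank as the potential" --- the actual potential you (and the paper) use is the dimension vector, not rank --- but you immediately correct this and the rest of the argument is the same.
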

We provide a proof for two reasons, self-containment and the fact that 
we will use the idea of this proof later in the section. 
\begin{proof}
We shall induct on the dimension vectors $(M_1,...,M_d)$. Notice that the 
dimension vectors of a factor of degree $d$ takes values in $\N^d$, 
and we will use the reverse lexicographical ordering on this space for 
our induction. The proof for $d=1$ is obvious, because non-zero linear 
functions have infinite $\rank_0$, unless there is a linear dependency 
between the polynomials in $\cF$, which we can simply discard by removing 
the polynomials that can be written as a linear combination of the others. 

If $\cF$ is already $F$-regular, then we are done. Otherwise, there is a set of coefficients $\{c_{i,j}\}_{1\leq i\leq d, 1\leq j\leq M_i}$, not all zero, such that
$$
\rank_{k-1}\left( \sum_{i=1}^d \sum_{j=1}^{M_i} c_{i,j} P_{i,j} \right) < F(\dim(\cF)).
$$
Without loss of generality assume that $c_{k,M_k}\neq 0$. The above 
inequality means that $P_{k,M_k}$ can be written as a function of 
the rest of the polynomials in the factor and a set of at most 
$F(\dim(\cF))$ polynomials of degree at most $k-1$. We will replace 
$P_{k,M_k}$ with these new polynomials. The new factor will have the 
following dimension vector
$$
(M_1,\ldots, M_{k-1}+ F(\dim(\cF)), M_k-1, M_{k+1}, \ldots, M_d).
$$
Now by the induction hypothesis the above can be regularized. 
\end{proof}

The nature of the proof in regularity lemmas of the above type 
results in Ackermann like functions even for ``reasonable'' growth functions $F$. 

\subsection{Unbiased Factors}
In what follows we suggest a new type of regularity which seems better 
suited for algorithmic reasons. The definition is inspired by the fact 
that in the applications of the regularity lemma one is usually interested 
in the linear combinations of the polynomials in the factor being unbiased. 
This definition will not suffice for our applications, but serves as a good 
starting point for stronger notions that we will introduce later. 

\begin{dfn}[$\gamma$-unbiased factors]
\label[dfn]{dfn:unbiasedfactor}
Let $\cF$ be a factor of degree $d$, and let $\gamma:\N \rightarrow \R^+$ 
be a decreasing function. We say that $\cF$ is $\gamma$-unbiased if 
for every collection of coefficients $\{c_{i,j}\in \F\}_{1\leq i\leq d, 1\leq j \leq M_i}$, we have
$$
\bias(\sum_{i,j} c_{i,j} P_{i,j}) \leq \gamma(\dim(\cF)).
$$
\end{dfn}

The following can be thought of as an algorithmic analogue of the regularity lemma. 

\begin{lemma}[Unbiased almost Refinement]
\label[lemma]{lem:unbiasedrefinement}
Let $d\geq 1$ be an integer. Suppose $\gamma:\N \rightarrow \R^+$ is 
a decreasing function and $\sigma, \rho \in (0,1]$. There is a 
randomized algorithm that given a factor $\cF$ of degree $d$, runs 
in time $\rO_{\gamma,\rho,\sigma, \dim(\cF)}(n^d)$ and  with probability 
$1-\rho$ returns a $\gamma$-unbiased factor $\cF'$ with 
$\dim(\cF')=\rO_{\gamma,\rho,\sigma, \dim(\cF)} (1),$ such 
that $\cF'$ is $\sigma$-close to being a refinement of $\cF$.
\end{lemma}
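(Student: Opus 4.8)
The plan is to mimic the proof of \cref{lem:gtregularity} but replace the non-algorithmic step ``find lower-degree polynomials witnessing low rank'' with the algorithmic Bogdanov--Viola lemma (\cref{lem:algorithmicBV}). I will maintain a current factor $\cF$ (initialized to the input) and repeatedly test, for every collection of coefficients $\{c_{i,j}\}$ (there are only $\rO_{\dim(\cF)}(1)$ such collections since $|\F|$, $d$ are constants), whether $\bias(Q) \le \gamma(\dim(\cF))$ where $Q = \sum_{i,j} c_{i,j}P_{i,j}$. The bias of each such $Q$ can be estimated to additive accuracy $\gamma(\dim(\cF))/2$ with confidence $1-\rho'$ by sampling $\rO_{\gamma,\rho',\dim(\cF)}(1)$ points and evaluating $Q$, each evaluation costing $\rO(n^d)$ time. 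If all estimated biases are at most, say, $\tfrac12\gamma(\dim(\cF))$, we declare $\cF$ to be $\gamma$-unbiased (up to the estimation error, which a slightly more careful choice of thresholds absorbs) and output it.

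If instead some $Q = \sum_{i,j}c_{i,j}P_{i,j}$ has estimated bias exceeding $\tfrac12\gamma(\dim(\cF))$, then with high probability its true bias is at least $\tfrac14\gamma(\dim(\cF)) =: \delta$. Let $k$ be the top degree with a nonzero coefficient and WLOG $c_{k,M_k}\neq 0$. We run \cref{lem:algorithmicBV} on $Q$ with bias parameter $\delta$, error parameter $\beta := \rho'$, and closeness parameter $\sigma'$ to be chosen small; this returns, in time $\rO_{\delta,\rho',\sigma'}(n^d)$ and with probability $1-\rho'$, polynomials $P_1,\dots,P_C$ of degree $< k \le d$, a function $\Gamma$, and $\tP$ with $\Pr_x[\tP(x)\neq Q(x)]\le\sigma'$ and $\tP = \Gamma(P_1,\dots,P_C)$. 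We then form a new factor by removing $P_{k,M_k}$ and adding $P_1,\dots,P_C$: since $c_{k,M_k}P_{k,M_k} = Q - \sum_{(i,j)\neq(k,M_k)}c_{i,j}P_{i,j}$ and $c_{k,M_k}\neq0$ is invertible in $\F$, the polynomial $P_{k,M_k}$ agrees with a function of $\{P_{i,j}\}_{(i,j)\neq(k,M_k)}$ and $\{P_\ell\}_{\ell\le C}$ on all but a $\sigma'$-fraction of inputs. The new factor has strictly smaller dimension vector in the reverse lexicographic order used in \cref{lem:gtregularity}, so the process terminates after $\rO_{\gamma,\dim(\cF)}(1)$ iterations, giving the claimed bound $\dim(\cF')=\rO_{\gamma,\rho,\sigma,\dim(\cF)}(1)$; and since each iteration increases the total ``agree with a function of the new factor'' error by at most $\sigma'$, choosing $\sigma'$ to be $\sigma$ divided by the number of iterations makes $\cF'$ exactly $\sigma$-close to being a refinement of $\cF$.

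There is one subtlety: as in \cref{lem:gtregularity} we should first discard linear dependencies among the degree-$1$ polynomials (and, more generally, the argument is cleanest if at each step we also keep the invariant that no polynomial in the factor is exactly measurable in the others of equal-or-lower degree); this is handled exactly as in the proof of \cref{lem:gtregularity}, and it only ever helps the dimension bound. A second bookkeeping point is that the number of distinct coefficient vectors, hence the number of bias tests per iteration, is $|\F|^{\dim(\cF)}$ which stays $\rO_{\dim(\cF)}(1)$ throughout because $\dim$ stays bounded; we take a union bound over all tests and all $\rO_{\gamma,\dim(\cF)}(1)$ iterations to get overall failure probability $\rho$ by setting $\rho'$ appropriately.

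The main obstacle is purely quantitative rather than conceptual: tracking how the parameters $\delta = \tfrac14\gamma(\dim(\cF))$, the sampling accuracies, and the per-step closeness $\sigma'$ must be set as functions of the \emph{current} $\dim(\cF)$ — which itself grows during the execution — so that (i) the total dimension remains bounded by a function of only $\gamma,\rho,\sigma,\dim(\text{input }\cF)$, (ii) the accumulated error stays below $\sigma$, and (iii) the accumulated failure probability stays below $\rho$. Because $\gamma$ is decreasing and $\dim$ is bounded throughout by a fixed function of the inputs, one can pick these parameters in a ``worst-case over the final dimension'' fashion and close the loop; making this self-consistent (the bound on the number of steps depends on $\gamma$ evaluated at the final dimension, which depends on the number of steps) is the delicate part, and mirrors exactly the potential-function / fixed-point argument implicit in \cref{lem:gtregularity}.
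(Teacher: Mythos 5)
Your high-level approach --- estimate the bias of every linear combination by sampling, and whenever a biased $Q = \sum c_{i,j}P_{i,j}$ is detected apply \cref{lem:algorithmicBV} to replace the top-degree polynomial appearing in $Q$ by the lower-degree output --- is exactly the paper's, and the termination argument via reverse-lexicographic descent of the dimension vector is also the right one. The genuine gap is in the parameter accounting, which you flag but do not resolve. You want to set the per-step closeness $\sigma' = \sigma/T$ where $T$ is the total number of iterations. But $T$ is controlled by how many polynomials each call to \cref{lem:algorithmicBV} contributes, which is on the order of $|\F|^5/(\gamma(\dim\cF)^2\,\sigma'\,\rho')$ and hence itself depends on $\sigma'$. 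So $T$ depends on $\sigma'$ and $\sigma'$ depends on $T$; the ``pick parameters worst-case over the final dimension'' sentence just restates the circularity rather than breaking it, since the final dimension is precisely the quantity whose bound you are trying to establish.

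The paper breaks the circle by phrasing the whole thing as an induction on the dimension vector, where the recursive call is made with parameters $(\gamma,\ \sigma/2,\ \rho/4)$. One detection-and-refinement step produces an intermediate $\widetilde\cF$ that is $(\sigma/2)$-close to refining $\cF$, and the induction hypothesis applied to $\widetilde\cF$ with $(\gamma,\sigma/2,\rho/4)$ returns a $\gamma$-unbiased $\cF'$ that is $(\sigma/2)$-close to refining $\widetilde\cF$ and therefore $\sigma$-close to refining $\cF$; the failure probabilities $\rho/4+\rho/4+\rho/4+\rho'$ sum to less than $\rho$. The error budget is thus distributed geometrically over the recursion depth, so it stays below $\sigma$ and $\rho$ no matter how deep the recursion goes, and the BV parameters at each level depend only on the \emph{current} $\dim(\cF)$ and that level's own $\sigma$ and $\rho$ --- no circular dependence on the total iteration count. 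If you prefer your iterative phrasing, the fix is a depth-indexed schedule $\sigma_t=\sigma/2^{t+1}$, $\rho_t=\rho/4^{t+1}$ in place of the flat $\sigma'=\sigma/T$; that is equivalent to the paper's recursion and closes the gap.
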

\begin{proof}
The proof idea is similar to that of \cref{lem:gtregularity} 
in the sense that we do the same type of induction. The difference 
is that at each step we will have to control the errors that we 
introduce and the probability of correctness. At all steps in the 
proof without loss of generality we will assume that the polynomials 
in the factor are linearly independent, because otherwise we can always 
detect such a linear combination in $\rO_{\gamma,\rho,\sigma, \dim(\cF)}(n^d)$ 
time and remove a polynomial that can be written as a linear 
combination of the rest of the polynomials in the factor.

The base case for $d=1$ is simple, a linearly independent set of 
non-constant linear polynomials is not biased at all, namely it 
is $0$-unbiased. Assume that $\cF$ is $\gamma$-biased, then there 
exists a set of coefficients $\{c_{i,j}\in \F\}_{1\leq i\leq d, 1\leq j \leq M_i}$ such that
$$
\bias(\sum_{i,j} c_{i,j} P_{i,j}) \geq \gamma(\dim(\cF)). 
$$
To detect this, we will use the following algorithm: 

We will estimate bias of each of the $|\F|^{\dim(\cF)}$ linear 
combinations and check whether it is greater than $\frac{3\gamma(\dim(\cF))}{4}$. 
To do so, for each linear combination $\sum_{i,j} c_{i,j} P_{i,j}$ independently
select a set of vectors $x_1,...,x_C$ uniformly at random from $\F^n$, 
and let $\widetilde{\bias}(\sum_{i,j} c_{i,j} P_{i,j})\defeq \left|\frac{1}{C} \sum_{\ell\in [C]} e_\F(y_\ell)\right|$, 
where $y_\ell=\sum_{i,j}c_{i,j}\cdot P_{i,j}(x_\ell)$. Choosing 
$C=\rO_{\dim(\cF)}\left(\frac{1}{\gamma(\dim(\cF))^2}\log(\frac{1}{\rho})\right)$, 
we can distinguish bias $\geq \gamma$ from bias $\leq \frac{\gamma}{2}$, 
with probability $1-\rho'$, where $\rho':= \frac{\rho}{4|\F|^{\dim(\cF)}}$. 
Let $\sum_{i,j} c_{i,j} P_{i,j}$ be such that the estimated bias was above 
$\frac{3\gamma(\dim(\cF))}{4}$ and $k$ be its degree. We will stop if there 
is no such linear combination or if the factor is of degree $1$. Since by 
a union bound with probability at least 
$1-\frac{\rho}{4}$, $\bias(\sum_{i,j} c_{i,j} P_{i,j}) \geq \frac{\gamma(\dim(\cF))}{2}$, 
by \cref{lem:algorithmicBV} we can find, with probability $1-\frac{\rho}{4}$, 
a set of polynomials $Q_1,...,Q_r$ of degree $k-1$ such that
\begin{itemize}
\item $\sum_{i,j} c_{i,j} P_{i,j}$ is $\frac{\sigma}{2}$-close to a function of $Q_1,...,Q_r$,
\item $r\leq \frac{16|\F|^5}{\gamma(\dim(\cF))^2 \cdot \sigma \cdot \rho}$.
\end{itemize}
We replace one polynomial of highest degree that appears in 
$\sum_{i,j} c_{i,j} P_{i,j}$ with polynomials $Q_1,...,Q_r$. 

We will prove by the induction that our algorithm satisfies the statement 
of the lemma. For the base case, if $\cF$ is of degree $1$, our 
algorithm does not refine $\cF$ by design. Notice that since 
we have removed all linear dependencies, $\cF$ is in fact $0$-unbiased in this case. 

Now given a factor $\cF$, if $\cF$ is $\gamma$-biased, then with probability 
$1-\rho'$ our algorithm will refine $\cF$. With probability $1-\frac{\rho}{4}$ 
the linear combination used for the refinement is 
$\frac{\gamma(\dim(\cF))}{2}$-biased. Let $\widetilde{\cF}$ be the outcome 
of one step of our algorithm.  With probability $1-\frac{\rho}{4}$, 
$\widetilde{\cF}$ is $\frac{\sigma}{2}$-close to being a refinement of 
$\cF$. Using the induction hypothesis with parameters 
$\gamma, \frac{\sigma}{2}, \frac{\rho}{4}$ we can find, with probability 
$1-\frac{\rho}{4}$, a $\gamma$-unbiased factor $\cF'$ which is $\frac{\sigma}{2}$-close 
to being a refinement of $\widetilde{\cF}$ and therefore, with probability at least 
$1-(\frac{\rho}{4}+\frac{\rho}{4}+\frac{\rho}{4}+\rho')> 1-\rho$, is 
$\sigma$-close to being a refinement of $\cF$. 
\end{proof}

One of the main reasons one is interested in regular factors is that 
they partition the space into almost equal atoms (See for example 
\cite{KL08, TZ12, GT09}). We observe that the same is true for unbiased factors. 

\begin{lemma}[Equidistribution for unbiased factors]\label[lemma]{lem:equidistribution}
Suppose that $\gamma:\N\rightarrow \R^+$ is a decreasing function. 
Let $\cF=\{P_1,\ldots, P_m\}$ be a $\gamma$-unbiased factor of 
degree $d>0$. Suppose that $b\in \F^m$. Then
$$
\Pr_{x\in \F^n}[\cF(x)=b] = \frac{1}{\norm{\cF}}\pm \gamma(\dim(\cF)),
$$
where $\norm{\cF}=|\Sigma|$ is the number of atoms of $\cF$. 
\end{lemma}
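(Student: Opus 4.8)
The plan is to expand the indicator of the atom $\{\cF(x)=b\}$ via discrete Fourier analysis on the configuration space, reducing the count to a sum of biases of linear combinations of the $P_i$, each of which is controlled by the $\gamma$-unbiasedness hypothesis. Concretely, for any $x$ and any target $b=(b_1,\ldots,b_m)\in\F^m$, write
\[
\indicator{\cF(x)=b} ~=~ \prod_{i=1}^m \inparen{\frac{1}{|\F|}\sum_{c_i\in\F} \eF{c_i(P_i(x)-b_i)}} ~=~ \frac{1}{|\F|^m}\sum_{c\in\F^m} \eF{-\mydot{c}{b}}\cdot \eF{\sum_i c_i P_i(x)}.
\]
Taking expectation over $x\in\F^n$ and pulling out the $c=0$ term, which contributes exactly $1/|\F|^m = 1/\norm{\cF}$ (using $\norm{\cF}=|\F|^m$ here since we are in the unstructured notation $\cF=\{P_1,\ldots,P_m\}$), gives
\[
\Prob{x\in\F^n}{\cF(x)=b} ~=~ \frac{1}{\norm{\cF}} + \frac{1}{|\F|^m}\sum_{c\neq 0} \eF{-\mydot{c}{b}}\,\Ex{x}{\eF{\sum_i c_i P_i(x)}}.
\]

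**Bounding the error term** is then immediate: each summand has modulus at most $\abs{\Ex{x}{\eF{\sum_i c_i P_i(x)}}} = \bias\inparen{\sum_i c_i P_i}$, and since $c\neq 0$ the $\gamma$-unbiasedness of $\cF$ gives $\bias\inparen{\sum_i c_i P_i}\leq \gamma(\dim(\cF))$. There are $|\F|^m-1 < |\F|^m$ such terms, so the error is at most $\frac{|\F|^m-1}{|\F|^m}\gamma(\dim(\cF)) < \gamma(\dim(\cF))$, which yields the claimed $\pm\gamma(\dim(\cF))$ bound. One small point worth noting: the definition of $\gamma$-unbiasedness quantifies over all coefficient tuples including those whose ``top degree'' $k$ could be anything from $1$ to $d$, so every nonzero $c$ is covered; the hypothesis $d>0$ just ensures there is at least one polynomial so the statement is non-vacuous.

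**I do not expect any real obstacle here** — this is a routine Fourier-expansion argument and the lemma is essentially a restatement of the definition of unbiasedness in terms of atom sizes. The only thing to be a little careful about is matching conventions: in the $\cF=\{P_1,\ldots,P_m\}$ notation the $P_i$ need not have distinct degrees, $\Sigma=\F^m$, and $\norm{\cF}=|\F|^m$, so the ``$c=0$'' main term is exactly $1/\norm{\cF}$ with no further correction; if instead one insisted that the factor have no linear dependencies and that each atom be nonempty, one could get a sharper statement, but the stated bound does not require this. I would present the three displays above essentially verbatim as the proof.
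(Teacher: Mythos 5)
Your proof is correct and essentially identical to the paper's: both expand $\indicator{\cF(x)=b}$ as a product of Fourier averages over $\F$, exchange the sum and expectation, isolate the $\lambda=0$ term as $1/\norm{\cF}$, and bound each of the remaining $|\F|^m-1$ terms by $\gamma(\dim(\cF))$ using the unbiasedness hypothesis (the constant shift by $b_i$ being a unimodular factor that does not affect the bias). No further commentary is needed.
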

\begin{proof}
\begin{align*}
\Pr_{x\in \F^n}[\cF(x)=b]&= \E_x\left[ \prod_{i\in [m]} \frac{1}{|\F|} \sum_{\lambda_i=0}^{|\F|-1} e_\F(\lambda_i(P_i(x)-b_i)) \right] \\ &=
\frac{1}{|\F|^m} \sum_{\lambda\in \F^m} \E_{x\in \F^n}\left[ e_\F \left( \sum_{i\in [m]} \lambda_i\left(P_i(x)-b_i\right) \right) \right]\\ &=
\frac{1}{|\F|^m} \left( 1\pm \gamma(\dim(\cF)) |\F|^m \right) = \frac{1}{\norm{\cF}} \pm \gamma(\dim(\cF)),
\end{align*}
where the last inequality follows from the fact that for every 
$\lambda\neq \underline{0}$, the polynomial 
$\sum_i \lambda_i\left(P_i(x)-b_i\right)$ is nonzero and therefore 
has bias less than $\gamma(\dim(\cF))$. 
\end{proof}

Let $P:\F^n \rightarrow \F$ be a function, and let $\cF$ be a polynomial factor 
such that $P$ is measurable in $\cF$. This means that there exists a function 
$\Gamma:\F^{\dim(\cF)}\rightarrow \F$ such that $P(x)= \Gamma(\cF(x))$. 
Although we know that such a $\Gamma$ exists, but we do not have explicit 
description of $\Gamma$. It is a simple but useful observation that we can provide 
query access to $\Gamma$ in case $\cF$ is unbiased. It is worth recording this as 
the following lemma. 

\begin{lemma}[Query access]\label[lemma]{lem:query}
Suppose that $\beta\in (0,1]$ and let $\cF=\{P_1,\ldots, P_m\}$ be 
a $\gamma$-unbiased factor of degree $d$, where $\gamma:\F^n\rightarrow \F$ 
is a decreasing function which decreases suitably fast. Suppose that we 
are given query access to a function $f:\F^n\rightarrow \F$, where $f$ 
is measurable in $\cF$, namely there exists $\Gamma:\F^{\dim(\cF)} \rightarrow \F$ 
such that $f(x)= \Gamma(\cF(x))$. There is a randomized algorithm that, takes 
as input a value $a=(a_1,\ldots,a_m)\in \F^m$, makes a single query to $f$, 
runs in $O(n^d)$, and returns a value $y\in \F$ such that $\Gamma(a)=y$ with 
probability greater than $1-\beta$.
\end{lemma}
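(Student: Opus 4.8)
The plan is to answer a query $a = (a_1,\ldots,a_m) \in \F^m$ by sampling a single point $x$ uniformly at random from the atom $\{x : \cF(x) = a\}$ and returning $f(x)$. Since $f$ is measurable in $\cF$, i.e.\ $f = \Gamma(\cF(\cdot))$, any point $x$ in this atom satisfies $f(x) = \Gamma(\cF(x)) = \Gamma(a)$, so the value returned is correct with probability $1$ \emph{provided} the atom is nonempty and we succeed in finding a point inside it. The only real issue is therefore the sampling: how to hit the atom efficiently, and what to do if the atom is empty (equivalently, if $a$ is not in the image of $\cF$).

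First I would compute the explicit description of each $P_i$ in time $O(n^d)$ using query access. To sample from the atom, draw $x \in \F^n$ uniformly at random and check whether $P_i(x) = a_i$ for all $i \in [m]$; if so, query $f(x)$ and output that value. By \cref{lem:equidistribution}, since $\cF$ is $\gamma$-unbiased with $\gamma$ decreasing suitably fast (in particular $\gamma(\dim(\cF)) < \tfrac12 \cdot \tfrac{1}{\norm{\cF}}$, which we may assume since $\norm{\cF} = |\F|^m$ is a constant and $\gamma$ decreases fast), we have $\Pr_x[\cF(x) = a] \geq \tfrac{1}{\norm{\cF}} - \gamma(\dim(\cF)) \geq \tfrac{1}{2\norm{\cF}}$ whenever the atom is nonempty. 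Hence a single trial succeeds with probability at least $\tfrac{1}{2\norm{\cF}}$, a positive constant, and repeating the sampling $O(\norm{\cF} \log(1/\beta))$ times makes the overall failure probability at most $\beta$; the total running time is $O(n^d)$ times this constant factor, i.e.\ $O(n^d)$.

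The one subtlety — and the step I expect to need the most care — is the case where the queried atom is \emph{empty}. In that situation there is no valid value of $\Gamma(a)$ to return (or, depending on the convention, $\Gamma(a)$ is defined arbitrarily and is irrelevant), and the sampling loop will never find a point. Here I would exploit equidistribution in the contrapositive: if after $O(\norm{\cF}\log(1/\beta))$ independent trials no point $x$ with $\cF(x) = a$ has been found, then with probability $\geq 1-\beta$ the atom has measure $< \tfrac{1}{2\norm{\cF}}$, which by \cref{lem:equidistribution} forces $\Pr_x[\cF(x)=a] < \tfrac{1}{\norm{\cF}} - \gamma(\dim(\cF))$ only if the atom is actually empty (since a nonempty atom has probability at least $\tfrac{1}{\norm{\cF}} - \gamma$). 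So the algorithm can safely declare the atom empty and output an arbitrary value; this is consistent because $\Gamma$ is unconstrained on points outside the image of $\cF$. Formally, one fixes the constant in the number of trials so that the probability of both "atom nonempty" and "no sample lands in it" is below $\beta$, and the correctness claim $\Gamma(a) = y$ is then vacuously or genuinely satisfied with probability $> 1-\beta$ in every case.
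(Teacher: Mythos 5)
Your plan is essentially the paper's proof: with $\gamma(x) = \tfrac{1}{2|\F|^x}$, \cref{lem:equidistribution} gives $\Pr_x[\cF(x)=a] \geq \tfrac{1}{2\norm{\cF}}$, so sampling $O(\norm{\cF}\log(1/\beta))$ uniform points finds one in the atom $\cF^{-1}(a)$ with probability $\geq 1-\beta$, and one queries $f$ there. The one thing worth flagging is that your third paragraph worries about a contingency that cannot occur: the bound from \cref{lem:equidistribution} holds \emph{unconditionally} for every $a \in \F^m$ — there is no ``whenever the atom is nonempty'' caveat — so choosing $\gamma(\dim(\cF)) < \tfrac{1}{\norm{\cF}}$ already forces every atom to be nonempty, and the fallback ``declare empty and return arbitrary'' branch is dead code.
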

\begin{proof}
Choose $\gamma(x):= \frac{1}{2p^{x}}$. By \cref{lem:equidistribution} 
$$
|\cF^{-1}(a)|\geq \frac{1}{\norm{\cF}} - \frac{1}{2\norm{\cF}} = \frac{1}{2\norm{\cF}}.
$$
Let $x_1,...,x_K$ be chosen uniformly at random from $\F^n$, where 
$K\geq 2\norm{\cF} \log \frac{1}{\beta}$. Then with probability at least $1-\beta$, 
there is $i^*\in [K]$ such that $\cF(x_{i^*})=(a_1,\ldots,a_m)$. 
This means that $x_{i^*}\in \cF^{-1}(a)$. Notice that we can 
find this $i^*$, since we have explicit description of polynomials 
in $\cF$ and we can evaluate them on each $x_i$. Our algorithm will 
query $f$ on input $x_{i^*}$ and return $f(x_{i^*})$. 
\end{proof}

\subsection{Uniform Factors}
As it turns out, one sometimes needs stronger equidistribution properties from a factor. 
For example it is sometimes important to have an almost independence property, 
similar to \cref{lem:equidistribution}, 
for the factor on a random $k$-dimensional parallelepiped. Although this holds for 
a regular factor (See \cite{GT09} Proposition~4.5), our notion of unbiased factors 
fails to satisfy such properties. To this end we define the following notion of a 
uniform factor.

\begin{dfn}[$\gamma$-uniform factors]
\label[dfn]{dfn:uniformfactor}
Let $\cF$ be a factor of degree $d$, and let $\gamma:\N \rightarrow \R^+$ 
be a decreasing function. We say that $\cF$ is $\gamma$-uniform if for 
every collection of coefficients $\{c_{i,j}\in \F\}_{1\leq i\leq d, 1\leq j \leq M_i}$ 
with $\deg\left(\sum_{i,j} c_{i,j} P_{i,j} \right)=k$,
$$
\norm{e_\F\bigg(\sum_{i,j} c_{i,j} P_{i,j}\bigg)}_{U^k} \leq \gamma(\dim(\cF)).
$$
\end{dfn} 

\begin{remark}
Notice that every $\gamma$-uniform factor is also $\gamma$-unbiased since 
$$
|\E[f(x)]| \leq \norm{f}_{U^k},
$$ 
and indeed uniformity is a stronger notion of regularity. 
\end{remark}

In \cref{sec:uniformrefinement} we will present an algorithm 
to refine a given factor to a uniform one. We will need an assumption 
that $|\F|$ is larger than the degree of the factor. The reason is that 
our proof uses division by $d!$ where $d$ is the degree of the 
factor, and therefore we will need $|\F|$ to be greater than $d$.

\subsection{Strongly Unbiased Factors}
As mentioned in previous section, notions of regularity 
(\cref{dfn:regularfactor}) and uniformity (\cref{dfn:uniformfactor}) fail to address the case when the field $\F$ has small characteristic. 
 Kaufman and Lovett~\cite{KL08} introduce a stronger notion of regularity 
 to prove the ``bias implies low rank'' in the general case. To define 
 their notion of regularity we first have to define the derivative 
 space of a factor. During this section, for simplicity in the writing, 
 we will denote a factor $\cF$ by a set of its polynomials $\{P_1,...,P_m\}$ 
 where $m=\dim(\cF)$. In this notation, we no longer use the index $i$ to denote the degree of $P_i$. 

\begin{definition}[Derivative Space]
Let $\cF=\{P_1,...,P_m\}$ be a polynomial factor over $\F^n$. Recall that
$$
D_hP(x) ~=~ P(x+h)-P(x),
$$
is the additive derivative of $P$ in direction $h$, where $h$ is a vector from $\F^n$. We define
$$
\Der(\cF) ~\defeq~ \{ (D_hP_i)(x): i\in [m], h\in \F^n\},
$$
as the derivative space of $\cF$. 
\end{definition}

\begin{definition}[Strong regularity of polynomials~\cite{KL08}]
\label[definition]{dfn:strongregularity}
Suppose that $F:\N\rightarrow \N$ is an increasing function. 
Let $\cF=\{P_1,\ldots,P_m\}$ be a polynomial factor of degree $d$ and $\Delta:\cF\rightarrow \N$ 
assigns a natural number to each polynomial in $\cF$. We say that 
$\cF$ is strongly $F$-regular with respect to $\Delta$ if
\begin{enumerate}
\item For every $i\in [m]$, $1\leq \Delta(P_i) \leq \deg(P_i)$. 
\item For any $i\in [m]$ and $r> \Delta(P_i)$, there exist a function $\Gamma_{i,r}$ such that
$$
P_i(x+y_{[r]})= \Gamma_{i,r}\big( P_j(x+ y_J): j\in [m], J\subseteq [r], |J|\leq \Delta(P_j) \big),
$$
where $x,y_1,...,y_r$ are variables in $\F^n$, and $y_J\defeq \sum_{k \in J} y_k$ for every $J\subseteq [r]$,
\item For any $r\geq 0$, let $\cC=\{c_{i,I}\in \F\}_{i\in [m], I\subseteq [r], |I|\leq \Delta(P_i)}$ be a collection of field elements. Define
$$
Q_\cC(x,y_1,...,y_r)\defeq \sum_{i\in [m], I\subseteq [r], |I|\leq \Delta(P_i)} c_{i,I}\cdot P_i(X+Y_I).
$$
Let $\cF'\subseteq \cF$ be the set of all $P_i$'s which appear in $Q_\cC$. 
There \underline{do not} exist polynomials $\tP_1,...,\tP_\ell\in \Der(\cF')$, 
and $I_1,...,I_\ell\subseteq [r]$, with $l\leq F(\dim(\cF))$, for which $Q_\cC$ can be expressed as 
$$
\Gamma\left(\tP_1(x+y_{I_1}),...,\tP_\ell(x+y_{I_\ell})\right),
$$
for some function $\Gamma:\F^\ell \rightarrow \F$.
\end{enumerate}
\end{definition}

The following lemma states that every polynomial factor can be refined to a strongly regular factor. 

\begin{lemma}[Strong-Regularity Lemma, \cite{KL08}]\label[lemma]{lem:strongregularity}
Let $F:\N\rightarrow \N$ be an increasing function. Let $\cF=\{P_1,...,P_m\}$ be a 
polynomial factor of degree $d$. There exists a refinement $\cF'=\{P_1,\ldots, P_r\}$ 
of same degree $d$ along with a function $\Delta:\cF\rightarrow \N$ such that
\begin{itemize}
\item $\cF'$ is strongly $F$-regular with respect to $\Delta$,
\item $\dim(\cF')= \rO_{F, \dim(\cF), d}(1)$. 
\end{itemize}
\end{lemma}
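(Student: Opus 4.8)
The plan is to mimic the proof of the Green–Tao style regularity lemma (\cref{lem:gtregularity}), but now tracking the function $\Delta$ together with the factor, and inducting on a more elaborate complexity measure. First I would fix the complexity measure: to a pair $(\cF, \Delta)$ associate the vector whose $i$-th coordinate (for $i = 1, \ldots, d$) records $\sum_{j : \deg(P_j) = i} \Delta(P_j)$, refined by $\dim(\cF)$ as a tiebreaker, and order these vectors lexicographically from the top degree down. At the start, set $\Delta(P_i) = \deg(P_i)$ for every $i$, which is the largest possible assignment and satisfies condition~(1). The key point is that each of the two ways the factor can fail strong $F$-regularity — a failure of condition~(2) or a failure of condition~(3) — allows us to perform a modification that strictly decreases this measure, so the process terminates.

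Next I would handle the two failure types. If condition~(2) fails for some $i$ and some $r > \Delta(P_i)$, then $P_i(x + y_{[r]})$ is measurable in the polynomials $P_j(x + y_J)$ with $|J| \le \Delta(P_j)$; the standard argument (a specialization/differentiation trick, as in \cite{KL08}) lets us \emph{decrease} $\Delta(P_i)$ — replacing it by a smaller value while preserving conditions of the form (1) — which lowers the measure in coordinate $\deg(P_i)$ without increasing any higher coordinate. If condition~(3) fails, then some $Q_\cC$ built from a collection $\cC$ is expressible as $\Gamma(\tP_1(x + y_{I_1}), \ldots, \tP_\ell(x + y_{I_\ell}))$ with $\ell \le F(\dim(\cF))$ and each $\tP_t \in \Der(\cF')$, i.e. $\tP_t = D_{h_t} P_{j_t}$ for some $j_t$ and some direction $h_t$. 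As in the Green–Tao proof, this means one of the highest-degree $P_i$ appearing in $Q_\cC$ can be written as a function of the remaining polynomials together with the at most $F(\dim(\cF))$ lower-degree derivative polynomials $\tP_1, \ldots, \tP_\ell$; since these derivatives have degree one less, we add them to the factor, delete that $P_i$, and extend $\Delta$ to the new polynomials by their degrees. This replaces one polynomial of degree $k$ by boundedly many of degree $\le k-1$, so again the measure decreases in coordinate $k$ while higher coordinates are untouched. Iterating, by the well-ordering of the measure the procedure halts at a factor that is strongly $F$-regular with respect to the resulting $\Delta$, and it is a refinement of the original $\cF$ because every step only replaces a polynomial by functions of (new) polynomials in the factor.

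The main obstacle — and the reason this is genuinely harder than \cref{lem:gtregularity} — is bounding $\dim(\cF')$. Because condition~(3) involves derivatives, and because the number of polynomials added in a single step depends on $F(\dim(\cF))$, which grows as the dimension grows, the recursion on the measure produces an Ackermann-type bound; one must verify that the total number of refinement steps, hence the final dimension, depends only on $F$, $\dim(\cF)$, and $d$, and not on $n$. The delicate part is checking that each coordinate of the complexity vector is bounded throughout the process by a quantity depending only on these parameters (the top coordinate is at most $d \cdot M_d$ initially and never increases; lower coordinates are bounded by a function of the higher-coordinate history), so that the lexicographic descent has bounded length at each level. This bookkeeping — together with the fact (from \cite{KL08}) that the ``decrease $\Delta$'' and ``add derivatives'' operations can be carried out so that conditions (1)–(3) remain meaningful for the new $(\cF, \Delta)$ — is where the real work lies; I would simply cite \cite{KL08} for the quantitative accounting and focus the exposition on why our later \emph{algorithmic} variant can replace condition~(3) by a bias condition on derivatives.
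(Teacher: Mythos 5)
This lemma is stated by the paper as a citation to \cite{KL08}; the paper offers no proof of its own, so there is no in-paper argument to compare against. That said, your sketch has a genuine confusion worth flagging. You write that ``if condition~(2) fails for some $i$ and some $r>\Delta(P_i)$, then $P_i(x+y_{[r]})$ \emph{is} measurable in the polynomials $P_j(x+y_J)$\dots'' --- but failure of condition~(2) means precisely the opposite: such a $\Gamma_{i,r}$ does \emph{not} exist. More to the point, in the Kaufman--Lovett argument conditions~(1)--(2) ($\Delta$-boundedness) are never a source of refinement steps at all: with the initial assignment $\Delta(P_i)=\deg(P_i)$ they hold automatically (by the derivative identity $P_i(x+y_{[r]})=-\sum_{J\subsetneq[r]}(-1)^{r-|J|}P_i(x+y_J)$ when $r>\deg P_i$), and they are maintained as an \emph{invariant} throughout. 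The $\Delta$-decrease move is triggered not by a failure of~(2) but by discovering a nontrivial \emph{exact} linear dependency $Q_\cC\equiv 0$ among the $P_i(x+y_I)$: one takes a maximal $I$ with $c_{i,I}\neq 0$, sets $\Delta(P_i):=|I|-1$, and drops $P_i$ if this hits zero, exactly as the paper's own algorithmic analogue in \cref{lem:stronglyunbiasedrefinement} makes explicit. Your handling of a condition-(3) failure (express $Q_\cC$ via few lower-degree derivatives, add them, remove a highest-degree $P_i$, and descend in the complexity vector) is the right shape of the argument and matches the Green--Tao-style descent; with the condition-(2) point corrected and the $\Delta$-decrease tied to exact dependencies rather than to~(2), your outline is a faithful sketch of \cite{KL08}.
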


Moreover they prove that if a polynomial is approximated by a strongly 
regular factor, then it is in fact measurable with respect to the factor. 

\begin{lemma}[\cite{KL08}]\label[lemma]{lem:approxtocomputelowchar}
For every $d$ there exists a constant $\epsilon_d=2^{-\Omega(d)}$ such 
that the following holds. Let $P:\F^n\rightarrow \F$ be a polynomial of 
degree $d$, $P_1,...,P_s$ be a strongly regular collection of polynomials of degree $d-1$ with degree bound $\Delta$ and $\Gamma:\F^s \rightarrow \F$ 
be a function such that $\Gamma(P_1,...,P_s)$ is $\epsilon_d$-close to $P$. Then there exists $\Gamma':\F^s\rightarrow \F$ such that
$$
P(x)= \Gamma'(P_1(x),...,P_s(x)).
$$
\end{lemma}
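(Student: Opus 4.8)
The plan is to prove Lemma~\ref{lem:approxtocomputelowchar} by showing that the error set $E = \{x : P(x) \neq \Gamma(P_1(x),\ldots,P_s(x))\}$ must in fact be empty, so that the only modification needed is to redefine $\Gamma$ on atoms where its value disagrees with $P$ (which, after the argument, turns out to be none of them, or more precisely: we set $\Gamma'$ to agree with $P$ on every atom, and the claim is that this is well-defined because $P$ is constant on each atom of $\sigma(\{P_1,\ldots,P_s\})$). So the heart of the matter is: if $P$ has degree $d$ and it agrees with \emph{some} function of a strongly $F$-regular (degree $d-1$, bound $\Delta$) factor on all but an $\epsilon_d$ fraction of inputs, then $P$ is constant on each atom. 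Equivalently, for every direction $h_1,\ldots,h_d$, the iterated derivative $D_{h_1}\cdots D_{h_d} P \equiv 0$ is automatic (degree $d$), and what we need is that the value of $P$ cannot ``jump'' between two points in the same atom.

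First I would set up the standard ``local characterization of low degree'' machinery: $P$ has degree $\le d$ iff $D_{y_1}\cdots D_{y_{d+1}} P \equiv 0$; more useful here, for a function $\tilde P$ that agrees with $P$ off a tiny set, one shows via a Bogdanov--Viola / self-correction style averaging argument that $\tilde P$ can be ``corrected'' and that the corrected version equals $P$ everywhere, \emph{provided} the factor is regular enough that the relevant $(d+1)$-dimensional cube distribution over the atoms behaves like a product distribution. Concretely, the key step is an equidistribution statement on parallelepipeds: because the factor $\{P_1,\ldots,P_s\}$ is strongly $F$-regular, condition~(3) of Definition~\ref{dfn:strongregularity} guarantees that the joint distribution of $\big(P_i(x+y_J)\big)_{i\in[m],\,J\subseteq[r],\,|J|\le\Delta(P_i)}$ over random $x,y_1,\ldots,y_r$ is close (in $\ell_\infty$ on atoms) to uniform on the appropriate configuration space — this is the Kaufman--Lovett analogue of Green--Tao's Proposition~4.5, and it is exactly what condition~(3) was designed to deliver since no small-rank linear combination of these shifted polynomials and their derivatives can be expressed through $\mathrm{Der}(\cF')$. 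Then I would plug $r = d+1$ (or $d$) into this, so that on all but a $2^{d+1}\epsilon_d < 1$ fraction of cubes $(x;y_1,\ldots,y_{d+1})$, all $2^{d+1}$ vertices avoid the error set $E$ simultaneously, using $|E|\le \epsilon_d$ together with the equidistribution bound to control the union bound over the structured cube distribution (not just the uniform one).

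The argument then runs: on such a ``good'' cube, $\Gamma$ agrees with $P$ at all $2^{d+1}$ vertices; since $P$ has degree $\le d$ its $(d+1)$-st derivative over the cube vanishes, giving a linear relation $\sum_{\omega\in\{0,1\}^{d+1}} (-1)^{|\omega|} P(x+\omega\cdot y) = 0$; translating this into a relation among the values of $\Gamma$ on the corresponding tuple of atoms, and noting that condition~(2) of Definition~\ref{dfn:strongregularity} lets us express the ``far'' shifts $P_i(x+y_{[r]})$ as functions of the low-order shifts, we conclude that $\Gamma$ is forced to satisfy a consistency identity on a set of atom-tuples that, by equidistribution, covers \emph{all} atom-tuples that actually occur. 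From this one deduces that the function $a \mapsto (\text{common value of } P \text{ on the atom } a)$ is well-defined — i.e., $P$ takes a single value on each atom — and we define $\Gamma'(a)$ to be that value; then $P(x) = \Gamma'(P_1(x),\ldots,P_s(x))$ identically. Finally I would quantify $\epsilon_d = 2^{-\Omega(d)}$: we need $2^{d+1}\epsilon_d$ plus the equidistribution error (which we can drive down by choosing $F$ growing fast enough, charged to the factor, not to $\epsilon_d$) to be strictly below the threshold at which the union bound leaves a non-empty set of good cubes through every pair of points we want to compare; a clean choice is $\epsilon_d < 2^{-(d+2)}$.

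The main obstacle I expect is the equidistribution-on-parallelepipeds step: proving that strong $F$-regularity (in the exact form of Definition~\ref{dfn:strongregularity}, with the derivative-space condition~(3) and the shift-closure condition~(2)) implies near-uniformity of the induced distribution on the tuple of atoms $\big(\cF(x+y_J)\big)_J$ requires unwinding the Gowers-style Cauchy--Schwarz / rank argument of Kaufman--Lovett and being careful that the $\Delta$-bounded shifts are exactly the ones the regularity condition controls — a naive union bound over $|\Sigma|^{2^{d+1}}$ atom-tuples is what forces $F$ to be chosen (Ackermann-)large, and one has to make sure this largeness is absorbed into the factor's dimension bound and does not feed back into $\epsilon_d$. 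The secondary subtlety is bookkeeping in condition~(2): ensuring that when we write the cube identity for $P$ in terms of $\Gamma$ and the shifted polynomials, the shifts $y_J$ with $|J| > \Delta(P_i)$ are consistently eliminated so that the final consistency relation is genuinely a relation among the finitely many ``legal'' atom-coordinates, which is what makes the well-definedness of $\Gamma'$ meaningful.
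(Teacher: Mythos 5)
Your high-level plan (equidistribution over parallelepipeds, derivative identities on good cubes, well-definedness of $\Gamma'$ on atoms) is pointing in the right direction, but there is a genuine gap in the crucial step. The Kaufman--Lovett proof — and the paper's own proof of the analogous \cref{lem:approxtocomputelowchar_new} — is not a single global union bound over cubes. It is a two-stage argument: first, by Markov and equidistribution, a $1-O(\sqrt{\epsilon_d})$ fraction of atoms $A$ are ``almost good'' (error rate $O(\sqrt{\epsilon_d})$ inside $A$), and a Cauchy--Schwarz double-counting argument over \emph{pairs} of $(d{+}1)$-cubes sharing a vertex (\cref{lem:technicalprobabilities}) is used to promote ``almost good'' to ``totally good'', i.e.\ $P\equiv\Gamma\circ\cF$ throughout $A$. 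Second, for the remaining atoms, one runs the cube identity through non-trivial vertices that land in \emph{totally good} atoms (\cref{claim:badatomsaregood}), where $P$ already has a single fixed value. Your proposal collapses this into ``all but a $2^{d+1}\epsilon_d$ fraction of cubes are good, and by equidistribution this covers all atom-tuples that actually occur, hence $a\mapsto(\text{value of }P\text{ on }a)$ is well-defined.'' That last leap does not follow. The union bound does give you, for each fixed $x$, some $y$ with $x+\omega\cdot y\notin E$ for all $\omega\ne 0$, hence $P(x)=-\sum_{\omega\ne 0}(-1)^{|\omega|}\Gamma(a_\omega(x))$ for \emph{some} atom-tuple $a(x)$; but two points $x,x'$ in the same atom could a priori be witnessed only by disjoint families of good atom-tuples, so there is no immediate reason the two alternating sums agree, and ``covers all atom-tuples'' is simply false (good cubes cover \emph{most} atom-tuples, not all — some atom-tuples may have every realization hit $E$). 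Moreover, if $\Gamma\circ\cF$ only satisfies the $(d{+}1)$-cube cancellation on most cubes, you cannot conclude it is a degree-$d$ polynomial and then appeal to Schwartz--Zippel, because that would force $\epsilon_d<|\F|^{-d}$ and the lemma requires $\epsilon_d$ independent of $|\F|$.

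To close the gap you would need one of two things, neither of which is in your proposal: (a) the paper's route — the Cauchy--Schwarz argument that turns ``almost good'' into ``totally good'' (this is precisely where the conditional probability of a vertex falling in $B=A\setminus X$ \emph{given all vertices in $A$} must be beaten, and a naive conditional union bound is far too weak because $\Pr[\text{all in }A]\approx|\F|^{-B(d+1)}$ is tiny), followed by \cref{claim:badatomsaregood}; or (b) a Markov-on-atom-tuples argument: show that at least a $1-O(2^{d+1}\epsilon_d)$ fraction of the consistent atom-tuples admit a good realization from $x$, the same from $x'$, invoke \cref{lem:distributionreduction} to see the realizable tuple-sets from $x$ and $x'$ coincide, and intersect. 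Your identification of the equidistribution-over-parallelepipeds lemma as the main technical input is correct, and your numeric estimate $\epsilon_d\approx 2^{-(d+O(1))}$ is in the right range; what's missing is the argument that actually gets you from ``good cubes are plentiful'' to ``$P$ is constant on atoms.''
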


\cref{lem:strongregularity} and \cref{lem:approxtocomputelowchar} 
combined with Bogdanov-Viola Lemma immediately gives the following theorem.

\begin{theorem}[Bias implies low rank for general fields~\cite{KL08}]
\label{thm:biasimplieslowrank}
Let $P:\F^n\rightarrow \F$ be a polynomial of degree $d$ such that $\bias(P)\geq \delta>0$. Then $\rank_{d-1} \leq c(d,\delta, \F)$. 
\end{theorem}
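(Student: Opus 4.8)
The theorem follows by combining three results already stated in the excerpt: the Bogdanov-Viola lemma (\cref{lem:algorithmicBV}), the Strong-Regularity Lemma (\cref{lem:strongregularity}), and the approximation-to-exact-computation lemma (\cref{lem:approxtocomputelowchar}). The plan is to first approximate $P$ by a bounded factor of lower degree, then regularize that factor via the Kaufman-Lovett notion, and finally invoke \cref{lem:approxtocomputelowchar} to upgrade the approximation to an exact computation.

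In more detail, let $\epsilon_d = 2^{-\Omega(d)}$ be the constant supplied by \cref{lem:approxtocomputelowchar}, and set $\sigma := \epsilon_d$; note $\sigma$ depends only on $d$. Since $\bias(P) \geq \delta$, \cref{lem:algorithmicBV} produces a factor $\cF_0 = \{P_1,\ldots,P_C\}$ with $\deg(P_i) \leq d-1$ for all $i$, together with a function $\Gamma_0 : \F^C \to \F$, such that $\Pr_x[P(x) \neq \Gamma_0(P_1(x),\ldots,P_C(x))] \leq \sigma$, where $C = \rO_{d,\delta}(1)$ (the dependence on $\sigma$ is absorbed into the dependence on $d$, since $\sigma$ is a function of $d$). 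Next, fix a growth function $F : \N \to \N$ increasing rapidly enough that the hypothesis of \cref{lem:approxtocomputelowchar} applies to any $F$-regular factor of degree $d-1$, and apply \cref{lem:strongregularity} to $\cF_0$. This yields a refinement $\cF_1 = \{Q_1,\ldots,Q_s\}$ of $\cF_0$, of the same degree $d-1$, together with a degree bound $\Delta$, such that $\cF_1$ is strongly $F$-regular with respect to $\Delta$ and $s = \rO_{F,C,d}(1) = c(d,\delta,\F)$.

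Because $\cF_1$ refines $\cF_0$, each $P_i$ is measurable in $\cF_1$, so $\Gamma_0(P_1,\ldots,P_C)$ can be rewritten as $\Gamma_1(Q_1,\ldots,Q_s)$ for a suitable $\Gamma_1 : \F^s \to \F$; in particular $\Gamma_1(Q_1,\ldots,Q_s)$ is $\sigma = \epsilon_d$-close to $P$. Now \cref{lem:approxtocomputelowchar}, applied to the strongly regular collection $Q_1,\ldots,Q_s$, produces $\Gamma' : \F^s \to \F$ with $P(x) = \Gamma'(Q_1(x),\ldots,Q_s(x))$ identically. Since every $Q_j$ has degree at most $d-1$, this gives $\rank_{d-1}(P) \leq s = c(d,\delta,\F)$, as claimed.

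The argument is essentially bookkeeping, with all the real content hidden in the three cited lemmas; the only point requiring care is the order of quantification. One must fix $\sigma$ first, depending only on $d$, so that the approximation error produced by Bogdanov-Viola lies below the threshold $\epsilon_d$ of \cref{lem:approxtocomputelowchar}; then $C$ is determined by $d$ and $\delta$; then $F$ (chosen to make \cref{lem:approxtocomputelowchar} applicable) and the final bound $s$ are determined — crucially with no dependence on the number of variables $n$, which is exactly what \cref{lem:strongregularity} guarantees. The one genuine obstacle, that a factor merely approximating $P$ need not compute it exactly, is precisely what \cref{lem:approxtocomputelowchar} overcomes, and this is where the added strength of the Kaufman-Lovett regularity notion (over mere unbiasedness or uniformity) is essential.
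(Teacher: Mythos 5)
Your proposal is correct and matches the paper's own (one-line) justification exactly: the paper states that \cref{lem:strongregularity} and \cref{lem:approxtocomputelowchar} combined with the Bogdanov-Viola lemma immediately give the theorem, and your writeup is precisely that combination, with the quantifier order spelled out carefully (fix $\sigma = \epsilon_d$ first, then let $C$, $F$, and $s$ depend only on $d$, $\delta$, $\F$).
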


In \cref{sec:lowchar} we will give an algorithmic version of this 
theorem. To do this, as in previous sections, we will have to use a notion 
of regularity which is well suited for algorithmic applications. Uniform 
factors (\cref{dfn:uniformfactor}) fail to address fields of low 
characteristics, for the reason that to refine a factor to a uniform factor 
we make use of division by $d!$ which is not possible in fields with $|\F|\leq d$. 
Strong regularity of \cite{KL08} suggests how to extend the notion of unbiased 
factors to a stronger version in quite a straight forward fashion. 

\begin{definition}[Strongly $\gamma$-unbiased factors]
\label[definition]{dfn:stronglyunbiased}
Suppose that $\gamma:\N\rightarrow \R^+$ is a decreasing function. 
Let $\cF=\{P_1,...,P_m\}$ be a polynomial factor of degree $d$ over $\F^n$ and let $\Delta:\cF\rightarrow \N$ 
assign a natural number to each polynomial in the factor. We say that $\cF$ is 
strongly $\gamma$-unbiased with degree bound $\Delta$ if 
\begin{enumerate}
\item For every $i\in [m]$, $1\leq \Delta(P_i)\leq \deg(P_i)$. 
\item For every $i\in [m]$ and $r>\Delta(P_i)$, there exists a function $\Gamma_{i,r}$ such that
$$
P_i(x+y_{[r]})= \Gamma_{i,r}\left(P_j(x+y_J): j\in [m], J\subseteq [r], |J|\leq \Delta(P_j)\right).
$$
\item For any $r\leq 0$ and not all zero collection of field elements 
$\cC=\{c_{i,I}\in \F\}_{i\in [m], I\subseteq [r], |I|\leq \Delta(P_i)}$ we have
$$
\bias\left( \sum_{i\in [m], I\subseteq [r], |I|\leq \Delta(P_i)} c_{i,I}\cdot P_i(x+y_I)\right) \leq \gamma(\dim(\cF)),
$$
for random variables $x,y_1,\ldots,y_r\in \F^n$.
\end{enumerate}
We will say $\cF$ is $\Delta$-bounded if it only satisfies (1) and (2).
\end{definition}

In \cref{sec:lowchar} we will show how to refine a given factor to a 
strongly unbiased one. This will allow us prove an algorithmic analogue 
of \cref{lem:approxtocomputelowchar}, which will be presented in \cref{sec:approxtocomputelowchar}.

\section{Uniform Refinement}\label{sec:uniformrefinement}
In this section we shall assume that the field $\F$ has large characteristic, 
namely $|\F|$ is greater than the degree of the polynomials and the degree of 
the factors that we study. We will prove the following lemma which states that 
we can efficiently refine a given factor to a uniform factor. 

\begin{lemma}[Uniform Refinement]
\label[lemma]{lem:uniformrefinement}
Suppose $d<|\F|$ is a positive integer and $\rho\in (0,1]$ is a parameter. There 
is a randomized algorithm that, takes as input a factor $\cF$ of degree $d$ over 
$\F^n$, and a decreasing function $\gamma:\N \rightarrow \R^+$, runs in time 
$\rO_{\rho, \gamma, \dim(\cF)}(n^d)$, and with probability $1-\rho$ outputs a $\gamma$-uniform 
factor $\widetilde{\cF}$, where $\widetilde{\cF}$ is a refinement of $\cF$, is of 
same degree $d$, and $|\dim(\widetilde{\cF})|\ll_{\sigma, \gamma, \dim(\cF)} 1$.
\end{lemma}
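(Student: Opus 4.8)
The plan is to mimic the inductive structure of \cref{lem:gtregularity} and \cref{lem:unbiasedrefinement}, but now controlling the Gowers norm of every linear combination rather than just its bias. As in \cref{lem:unbiasedrefinement}, the induction is on the dimension vector $(M_1,\ldots,M_d)$ of $\cF$ under reverse lexicographic order, and at the outset we always discard linear dependencies (detectable in time $\rO_{\dim(\cF)}(n^d)$). The base case $d=1$ is trivial since, after removing dependencies, a set of nonconstant linear polynomials has all linear combinations of degree $1$ and every nonzero linear polynomial has $U^1$ norm $0$ (indeed bias $0$), so the factor is already $\gamma$-uniform. For the inductive step, I first estimate, for each of the $|\F|^{\dim(\cF)}$ linear combinations $Q = \sum_{i,j} c_{i,j} P_{i,j}$, the quantity $\norm{e_\F(Q)}_{U^k}$ where $k=\deg(Q)$; this is an average of a bounded quantity over random $(k+1)$-tuples, so by a Chernoff/Hoeffding bound with $C = \rO_{\dim(\cF)}\!\left(\gamma(\dim(\cF))^{-2^{k+1}}\log(1/\rho)\right)$ samples we can distinguish $\norm{e_\F(Q)}_{U^k}\geq\gamma(\dim(\cF))$ from $\leq \gamma(\dim(\cF))/2$ with failure probability at most $\rho/(4|\F|^{\dim(\cF)})$ per combination, hence at most $\rho/4$ overall by a union bound. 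If no combination has estimated $U^k$ norm above $\tfrac{3}{4}\gamma(\dim(\cF))$ (or the factor has degree $1$) we stop and output $\cF$.

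The heart of the argument is what to do when some $Q=\sum_{i,j}c_{i,j}P_{i,j}$ of degree $k$ has $\norm{e_\F(Q)}_{U^k}\geq\tfrac12\gamma(\dim(\cF))$. Here I invoke the Green--Tao machinery underlying \cref{thm:gt09} in its local form: when $|\F|>k$, largeness of $\norm{e_\F(Q)}_{U^k}$ implies that, for a noticeable fraction of directions $h_1,\ldots,h_{k-2}$ (equivalently, thinking of the directions as variables), the iterated derivative $D_{h_1}\cdots D_{h_{k-2}}Q$ — a polynomial of degree $2$ in $x$, but viewed as a polynomial in the joint variable $(x,h_1,\ldots,h_{k-2})$ it has degree $\le k-1$ — has bias bounded below by a function of $\gamma(\dim(\cF))$. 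Concretely, one uses the standard chain of Gowers–Cauchy–Schwarz inequalities together with the fact (this is where division by $d!$, and hence $|\F|>d$, enters) that over a large field the top-degree part of a degree-$k$ polynomial can be recovered from its $(k-1)$-st derivatives, so that $\norm{e_\F(Q)}_{U^k}$ large forces the derivative polynomial $\widetilde{Q}(x,h_1,\ldots,h_{k-1}) \defeq D_{h_1}\cdots D_{h_{k-1}}Q(x)$ (which has degree $\le k-1$ in the combined variables) to have $\bias(\widetilde Q)\ge \delta'$ for $\delta'$ depending only on $\gamma(\dim(\cF))$ and $k$. Then \cref{lem:algorithmicBV}, applied to $\widetilde Q$ in the ambient space $\F^{n(k)}$ (dimension a constant multiple of $n$), produces in time $\rO(n^d)$ a factor $\cF_{k-1}$ of degree $k-1$, of constant dimension, that computes $\widetilde Q$ on a $(1-\sigma')$-fraction of inputs, together with the connecting function.

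Now comes the inductive recursion, exactly as sketched in the Proof Overview: I apply the induction hypothesis to $\cF_{k-1}$ (a factor of degree $k-1<d$) with suitably scaled error parameters $\sigma/2,\rho/4$ and a faster-decreasing target function, obtaining a $\gamma$-uniform refinement $\cF_{k-1}'$ that is $(\sigma/2)$-close to refining $\cF_{k-1}$; I then need the key structural claim that a $\gamma$-uniform factor of degree $k-1$ which approximately computes $\widetilde Q$ must in fact compute $\widetilde Q$ \emph{exactly} (this is the analog, for the uniformity notion, of the exact-computation phenomenon that \cref{lem:equidistribution} gives for unbiased factors and that Kaufman--Lovett establish for strong regularity; it follows from the equidistribution of uniform factors on parallelepipeds, referenced just before \cref{dfn:uniformfactor}). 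Granting this, $\widetilde Q$ — and therefore, by specializing the direction variables and using that the original $Q$ is a bounded combination of the $P_{i,j}$, one of the $P_{i,j}$ of top degree $k$ appearing in $Q$ — is measurable in $\cF_{k-1}'$; we add the (constantly many) polynomials of $\cF_{k-1}'$ to $\cF$ and delete that $P_{i,j}$, strictly decreasing the dimension vector in reverse lex order. The outer induction on $(M_1,\ldots,M_d)$ then terminates, the dimension stays $\rO_{\rho,\gamma,\dim(\cF)}(1)$ because each step adds only a constant number of polynomials and the recursion depth is bounded by a function of the initial dimension vector, and the total running time is $\rO_{\rho,\gamma,\dim(\cF)}(n^d)$ since each of the constantly many steps costs $\rO(n^d)$. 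The success probability bookkeeping is identical to \cref{lem:unbiasedrefinement}: the four sources of error ($\rho/4$ for the estimation step, $\rho/4$ for \cref{lem:algorithmicBV}, $\rho/4$ for the recursive call, and the per-combination estimation error already folded in) sum to at most $\rho$.

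The main obstacle, and the step I expect to require genuinely new work rather than bookkeeping, is the structural claim that a $\gamma$-uniform degree-$(k-1)$ factor approximately computing $\widetilde Q$ computes it exactly — i.e.\ that uniformity, though weaker than $F$-regularity, still yields the parallelepiped-equidistribution property strong enough to run the Green--Tao exact-computation argument. This is precisely the point flagged in the Proof Overview as requiring ``a tighter analysis'' of the Green--Tao proof; it is what couples the regularity lemma to the inverse theorem and forces the restriction $|\F|>d$, because the derivative-to-bias passage and the recovery of the top-degree part both rely on invertibility of $d!$ in $\F$.
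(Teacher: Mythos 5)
Your proposal tracks the paper's proof closely: estimate the Gowers norms of all $|\F|^{\dim(\cF)}$ linear combinations by sampling, pass from a combination $Q$ of degree $k$ with large $U^k$ norm to its iterated derivative $DQ$ with large bias (note the identity $\bias(DQ)=\norm{e_\F(Q)}_{U^k}^{2^k}$ is definitional here, not a Cauchy--Schwarz chain), approximate $DQ$ by degree-$(k-1)$ polynomials via \cref{lem:algorithmicBV}, recursively regularize those, upgrade the approximation to exact computation via \cref{prop:approxtocompute_new}, and then replace a top-degree $P_{i,j}$ appearing in $Q$. That is exactly the paper's algorithm and its mutual induction with \cref{prop:approxtocompute_new}, and you correctly identify that proposition as the genuinely new structural ingredient.

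Two details are wrong as written. First, you conclude that the deleted $P_{i,j}$ becomes measurable in $\cF_{k-1}'$ alone, ``by specializing the direction variables.'' That is not enough: the Taylor identity $Q(x)=\frac{1}{k!}DQ(x,\ldots,x)+R(x)$ leaves a degree-$(k-1)$ remainder $R$ (explicitly computable from $Q$), and unless $R$ is also added to the factor --- as the paper does --- the output fails to be a refinement of $\cF$. Second, you say the recursive call returns a factor only $(\sigma/2)$-close to refining $\cF_{k-1}$; but the statement of \cref{lem:uniformrefinement} (and the Note following it) is that the output is an \emph{exact} refinement. That exactness is precisely what \cref{prop:approxtocompute_new} buys inside the induction and is what distinguishes this lemma from \cref{lem:unbiasedrefinement}; using the $\sigma$-close form from the unbiased case would propagate error and again break the refinement guarantee. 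A smaller slip: the iterated derivative $\widetilde Q$ has total degree $k$ in the combined variables, not $\le k-1$; this does not affect the downstream argument since \cref{lem:algorithmicBV} still yields degree-$(k-1)$ approximating polynomials. With the remainder $R$ added to the factor and the recursive output taken as an exact refinement, your argument coincides with the paper's proof.
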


\begin{note}
Notice that unlike \cref{lem:unbiasedrefinement}, here $\cF'$ is an exact 
refinement of $\cF$. This will be achieved by the use of \cref{prop:approxtocompute_new} 
which roughly states that approximation of a polynomial by a uniform factor implies its exact computation.
\end{note}

The following proposition of Green and Tao states that if a polynomial is approximated well 
enough by a regular factor, then it is computed by the factor. 

\begin{prop}[\cite{GT09}]
\label[prop]{prop:approxtocompute}
Suppose that $d\geq 1$ is an integer. There exists a constant $\sigma_d>0$ 
such that the following holds. Suppose that $P:\F^n \rightarrow \F$ 
is a polynomial of degree $d$ and that $\cF$ is an $F$-regular factor 
of degree $d-1$ for some increasing function $F:\N\rightarrow \N$ which increases suitably 
rapidly in terms of $d$. Suppose that $\tP:\F^n \rightarrow \F$ is an 
$\cF$-measurable function and that $\Pr(P(x)\neq \tP(x)) \leq \sigma_d$. 
Then $P$ is itself $\cF$-measurable.
\end{prop}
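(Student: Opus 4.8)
We must exhibit $\Gamma':\F^{\dim(\cF)}\to\F$ with $P=\Gamma'(\cF)$; equivalently, we must show that $P$ is constant on every atom of $\cF$. Suppose not: there are points $x_0,x_0'\in\F^n$ with $\cF(x_0)=\cF(x_0')$ but $P(x_0)\neq P(x_0')$, and we aim for a contradiction. Write $\tP=\Gamma(\cF)$ for the $\cF$-measurable function with $\Pr_x(P(x)\neq\tP(x))\le\sigma_d$. The engine of the argument is the self-correction identity for low-degree polynomials: since $\deg(P)=d$, the $(d+1)$-st additive derivative of $P$ vanishes identically, so, writing $h_I:=\sum_{i\in I}h_i$, we have for all $x,h_1,\dots,h_{d+1}\in\F^n$
\[
P(x) \;=\; \sum_{\emptyset\neq I\subseteq[d+1]} (-1)^{|I|+1}\, P(x+h_I).
\]

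\textbf{From the identity to a robust majority vote.} Fix $x\in\F^n$ and draw $h_1,\dots,h_{d+1}$ uniformly from $\F^n$. Each $h_I$ with $I\neq\emptyset$ is uniform on $\F^n$, so $\Pr[P(x+h_I)\neq\tP(x+h_I)]\le\sigma_d$; a union bound over the $2^{d+1}-1$ nonempty $I$ shows that with probability at least $1-2^{d+1}\sigma_d$ we simultaneously have $P(x+h_I)=\Gamma(\cF(x+h_I))$ for all such $I$, and then the identity yields $P(x)=G\big((\cF(x+h_I))_{\emptyset\neq I}\big)$, where $G(\zeta):=\sum_{\emptyset\neq I}(-1)^{|I|+1}\Gamma(\zeta_I)$ is a fixed function of the tuple of atom-values at the corners of the parallelepiped. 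Choosing the absolute constant $\sigma_d:=2^{-d-3}$ (so $2^{d+1}\sigma_d<\tfrac14$), we conclude: for \emph{every} $x\in\F^n$, the random variable $G\big((\cF(x+h_I))_{\emptyset\neq I}\big)$ equals $P(x)$ with probability $>\tfrac34$ over $h_1,\dots,h_{d+1}$.

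\textbf{Equidistribution on parallelepipeds, and the conclusion.} This is where $F$-regularity (\cref{dfn:regularfactor}) is used. By the equidistribution properties of regular factors on parallelepipeds \cite{GT09} (cf.\ \cite[Proposition~4.5]{GT09}; this is the property alluded to after \cref{lem:equidistribution}), if $\cF$ is $F$-regular of degree $d-1$ with $F$ increasing rapidly enough in terms of $d$, then for every fixed $x$ the distribution of $(\cF(x+h_I))_{\emptyset\neq I\subseteq[d+1]}$ over uniform $h_1,\dots,h_{d+1}$ is within total variation $\eps$ of a distribution depending on $x$ only through the atom $\cF(x)$, where $\eps$ can be forced below any threshold (in particular $\eps<\tfrac14$) by enlarging $F$, as a function of $\dim(\cF)$ and $d$ only and independently of $n$. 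Pushing forward through $G$, the law of $G\big((\cF(x+h_I))_I\big)$ is $\eps$-close to a law $\pi_{\cF(x)}$ on $\F$ depending only on the atom of $x$. Applying this to $x_0,x_0'$ in the common atom $b:=\cF(x_0)=\cF(x_0')$: from the previous paragraph $\pi_b$ puts mass $>\tfrac34-\eps>\tfrac12$ on $P(x_0)$ and mass $>\tfrac12$ on $P(x_0')$, so if $P(x_0)\neq P(x_0')$ then $\pi_b$ has total mass exceeding $1$, a contradiction. Hence $P$ is constant on each atom $b$; setting $\Gamma'(b)$ equal to that common value gives $P=\Gamma'(\cF)$.

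\textbf{Main obstacle.} Everything except the parallelepiped-equidistribution input is a short union-bound/pigeonhole argument. The genuinely substantial ingredient — which we simply cite from \cite{GT09} — is that high rank of every linear combination of the factor polynomials forces the atom-values at the $2^{d+1}$ corners of a random $(d+1)$-dimensional parallelepiped to be jointly near-uniform subject only to the linear relations forced by the degree bound, with an error governed purely by the regularity level $F(\dim(\cF))$ and by $d$. The only care needed in stitching the two pieces together is quantitative bookkeeping: $\sigma_d$ must be exponentially small in $d$ so the union bound retains a constant slack above $\tfrac12$, and $F$ must grow fast enough that the equidistribution error $\eps$ stays below the matching constant, so that the two "robust majority" values on a shared atom are forced to coincide.
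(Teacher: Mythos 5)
Your proof is correct, and it takes a genuinely somewhat different route from the one in \cite{GT09} (and from the proof of the closely analogous \cref{prop:approxtocompute_new} given in this paper). The paper follows Green--Tao by first showing that ``almost good'' atoms --- those on which $P$ and $\tP$ agree outside an $O(\sqrt{\sigma_d})$-fraction --- are in fact totally good, via a pigeonhole/counting argument (Lemma~5.2 of \cite{GT09}, stated here as \cref{lem:goodatoms}) that produces a \emph{single} parallelepiped whose nonzero corners all lie in $A\cap X$; and then, separately, that for any remaining bad atom one can route the corners of a parallelepiped into totally good atoms (\cref{claim:badatomsaregood}), forcing $P$ to be constant on the bad atom as well. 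You collapse this two-stage structure into one: you never need to distinguish good from bad atoms, because the union bound $\Pr[\exists I : P(x+h_I)\neq\tP(x+h_I)]\le (2^{d+1}-1)\sigma_d$ holds uniformly for every $x\in\F^n$. You then argue at the level of \emph{distributions} over the random parallelepiped rather than exhibiting one explicit $h$: the law of the corrected value $G\bigl((\cF(x+h_I))_I\bigr)$ is, by equidistribution over parallelepipeds, within $\eps$ (in total variation) of a law $\pi_{\cF(x)}$ that depends only on the atom, and a law on $\F$ cannot put mass $>\frac12$ on two distinct points. Both arguments hinge on the same hard input (GT09 Proposition~4.5, the near-uniformity of $\cF$-values over consistent labelings of a random parallelepiped when $F$ is large), and both identify $\sigma_d\sim 2^{-d}$ as the right scale. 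Your version buys a cleaner statement: you avoid the good/bad case split and the implicit use that $\tP$ is constant on atoms; the paper's version, by working with a single parallelepiped landing inside the atom, more directly shows that $P=\tP$ holds identically on good atoms (a slightly stronger conclusion that is, however, not needed for $\cF$-measurability). The quantitative bookkeeping you flag at the end (that $\eps$ must be held below $\tfrac14$ by taking $F$ large, as a function of $\dim(\cF)$ and $d$ only) is exactly the way the growth function enters, and is handled correctly.
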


Our proof of \cref{lem:uniformrefinement} will require the following 
variant of \cref{prop:approxtocompute}, the proof of which will be 
deferred to Sections \ref{sec:parallelequidistribution} and \ref{sec:approxtocompute_new}. 
This proposition shows that the uniformity notion is strong enough to imply results from \cite{GT09} 
such as \cref{prop:approxtocompute}.

\begin{prop}\label[prop]{prop:approxtocompute_new}
Suppose that $d\geq 1$ is an integer. There exists 
$\sigma_{\ref{prop:approxtocompute_new}}(d)>0$ such that the following holds. 
Suppose that $P:\F^n \rightarrow \F$ is a polynomial of degree $d$ and that 
$\cF$ is a $\gamma_{\ref{prop:approxtocompute_new}}$-uniform factor of degree 
$d-1$ for some decreasing function $\gamma_{\ref{prop:approxtocompute_new}}$ which 
decreases suitably rapidly in terms of $d$. Suppose that $\tP:\F^n \rightarrow \F$ 
is an $\cF$-measurable function and that $\Pr(P(x)\neq \tP(x)) \leq \sigma_{\ref{prop:approxtocompute_new}}(d)$. 
Then $P$ is itself $\cF$-measurable. 
\end{prop}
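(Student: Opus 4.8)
The plan is to mimic the original Green--Tao argument for \cref{prop:approxtocompute}, replacing every invocation of the $F$-regularity hypothesis by the corresponding consequence of $\gamma$-uniformity. The overall structure is as follows. Write $E = \{x : P(x) \neq \tP(x)\}$, so $\Pr[x \in E] \le \sigma_{\ref{prop:approxtocompute_new}}(d)$. Since $\tP$ is $\cF$-measurable, say $\tP = \Gamma(\cF)$, the strategy is to show that for \emph{every} atom $b \in \Sigma$ of $\cF$, the polynomial $P$ is constant on $\cF^{-1}(b)$ (and hence $P$ can be redefined as a function of $\cF$ on that atom, recovering measurability of $P$). Equivalently: for all but a negligible fraction of atoms, the restriction of $P - c$ to the atom vanishes identically for the appropriate constant $c$, and a degree argument upgrades ``all but a negligible fraction'' to ``all''.

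First I would record the two equidistribution inputs I need. The first is \cref{lem:equidistribution}: a $\gamma$-uniform factor is $\gamma$-unbiased, so every atom has measure $\tfrac{1}{\|\cF\|} \pm \gamma(\dim(\cF))$; choosing $\gamma$ small relative to $\sigma_{\ref{prop:approxtocompute_new}}(d)$ and $\|\cF\|$, every atom has measure $\ge \tfrac{1}{2\|\cF\|}$, hence the ``bad'' set $E$ intersects a typical atom in a relatively small fraction. The second, and the real workhorse, is a \emph{parallelepiped} equidistribution statement for uniform factors — the analogue of \cite[Proposition 4.5]{GT09} that the paper explicitly promises in \cref{sec:parallelequidistribution}: for a $\gamma$-uniform factor $\cF$ of degree $d-1$ and a random $(d{-}1)$-dimensional parallelepiped through a point, the joint distribution of $\cF$ evaluated at the $2^{d-1}$ vertices is close (in the sense of, say, total variation or the relevant moment bounds) to what it would be if the polynomials were ``independent''. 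The key structural fact that makes this work is that $\norm{e_\F(Q)}_{U^k}$ being small for every nonzero linear combination $Q$ of degree $k$ lets one bound, via Gowers--Cauchy--Schwarz, the relevant exponential sums over parallelepipeds, exactly as the rank hypothesis does in \cite{GT09}. I would cite \cref{prop:approxtocompute_new}'s companion lemmas from \cref{sec:parallelequidistribution} as a black box here.

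Granting the parallelepiped equidistribution, the argument proceeds as in Green--Tao. For a point $x$, consider a random $d$-dimensional cube $x + \{0,1\}^d \cdot (h_1,\ldots,h_d)$ (directions $h_i$ uniform in $\F^n$). Since $P$ has degree $d$, the alternating sum $\sum_{S \subseteq [d]} (-1)^{|S|} P(x + h_S)$ vanishes identically. If $x$ and enough of the other $2^d - 1$ vertices of the cube avoid $E$ and land in the \emph{same} atom of $\cF$, then on those vertices $P$ agrees with $\tP = \Gamma(\cF)$, which is constant on that atom, and the vanishing of the alternating sum forces a linear constraint that, combined over many choices of cube, pins down $P$ on the fiber. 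Concretely: fix an atom $b$; conditioned on $\cF(x) = b$, a random parallelepiped in the directions generating $\sigma(\cF)$-preserving shifts keeps all vertices in atom $b$ with probability bounded below (this is where the parallelepiped equidistribution for the \emph{derivative} directions enters), and by a union bound over the $2^d$ vertices all of them avoid $E$ with probability $\ge 1 - 2^d \cdot (\text{small})$; on that event the degree-$d$ relation shows $P(x)$ is determined by the values of $P$ (equivalently $\tP$, equivalently $\Gamma(b)$) at the other vertices — so $P$ is constant on $\cF^{-1}(b)$ up to a null set, and then, since a low-degree polynomial vanishing on all but a tiny fraction of an atom of a uniform (hence equidistributed) factor must vanish on all of it, $P$ is genuinely $\cF$-measurable. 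One sets $\sigma_{\ref{prop:approxtocompute_new}}(d)$ small enough (exponentially small in $d$, as in \cite{GT09}) that all these union bounds close.

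The main obstacle is the parallelepiped equidistribution lemma for $\gamma$-uniform factors: unlike $F$-regularity, where ``high rank'' directly feeds the Green--Tao counting lemma, here one only controls the $U^k$-norm of linear combinations at their natural degree $k$, and one must check that this weaker-looking hypothesis still suffices to run the Cauchy--Schwarz/counting argument on parallelepipeds. I expect this to require the careful induction-on-degree bookkeeping alluded to in the proof overview — peeling off one derivative at a time, using that a derivative $D_h Q$ of a degree-$k$ combination has degree $k-1$ and that uniformity is inherited in a suitable averaged sense — and it is exactly the content deferred to \cref{sec:parallelequidistribution}. Once that lemma is in hand, the deduction of \cref{prop:approxtocompute_new} above is a routine adaptation of \cite{GT09}, with the only real care needed in tracking how small $\gamma_{\ref{prop:approxtocompute_new}}$ must be as a function of $\dim(\cF)$, $d$, and $\sigma_{\ref{prop:approxtocompute_new}}(d)$.
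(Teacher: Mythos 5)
Your overall strategy is the right one and matches the paper's proof in spirit — use equidistribution and parallelepiped-equidistribution of $\gamma$-uniform factors, together with the finite-difference identity for degree-$d$ polynomials, to show $P$ is constant on each atom — but there are two genuine gaps.

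First, an off-by-one in the dimension of the parallelepiped. You write that since $P$ has degree $d$, the alternating sum $\sum_{S \subseteq [d]}(-1)^{|S|}P(x+h_S)$ over a $d$-dimensional cube vanishes identically. This is false: that sum is $\pm D_{h_1}\cdots D_{h_d}P(x)$, which is the (generically nonzero) leading multilinear form of $P$, not zero. The identity you need is the vanishing of the $(d+1)$-st finite difference, so one must work with $(d+1)$-dimensional parallelepipeds throughout (as the paper does), and the union bound is over $2^{d+1}$ vertices, not $2^d$.

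Second, your treatment of the atoms on which $P \neq \tP$ on a non-negligible fraction of points is not actually an argument. By \cref{lem:equidistribution} and Markov one finds that $1-O(\sqrt{\sigma_d})$ of atoms are ``almost good'' (meaning $P=\tP$ on all but an $O(\sqrt{\sigma_d})$-fraction of the atom), and on those atoms one can run the same-atom parallelepiped argument to upgrade ``almost'' to ``totally'' good. But for the remaining $O(\sqrt{\sigma_d})$ fraction of atoms — the genuinely bad ones — your union bound fails: a random parallelepiped through a point of a bad atom has no reason to avoid $E$ at its other vertices if those vertices stay in the same atom. The paper handles this with a separate step (the analogue of \cite{GT09} Lemma 5.2 for almost good atoms, and then \cref{claim:badatomsaregood} for bad ones): one shows, using \cref{lem:parallelequidistribution}, that for an arbitrary atom $A$ one can fix \emph{good} atoms $A_\omega$ for each $\omega\neq 0$ in $\{0,1\}^{d+1}$ and, for every $x\in A$, find $h$ with $x+\omega\cdot h\in A_\omega$ for all $\omega\neq 0$. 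The derivative identity then determines $P(x)$ from the (already-proved-constant) values of $\tP$ on the $A_\omega$, so $P$ is constant on $A$. Your ``a low-degree polynomial vanishing on all but a tiny fraction of an atom must vanish on all of it'' does not follow for free and in any case cannot get off the ground on bad atoms; the two-stage good/bad-atom argument is the actual content here and needs to be spelled out.

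Your reliance on a parallelepiped equidistribution lemma for $\gamma$-uniform factors (the content of \cref{sec:parallelequidistribution}, proved via \cref{lem:nearorthogonality} and Cauchy--Schwarz) is the correct modularization, and matches the paper.
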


\cref{lem:uniformrefinement} and \cref{prop:approxtocompute_new} 
are somewhat entangled: We will prove \cref{lem:uniformrefinement} by induction 
on the dimension vector of $\cF$. For the induction step, we check whether 
there is a linear combination of polynomials in $\cF$ that has large Gowers norm. 
We will use this to replace a polynomial $P$ from $\cF$ with a set of 
lower degree polynomials. To do this, we first approximate $P$ with a few lower 
degree polynomials $Q_1,...,Q_r$, then use the induction hypothesis to refine 
$\{Q_1,...,Q_r\}$ to a uniform factor $\{\tQ_1,...,\tQ_{r'}\}$ and use 
\cref{prop:approxtocompute_new} to conclude that $P$ is measurable 
in $\{\tQ_1,...,\tQ_{r'}\}$. 

\begin{proofof}{of \cref{lem:uniformrefinement}}
Let $\cF$ be a factor of degree $d$. Similar to the proof of 
\cref{lem:unbiasedrefinement} we will induct on dimension vector 
$(M_1,...,M_d)$. At all steps of the proof we may assume that all polynomials 
in $\cF$ are homogeneous, that is, all multinomials are of same degree. This 
is because otherwise, we may replace each polynomial $P_{i,j}$ with $i$ 
homogeneous polynomials that sum up to $P_{i,j}$. Moreover, as in the proof 
of \cref{lem:unbiasedrefinement}, without loss of generality we may 
assume that the polynomials in $\cF$ are linearly independent, since otherwise 
we can detect and remove linear dependencies in $\rO_{\dim(\cF)}(n^d)$ time. 

Given the above assumptions, the base case for $d=1$ is simple, a 
linearly independent homogeneous factor is $0$-uniform. Let $d>1$ and 
assume that $\cF$ is not $\gamma$-uniform, then there exists a set of 
coefficients $\{c_{i,j}\in \F\}_{1\leq i\leq d, 1\leq j\leq M_i}$ such that 
$$
\norm{e_\F\big(\sum_{i,j}c_{i,j}\cdot P_{i,j}\big)}_{U^k} \geq \gamma(\dim(\cF)),
$$ 
where $k=\deg(\sum_{i,j}c_{i,j}\cdot P_{i,j})$. We will use the following 
randomized algorithm to detect such a linear combination:

We will estimate $\norm{e_\F(\sum_{i,j}c_{i,j}\cdot P_{i,j})}_{U^k}^{2^k}$ 
for each of the $|\F|^{\dim(\cF)}$ linear combinations and check 
whether it is greater than $\frac{3\gamma(\dim(\cF))^{2^k}}{4}$. 
To do this, as in the proof of \cref{lem:unbiasedrefinement}, independently 
for each linear combination $\sum_{i,j}c_{i,j}\cdot P_{i,j}$ select 
$C$ sets of vectors $x^{(\ell)},y^{(\ell)}_1,...,y^{(\ell)}_k$ chosen uniformly 
at random from $\F^n$ for $1\leq \ell\leq C$, 
and compute 
$$
\left| \sum_{\ell \in [C]} e_\F\bigg( \sum_{I\subseteq [k]} (-1)^{|I|}\cdot \sum_{i,j}c_{i,j}\cdot P_{i,j}(x^{(\ell)}+ y^{(\ell)}_I) \bigg)\right|
$$

Choosing $C=O_{\dim(\cF)}\left(\frac{1}{\gamma(\dim(\cF))^2}\log(\frac{1}{\rho})\right)$, 
we can distinguish between Gowers norm $\leq \frac{\gamma(\cF)}{2^{1/2^k}}$ and 
Gowers norm $\geq \gamma(\cF)$, with $\rho'=\frac{\rho}{4|\F|^{\dim(\cF)}}$ probability 
of error.

Let $Q\defeq \sum_{i,j} c_{i,j}\cdot P_{i,j}$, with $\deg(Q)=k$, 
the detected linear combination for which the estimated 
$\norm{e_\F(Q)}_{U^k}^{2^k}$ is greater than $\frac{3\gamma(\dim(\cF))^{2^k}}{4}$. 
By a union bound on all the linear combinations, with probability at 
least $1-\frac{\rho}{4}$, $\norm{e_\F(Q)}_{U^k}^{2^k}\geq \frac{\gamma(\dim(\cF))^{2^k}}{2}$. Notice that 
$$
\bias(DQ) = \norm{e(Q)}_{U^k}^{2^k} \ge \frac{\gamma\left(\dim(\cF)\right)}{2},
$$
where $DQ:(\F^n)^k\rightarrow \F$ is defined as
$$
DQ(y_1,...,y_k)\defeq D_{y_1}D_{y_2}\cdots D_{y_k}Q(x).
$$
Let $\tsigma$ and $\tgamma$ be as in \cref{prop:approxtocompute_new}. By \cref{lem:algorithmicBV} we can find, with probability 
$1-\frac{\rho}{4}$, a set of polynomials $Q_1,...,Q_r$ of degree $k-1$ such that 
\begin{itemize}
\item $DQ$ is $\tsigma$-close to being measurable with respect to $Q_1,...,Q_r$. 
\item $r\leq \frac{8|\F|^5}{\gamma(\dim(\cF))^{2^{k+1}}\tsigma \rho}$.
\end{itemize}

By the induction hypothesis, with probability $1-\frac{\rho}{4}$, 
we can refine $\{Q_1,...,Q_r\}$ to a new factor $\{\tQ_1, ..., \tQ_{r'}\}$, 
which is $\tgamma$-uniform. It follows 
from \cref{prop:approxtocompute_new} that $DQ$ is in fact 
measurable in $\{\tQ_1, ..., \tQ_{r'}\}$. Since $|\F|>k$ we can 
write the following Taylor expansion
\begin{equation}\label{eq:divisionbykfact}
Q(x)= \frac{DQ(x,...,x)}{k!}+ R(x),
\end{equation}
where $R(x)$ is a polynomial of degree $\leq k-1$ which can be 
computed from $Q$. Thus we can replace a maximum degree polynomial 
$P_{i,j}$, that appears in $Q$, with polynomials $\tQ_1, ..., \tQ_{r'}$, 
and $R(x)$. The proof follows by using the induction hypothesis for 
parameters $\frac{\rho}{4}$ and $\gamma$.
\end{proofof}

\subsection{Equidistribution over parallelepipeds}
\label{sec:parallelequidistribution}
Since $\gamma$-uniform factors are also $\gamma$-unbiased, they 
automatically inherit the equidistribution property, \cref{lem:equidistribution}. 
In this section we shall prove an equidistribution property which was the 
reason we introduced uniform factors in the first place. To this 
end we will need the following lemma.

\begin{lemma}[Near orthogonality]\label[lemma]{lem:nearorthogonality}
For every decreasing function $\epsilon:\N\rightarrow \R^+$ and 
parameter $d\geq 1$, there is a decreasing function $\gamma:\N\rightarrow \R^+$ 
such that the following holds. Suppose that $\cF=\{P_{i,j}\}_{1\leq i \leq d, 1\leq j\leq M_i}$ 
is a  $\gamma$-uniform factor of degree $d$. Let $x\in \F^n$ be a fixed vector. 
For every integer $k$ and set of not all zero coefficients 
$\Lambda=\{\lambda_{i,j,\omega}\}_{1\leq i\leq d, 1\leq j\leq M_i, \omega\in \{0,1\}^k}$ define
$$
P_\Lambda(y_1...,y_k) \defeq \sum_{i\in [d], j\in [M_i]} \sum_{\omega\in \{0,1\}^k} \lambda_{i,j,\omega}\cdot P_{i,j}(x+ \omega \cdot y),
$$
where $y=(y_1,...,y_k)\in \left(\F^n\right)^k$. Then either we have that $P_\Lambda\equiv 0$ or
$$
\bias(P_\Lambda)<\epsilon(\dim(\cF))
$$
\end{lemma}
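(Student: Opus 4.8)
The plan is to reduce the claim about the multivariate polynomial $P_\Lambda(y_1,\ldots,y_k)$, whose coefficients live in the factor $\cF$ evaluated at shifts $x+\omega\cdot y$ for $\omega\in\{0,1\}^k$, to a statement about a single linear combination of the $P_{i,j}$ over $\F^n$, so that the $\gamma$-uniformity hypothesis (which controls Gowers norms, hence biases, of linear combinations of the $P_{i,j}$) can be applied directly. The first step is to expand each $P_{i,j}(x+\omega\cdot y)$ using the fact that the $P_{i,j}$ are polynomials of bounded degree: since $\deg(P_{i,j}) = i \leq d$, the shift $P_{i,j}(x+z)$ is a polynomial in $z$ of degree $i$ whose coefficients are (iterated) derivatives of $P_{i,j}$ evaluated at the fixed point $x$; substituting $z = \omega\cdot y = \sum_{t:\omega_t=1} y_t$ expresses $P_\Lambda$ as a polynomial in $y_1,\ldots,y_k$ whose monomials are products of at most $d$ of the $y_t$'s, and whose coefficients are $\F$-linear combinations of derivative polynomials $D_{h_1}\cdots D_{h_s}P_{i,j}$ for various directions (which are themselves among the $y_t$). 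I would then fix the top-degree part: let $\mu$ be the degree of $P_\Lambda$ as a polynomial in $y$, and isolate the coefficient of some monomial $y_1^{a_1}\cdots y_k^{a_k}$ of total degree $\mu$ that is not identically zero (this exists provided $P_\Lambda\not\equiv 0$).

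The key step is then: the nonzero top-degree coefficient, call it $R(\cdot)$, is itself an $\F$-linear combination $\sum_{i,j} \mu_{i,j}\, (D^{(\mu_{i,j})} P_{i,j})$ of derivatives of the $P_{i,j}$, where the derivative directions and the base point are among the $y_t$'s and the fixed $x$. Now I invoke $\gamma$-uniformity of $\cF$: by the Remark following Definition~\ref{dfn:uniformfactor}, $\gamma$-uniform implies $\gamma$-unbiased, and more to the point, uniformity of a linear combination of the $P_{i,j}$ controls the Gowers norm of $e_\F$ of that combination at every level, which via the standard chain $\|f\|_{U^s} \le \|f\|_{U^{s+1}}$ and the fact that biases of derivatives are $U^{s}$-norms, forces every derivative $D_{h_1}\cdots D_{h_\ell}(\sum c_{i,j}P_{i,j})$ to be unbiased whenever the combination has degree exceeding $\ell$, i.e.\ whenever it is not a constant. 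Hence $R$, being a nonconstant linear combination of such derivatives, is unbiased as a function of the remaining $y$-variables and $x$; averaging over the remaining variables and a Weyl-type / van der Corput argument (expanding $\bias(P_\Lambda)^{2^\mu}$ as a $\mu$-fold difference and bounding the inner average by $\bias$ of the top coefficient) then yields $\bias(P_\Lambda) < \epsilon(\dim(\cF))$, provided $\gamma$ is chosen small enough relative to $\epsilon$ and $d$ to absorb the $|\F|^{O(\dim(\cF))}$ losses from triangle inequalities over all choices of coefficient vectors and the $2^\mu$-th root.

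There is one genuine subtlety I would treat carefully: the derivative directions appearing in $R$ are \emph{not} generic elements of $\F^n$ but are exactly the variables $y_t$ being averaged over, so I must make sure the linear combination does not degenerate to a lower-degree (possibly constant, possibly zero) polynomial for a nonnegligible fraction of those $y_t$'s. This is handled by the observation that for polynomials of degree at most $d < |\F|$, the top homogeneous part of $P_{i,j}(x+z)$ in $z$ equals the top homogeneous part of $P_{i,j}$ itself (independent of $x$), so the coefficient of a top monomial $y_1^{a_1}\cdots y_k^{a_k}$ in $P_\Lambda$ is, up to the multinomial factor $\binom{\mu}{a_1,\ldots,a_k}$ which is nonzero since $|\F| > d \geq \mu$, a fixed linear combination of the degree-$\mu$ parts of the $P_{i,j}$ with coefficients $\sum_{\omega\supseteq\{t:a_t\ge 1\}}\lambda_{i,j,\omega}$; this linear combination has degree exactly $\mu$ as a polynomial in $\F^n$ unless it vanishes, and if it vanishes for all top monomials then one descends to the next degree layer. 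The main obstacle is precisely this bookkeeping — organizing the expansion so that the "top layer'' of $P_\Lambda$ is provably a genuine (nonzero, degree-$\mu$) linear combination of the $P_{i,j}$ to which uniformity applies — together with correctly propagating the quantitative dependence of $\gamma$ on $\epsilon$, $d$, and $\dim(\cF)$ through the $2^\mu$ nested averages. Everything else (the van der Corput / Cauchy–Schwarz differencing and the union bound over coefficient choices) is routine.
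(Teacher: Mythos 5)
Your plan shares the paper's skeleton (difference $P_\Lambda$, interpret the resulting bias as a Gowers norm of a linear combination of the $P_{i,j}$, apply $\gamma$-uniformity, close with iterated Cauchy--Schwarz, which is \cref{claim:repeatedcauchy}), but you difference along the wrong invariant. You organize $P_\Lambda$ by its degree in $y$ and try to isolate a top multi-degree coefficient; the paper organizes by the support of $\omega$. It first uses the discrete-derivative identity $\sum_{\omega'\preceq\omega}(-1)^{|\omega'|}P_{i,j}(x+\omega'\cdot y)=0$ (valid when $|\omega|>\deg P_{i,j}$) to reduce to the case $\deg P_{i,j}\ge|\omega|$ for every surviving term, then picks $\omega^*$ of maximal support size $r$, WLOG $\supp(\omega^*)=\{1,\dots,r\}$, and differences once in each of the directions $y_1,\dots,y_r$. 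Maximality annihilates every other $\omega$-term, so what survives is a derivative of a \emph{single} linear combination $\sum_{i,j}\lambda_{i,j,\omega^*}P_{i,j}$, nonzero by the choice of $\omega^*$ and the linear independence of the $P_{i,j}$ in a $\gamma$-uniform factor. Its bias is exactly $\smallnorm{e_\F(\sum_{i,j}\lambda_{i,j,\omega^*}P_{i,j})}_{U^r}^{2^r}$, and since the preliminary reduction guarantees $\deg\ge r$, Gowers-norm monotonicity plus $\gamma$-uniformity finish.

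The degree decomposition you propose has two concrete problems. First, the top-multi-degree-$(a_1,\dots,a_k)$ coefficient folds together all $\omega$'s with $\supp(\omega)\supseteq\{t:a_t\ge1\}$ into aggregate weights $\sum_\omega\lambda_{i,j,\omega}$, and these sums can cancel even when $\Lambda\ne 0$; so your $R$ need not be a nonzero linear combination of the $P_{i,j}$ to which $\gamma$-uniformity applies, and the ``descend to the next degree layer'' escape you gesture at is nontrivial and not worked out. Second, you invoke nonvanishing of $\binom{\mu}{a_1,\ldots,a_k}$, i.e.\ $|\F|>d$. But, as the note following \cref{lem:nearorthogonality} stresses, the lemma holds for all fields (and its low-characteristic analogue \cref{lem:equidistlowchar} is needed in \cref{sec:lowchar}), and the paper's support-based reduction is precisely what lets one avoid any division by factorials or multinomials. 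Replacing your degree decomposition with the paper's support decomposition, together with the preliminary $|\omega|\le\deg$ reduction, repairs both gaps.
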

\begin{note}
Notice that the statement of this lemma does not require the characteristic $|\F|$ to be large. Such near orthogonality result over systems of linear forms was proved for the large characteristic case, $d<|\F|$, by Hatami and Lovett~\cite{HL11}. Bhattacharyya, et. al.~\cite{BFHHL12} later proved such near orthogonality result for affine systems of linear forms without any assumption on the characteristic of the field. Our proof of \cref{lem:nearorthogonality} uses the derivative technique used in \cite{BFHHL12} except the fact that in our case the linear forms of interest are in $\{0,1\}^k$ makes the proof simpler. 
\end{note}
\begin{proofof}{\cref{lem:nearorthogonality}}
For a vector $\omega\in \{0,1\}^d$, let $|\omega|$ be equal to 
the size of its support. We will say $\omega'\preceq \omega$ if 
$\omega'_i\leq \omega_i$ for every $i$. We may assume that for 
each $i$, $j$, and $\omega$, $\deg(P_{i,j})=i \geq |\omega|$, 
because otherwise we may use the identity
$$
\sum_{\omega'\preceq \omega} (-1)^{|\omega'|}P_{i,j}(x+ \omega'\cdot y) = 0,
$$
to replace $P_{i,j}(x + \omega \cdot y)$ with $P_{i,j}(x+\omega'\cdot y)$'s where 
$|\omega'|<|\omega|$. Notice that $P_{\Lambda'}\equiv 0$ 
after this process is equivalent to $P_\Lambda\equiv 0$. 

Let $\omega^*$ be of maximum support among all $\omega$ for 
which there is an $i$ such that $\lambda_{i,\omega}$ is nonzero. 
Roughly speaking, we will derive $P_\Lambda$ in a way that terms 
of the form $\lambda_{i,j,\omega^*}\cdot P_{i,j}$ will be the only terms 
that can survive. Without loss of generality assume that $|\omega^*|=r$ 
and that $\{1,...,r\}$ is the support of $\omega^*$. We will consider 
$$
DP_\Lambda(y_1,...,y_k,h_1,...,h_r) \defeq D_{h_1}D_{h_2}\cdots D_{h_r} P_\Lambda(y_1,...,y_d),
$$
where $h_i\in (\F^n)^d$ is only supported on $y_i$'s entries. 
It follows from the maximality of $\omega^*$ that
$$
DP_\Lambda (y_1,...,y_k,h_1,...,h_r) = \sum_{i\in [d], j\in [M_i]} \lambda_{i,j,\omega^*} \cdot 
D_{h'_1}\cdots D_{h'_r} P_{i,j}(x+ \omega^* \cdot y),
$$
where each $h'_i$ is the restriction of $h_i$ to $y_i$'s entries, and therefore by a simple 
change of variables we get
\begin{equation}
\label{eq:biasofdp}
\bias(DP_\Lambda) = \norm{e\left( \sum_{i,j} \lambda_{i,j,\omega^*} \cdot P_{i,j}\right)}_{U^r}^{2^r},
\end{equation}
On the other hand we have the following claim which is a result of $r$ 
repeated applications of Cauchy-Schwarz inequality
\begin{claim}\label[claim]{claim:repeatedcauchy}
$$
\bias(P_\Lambda)^{2^r} \leq \bias(DP_\Lambda).
$$
\end{claim}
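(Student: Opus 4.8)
I would prove Claim~\ref{claim:repeatedcauchy} by induction on $r$, the case $r=0$ being the trivial identity $\bias(P_\Lambda)\le\bias(P_\Lambda)$. The whole content sits in a single-derivative statement: for \emph{any} function $g:(\F^n)^t\to\F$ and any fixed one of its coordinate blocks, if $D_h g$ denotes the additive derivative in the direction of that block (with the direction $h$ itself now a new variable ranging over that block, so $D_h g$ is a function of $t+1$ blocks), then
$$
\bias(g)^2 ~\le~ \bias(D_h g)\mper
$$
Granting this, I would set $f_0\defeq P_\Lambda$ and $f_j\defeq D_{h_j}f_{j-1}$ for $j=1,\dots,r$, with the $j$-th derivative taken in the direction supported on the $y_j$-block exactly as in the definition of $DP_\Lambda$, so that $f_r=DP_\Lambda$. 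Applying the single-derivative inequality at each step, raising the $j$-th inequality $\bias(f_{j-1})^2\le\bias(f_j)$ to the power $2^{r-j}$, and chaining gives
$$
\bias(P_\Lambda)^{2^r}=\bigl(\bias(f_0)^2\bigr)^{2^{r-1}}\le\bias(f_1)^{2^{r-1}}\le\cdots\le\bias(f_{r-1})^{2}\le\bias(f_r)=\bias(DP_\Lambda)\mcom
$$
which is the claim.

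\textbf{The single-derivative step.} Here I would split the variables of $g$ as $(z,w)$, where $z$ is the block to be differentiated and $w$ collects all the others, and write $\bias(g)=\bigl|\Ex{w}{\Ex{z}{\eF{g(z,w)}}}\bigr|$. Since $w$ is uniform on a product space, Cauchy--Schwarz (equivalently Jensen against the constant function $1$) gives $\bias(g)^2\le\Ex{w}{\bigl|\Ex{z}{\eF{g(z,w)}}\bigr|^2}$. Writing the inner $|\cdot|^2$ as $u\overline u$, introducing an independent copy $z'$ of $z$, and substituting $z'=z+h$ turns it into $\Ex{z,h}{\eF{g(z+h,w)-g(z,w)}}=\Ex{z,h}{\eF{D_h g(z,w)}}$; hence $\bias(g)^2\le\Ex{z,w,h}{\eF{D_h g(z,w)}}$. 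The right-hand side is an average of the nonnegative quantities $\bigl|\Ex{z}{\eF{g(z,w)}}\bigr|^2$ over $w$, so it is real and nonnegative, and therefore equals $\bigl|\Ex{z,w,h}{\eF{D_h g(z,w)}}\bigr|=\bias(D_h g)$, giving the asserted inequality. (Here one uses repeatedly that $\eF{c}$ has modulus $1$ for every $c\in\F$, so all the functions involved are bounded.)

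\textbf{Main obstacle.} I expect no genuine conceptual difficulty: this is precisely the repeated Cauchy--Schwarz underlying the monotonicity of the Gowers norms, and in combination with \eqref{eq:biasofdp} the displayed chain is exactly what lets one read off $\bias(P_\Lambda)^{2^r}\le\norm{\eF{\sum_{i,j}\lambda_{i,j,\omega^*}P_{i,j}}}_{U^r}^{2^r}$. The only things demanding care are bookkeeping: at each stage the freshly introduced direction variables $h_1,\dots,h_r$ must be folded into the argument list (and hence into the definition of $\bias$) of $f_j$; one must keep track of which block is being isolated at step $j$ versus averaged over; and one should note explicitly that each intermediate average $\Ex{}{\eF{D_{h_j}f_{j-1}}}$ is nonnegative before stripping the absolute value, so that ``$\bias$'' denotes the same quantity on both sides throughout, and the exponents telescope correctly to $2^r$.
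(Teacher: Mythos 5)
Your proposal is correct and takes essentially the same route as the paper: the paper also reduces to the single-derivative inequality $\bias(g)^2\le\bias(D_h g)$ (stated as ``it suffices to show'') and proves it via the same $(z,w)$-split Cauchy--Schwarz, with the iteration over $r$ and the accompanying exponent bookkeeping left implicit. Your write-up merely makes explicit the induction, the telescoping of exponents, and the nonnegativity of the intermediate averages --- all of which the paper suppresses.
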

\begin{proof}
It suffices to show that we have  
$$
\left|\E_{y_1,\ldots,y_k, h\in \F^n} e_\F\big(P_\Lambda(y_1+h, y_2,\ldots,y_k)-P_\Lambda(y_1,y_2,\ldots,y_k)\big)\right| \geq 
\left|\E_{y_1,\ldots,y_k} e_\F\big( P_{\Lambda}(y_1,\ldots,y_k)\big)\right|^2.
$$
This is a simple application of Cauchy Schwarz inequality
\begin{align*}
\left|\E_{y_1,\ldots,y_k, h} e_\F\big(P_\Lambda(y_1+h, y_2,\ldots,y_k)-P_\Lambda(y_1,\ldots,y_k)\big)\right| &= 
\E_{y_2,\ldots,y_k} \left| \E_{z} e_\F\big(P_\Lambda(z,y_2,\ldots,y_k)\big)\right|^2\\ &\geq 
\left| \E_{z,y_2,\ldots,y_k} e_\F(P_\Lambda(z,y_2,\ldots,y_k\in \F^n))\right|^2.
\end{align*}
\end{proof}
Now letting $\gamma(x):= \epsilon(x)$, the proof follows by (\ref{eq:biasofdp}), the definition of
$\gamma$-uniformity and \cref{claim:repeatedcauchy}. 
\end{proofof}

Let $\cF=\{P_{i,j}\}_{1\leq i \leq d, 1\leq j \leq M_i}$ be a polynomial 
factor of degree $d$, and suppose we have a parallelepiped 
$(x+ \omega\cdot y)_{\omega\in \{0,1\}^k}$, where $y=(y_1,...,y_k)\in \left(\F^n\right)^k$. 
The vector $(\cF(x+\omega\cdot y))_{\omega\in \{0,1\}^k}$ can have some dependencies, 
and we cannot expect them to be uniformly distributed similar to \cref{lem:equidistribution}. 
For example if $P$ is a linear function then $\sum_{\omega\in \{0,1\}^2} P(x+\omega\cdot y)\equiv 0$ for every $x$. Therefore the image of $\cF$ over a parallelepiped of dimension $k$ simply 
cannot take every possible value over $\F^{nk}$. 

\paragraph{Consistence.}
Call a vector $b=(b_{i, j ,\omega})_{1\leq i \leq d, 1\leq j \leq M_i, \omega\in \{0,1\}^k}$, where $b\in \F^{\dim(\cF)2^k}$, to be consistent with the factor $\cF$ if the following holds. For every set of coefficients $\Lambda=(\lambda_{i,j, \omega})_{1\leq i \leq d, 1\leq j \leq M_i, \omega\in \{0,1\}^k}$, 
$$
\sum_{i\in [d], j\in [M_i]} \sum_{\omega\in \{0,1\}^k} \lambda_{i,j,\omega}\cdot P_{i,j}(x+\omega\cdot y) \equiv 0,
$$
implies
$$
\sum_{i\in [d], j\in [M_i]} \sum_{\omega\in \{0,1\}^k} \lambda_{i,j,\omega} \cdot b_{i,j,\omega}=0.
$$

Let $\Sigma_{\Box}$ be the set of all vectors that are consistent 
with $\cF$. It is easy to verify that $\Sigma_{\Box}$ forms a vector space. The following lemma of \cite{GT09} computes the dimension of $\Sigma_\Box$.

\begin{lemma}[\cite{GT09}]\label[lemma]{lem:sigmabox}
Suppose that $k>d$. Then we have 
$$
\dim(\Sigma_\Box) = \sum_{i\in [d]} \sum_{0\leq j\leq i} {k \choose j}.
$$
\end{lemma}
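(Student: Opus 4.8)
The plan is to identify $\Sigma_\Box$ explicitly as the kernel of a natural linear map and then compute its dimension by a clean counting argument. First I would set up the right ambient picture. Fix generic $x, y_1, \dots, y_k \in \F^n$ and think of each polynomial $P_{i,j}$ of degree $i$ in terms of its homogeneous decomposition; since the factor is of degree $d$, each $P_{i,j}(x + \omega\cdot y)$ expands, via multilinearity of the relevant multilinear form associated to the top-degree part together with lower-degree contributions, into a polynomial in the symbols $x, y_1, \dots, y_k$. The key observation is that the linear relations $\sum_{i,j,\omega}\lambda_{i,j,\omega} P_{i,j}(x+\omega\cdot y) \equiv 0$ that can hold for \emph{all} $x,y$ (and hence define the orthogonal complement of $\Sigma_\Box$) are exactly the ones forced by the ``combinatorial'' identities among the evaluations $P(x+\omega\cdot y)$ — the identities that hold for an arbitrary polynomial $P$ of degree $i$ purely because of the degree bound, namely the vanishing of discrete derivatives of order exceeding $i$. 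Concretely, for a fixed $(i,j)$ with $\deg P_{i,j} = i$, the span of $\{P_{i,j}(x+\omega\cdot y) : \omega \in \{0,1\}^k\}$ modulo relations has dimension equal to the number of ``low-order'' patterns, and the relations are generated by $\sum_{\omega' \preceq \omega}(-1)^{|\omega'|}P_{i,j}(x+\omega'\cdot y) = 0$ for every $\omega$ with $|\omega| > i$ — this is the same identity already used in the proof of \cref{lem:nearorthogonality}.

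Next I would argue that these identities are the \emph{only} ones, which is where the hypothesis $k > d$ and the genericity of a high-degree ambient polynomial come in. For a single polynomial of degree exactly $i$ in sufficiently many variables $n$, the evaluations $\{P(x+\omega\cdot y)\}_{|\omega|\le i}$ are linearly independent as formal objects (one can exhibit $P$, e.g. a generic degree-$i$ polynomial, for which no nontrivial linear combination vanishes identically), so the space of relations among all $2^k$ evaluations is spanned precisely by the derivative identities indexed by $\omega$ with $i < |\omega| \le k$. Counting: for fixed $i$ the number of surviving independent evaluations is $\#\{\omega \in \{0,1\}^k : |\omega| \le i\} = \sum_{0\le j\le i}\binom{k}{j}$, hence the contribution of the $M_i$ degree-$i$ polynomials to $\dim(\Sigma_\Box)$ is $M_i \sum_{0\le j \le i}\binom{k}{j}$ — but wait, the statement as written has no $M_i$ factor, so in fact the lemma is implicitly normalizing to $M_i = 1$ for each $i$ (or the intended reading is the dimension per polynomial); I would match the paper's convention, giving $\dim(\Sigma_\Box) = \sum_{i\in[d]}\sum_{0\le j\le i}\binom{k}{j}$. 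The independence across different $i$ and different $j$ follows because distinct homogeneous degrees contribute to disjoint graded pieces of the polynomial ring in $x,y$, so there is no interaction, and the total dimension is the sum.

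The final step is to pass from ``relations among the $P_{i,j}(x+\omega\cdot y)$'' to ``$\dim(\Sigma_\Box)$''. By definition, $\Sigma_\Box$ is the annihilator (inside $\F^{\dim(\cF)2^k}$) of the space $\mathcal{R}$ of coefficient vectors $\Lambda$ giving identically-zero combinations; hence $\dim(\Sigma_\Box) = \dim(\cF)\cdot 2^k - \dim(\mathcal{R})$. By the analysis above, $\dim(\mathcal{R}) = \sum_i M_i\big(2^k - \sum_{0\le j\le i}\binom{k}{j}\big)$, and subtracting gives the claimed formula (again up to the $M_i$ bookkeeping matching the paper's normalization). I expect the main obstacle to be the independence claim — showing that the derivative identities generate \emph{all} linear relations and that there are no ``accidental'' relations coming from the specific arithmetic of $\F$ or from interactions between polynomials of different degrees. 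This is handled by working over the polynomial ring $\F[x, y_1,\dots,y_k]$ with $n$ large, exploiting the grading by total degree and by multidegree in the blocks $y_1,\dots,y_k$, so that the vanishing of a combination forces block-by-block vanishing, reducing everything to the single-polynomial case; that case is precisely the classical fact that the finite-difference operators $D_{h_1}\cdots D_{h_r}$ annihilate degree-$i$ polynomials iff $r > i$, with no nontrivial relation below that threshold.
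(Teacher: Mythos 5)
The paper does not actually prove this statement; it cites \cite{GT09} and uses it as a black box, so the comparison has to be against the Green--Tao argument you are reconstructing. Your overall skeleton is the right one (identify $\Sigma_\Box$ as the annihilator of the relation space, show the relations are generated by the derivative identities $\sum_{\omega'\preceq\omega}(-1)^{|\omega'|}P_{i,j}(x+\omega'\cdot y)=0$ for $|\omega|>i$, and count), and you are right that the displayed formula is missing a factor $M_i$ in each summand.

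There is, however, a genuine gap in how you justify that the derivative identities are \emph{all} the relations, and it is precisely the step you flag as the obstacle. First, the ``genericity'' argument proves the wrong thing: exhibiting one degree-$i$ polynomial with no extra relations only computes the dimension of the space of relations that hold \emph{universally} for all degree-$\le d$ factors, not the relation space of the \emph{given} $P_{i,j}$'s, which can be strictly larger. Second, and more importantly, without any hypothesis on $\cF$ the lemma is simply false: if, say, $P_{i,1}=P_{i,2}$ or more generally some nontrivial $\F$-linear combination of same-degree polynomials vanishes, then $\sum_j c_j P_{i,j}(x+\omega\cdot y)\equiv 0$ for every $\omega$ is an extra relation not of derivative type, and $\dim(\Sigma_\Box)$ drops. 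The statement implicitly carries a regularity (at minimum, linear independence) hypothesis on $\cF$ --- in GT09 this is the $F$-regularity assumption, and in this paper's application it is supplied by $\gamma$-uniformity --- and your proposal never invokes it. Your grading justification (``distinct homogeneous degrees contribute to disjoint graded pieces'') does not supply it either: $P_{i,j}(x+\omega\cdot y)$ has homogeneous components of every degree $\le i$ in $(x,y)$, so degrees do interact, and grading by degree says nothing at all about independence across $j$ within a fixed $i$. The correct argument filters a putative relation first by the support of $\omega$ (apply $\prod_{m\in S}D^{(y_m)}_{h_m}$ for $S$ of maximal size carrying a nonzero coefficient, then set the remaining $y$'s to zero) and then by degree in $x$, which, \emph{using the regularity hypothesis and} $|\F|>d$, forces the coefficients at that top level to vanish and allows descent; the single-polynomial independence you use at the bottom of the induction holds unconditionally for every polynomial of degree exactly $i$ over a field of characteristic $>i$, so no genericity is needed or appropriate there.
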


Now the proof of the equidistribution over parallelepipeds follows in a straightforward manner.

\begin{lemma}[Equidistribution of parallelepipeds]\label[lemma]{lem:parallelequidistribution}
Suppose that $\epsilon:\N\rightarrow \R^+$ is a decreasing 
function, and $k>d>0$ are integers. Suppose that $\cF=\{P_{i,j}\}_{1\leq i\leq d, 1\leq j\leq M_i}$ 
is a $\gamma_{\ref{lem:nearorthogonality}}$-uniform factor of degree $d$. Let $x\in \F^n$ 
be a fixed vector, $b=(b_{i,j,\omega})_{1\leq i\leq d, 1\leq j\leq M_i, \omega\in \{0,1\}^k}\in \F^{\dim(\cF)2^k}$ 
is consistent with $\cF$, and moreover $P_{i,j}(x)= b_{i,j,\emptyset}$ for all $i$ and $j$. Then we have

$$
\Pr_{x,y_1,...,y_k}\left[ P_{i,j}(x+ \omega\cdot y)= b_{i,j,\omega}\; \forall i\in [d], \forall j\in[M_i], \forall \omega\in\{0,1\}^k  \right] = (1\pm \epsilon) \frac{1}{|\Sigma_{\Box}|}.
$$
\end{lemma}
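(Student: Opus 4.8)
The plan is to expand the probability as a character sum and reduce the whole statement to the near-orthogonality estimate \cref{lem:nearorthogonality} (for the off-diagonal terms) together with the dimension count \cref{lem:sigmabox} (for the diagonal). Write $m=\dim(\cF)$ and expand each constraint as $\indicator{P_{i,j}(x+\omega\cdot y)=b_{i,j,\omega}}=\frac1{|\F|}\sum_{\lambda\in\F}e_\F\big(\lambda(P_{i,j}(x+\omega\cdot y)-b_{i,j,\omega})\big)$. Multiplying over all $(i,j,\omega)$, taking the expectation over $y=(y_1,\dots,y_k)$ and interchanging the sum with the expectation, the probability in the statement becomes
\[
\frac{1}{|\F|^{m2^k}}\sum_{\Lambda\in\F^{m2^k}} e_\F\Big(-\sum_{i,j,\omega}\lambda_{i,j,\omega}b_{i,j,\omega}\Big)\, \E_{y}[e_\F(P_\Lambda(y))],
\]
where $\Lambda=(\lambda_{i,j,\omega})$ and $P_\Lambda$ is exactly the polynomial defined in \cref{lem:nearorthogonality}. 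The hypothesis $P_{i,j}(x)=b_{i,j,\emptyset}$ is what lets us treat the $\omega=\emptyset$ coordinates as inert (their phase contribution cancels), so only the genuinely $y$-dependent part matters.

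Next I would apply \cref{lem:nearorthogonality}, with a decreasing function $\epsilon'$ (to be fixed below) in place of its $\epsilon$, to split the terms: for each nonzero $\Lambda$, either $P_\Lambda\equiv 0$ or $\bias(P_\Lambda)=\big|\E_y e_\F(P_\Lambda(y))\big|<\epsilon'(m)$. A term of the second kind contributes at most $\epsilon'(m)$ in absolute value, and there are at most $|\F|^{m2^k}$ of them, so their total contribution to the displayed sum is at most $\epsilon'(m)$. A term of the first kind has $\E_y[e_\F(P_\Lambda(y))]=1$; moreover ``$P_\Lambda\equiv 0$'' is precisely the hypothesis appearing in the definition of consistency of $b$ with $\cF$, so consistency forces $\sum_{i,j,\omega}\lambda_{i,j,\omega}b_{i,j,\omega}=0$ and the whole summand equals $+1$. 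The $\Lambda$'s of the first kind form a subspace of $\F^{m2^k}$ whose orthogonal complement is, by definition of consistency, exactly $\Sigma_\Box$; hence by \cref{lem:sigmabox} (which is where the assumption $k>d$ enters) its size is $|\F|^{m2^k}/|\Sigma_\Box|$.

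Combining the two contributions, the probability equals $\frac{1}{|\Sigma_\Box|}\pm\epsilon'(m)$. Because $|\Sigma_\Box|=|\F|^{\dim(\Sigma_\Box)}$ depends only on $d,k,|\F|$ and not on $m$ or $n$, it suffices to run \cref{lem:nearorthogonality} with the decreasing function $\epsilon'(t):=\epsilon(t)/|\Sigma_\Box|$; the resulting function is exactly the $\gamma_{\ref{lem:nearorthogonality}}$ appearing in the hypothesis, and the estimate becomes $(1\pm\epsilon)/|\Sigma_\Box|$, as claimed. Observe that no assumption on the characteristic of $\F$ is needed here, since \cref{lem:nearorthogonality} itself makes none.

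The main difficulty is organizational rather than analytic: one must be sure that the $\Lambda$'s producing a $+1$ summand are exactly the elements of $\Sigma_\Box^{\perp}$ --- equivalently, that $\gamma$-uniformity leaves no ``accidental'' linear dependency among the parallelepiped values $\big(P_{i,j}(x+\omega\cdot y)\big)_{\omega}$ beyond the degree-forced ones counted by \cref{lem:sigmabox} --- which is precisely what \cref{lem:nearorthogonality} guarantees, and one must keep the $\omega=\emptyset$ coordinates and the normalization bookkeeping consistent with the hypothesis $P_{i,j}(x)=b_{i,j,\emptyset}$.
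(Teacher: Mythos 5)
Your proof takes essentially the same route as the paper's: expand the probability as a character sum over coefficient tuples $\Lambda$, invoke \cref{lem:nearorthogonality} to bound the terms with $P_\Lambda\not\equiv 0$, and count the diagonal terms via the consistency of $b$ and \cref{lem:sigmabox}. The paper's proof is exactly this three-line character-sum computation (citing near-orthogonality for the off-diagonal bound and leaving the diagonal count implicit), so your write-up is a somewhat more explicit rendering of the same argument.

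One caution: the ``organizational'' worry you flag is real, but a bit sharper than you let on. \Cref{lem:nearorthogonality} gives a dichotomy --- for the \emph{fixed} $x$, either $P_\Lambda(y)\equiv 0$ as a function of $y$, or the bias is small --- but it does \emph{not} by itself identify the set $\{\Lambda: P_\Lambda\equiv 0\}$ with the subspace whose annihilator is $\Sigma_\Box$. The consistency relation defining $\Sigma_\Box$ is an identity in both $x$ and $y$, whereas the vanishing in the dichotomy is only an identity in $y$ with $x$ held fixed, so for special $x$ the former can be strictly contained in the latter (e.g.\ $\Lambda$ supported only on $\omega=\emptyset$ with $\sum\lambda_{i,j,\emptyset}P_{i,j}(x)=0$). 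Because you also have $P_{i,j}(x)=b_{i,j,\emptyset}$ and consistency, every such extra $\Lambda$ still yields a $+1$ summand, which is why the hypothesis you call ``inert'' matters; but it means the cardinality you end up counting is not automatically $|\F|^{m2^k}/|\Sigma_\Box|$, and the jump from the dichotomy to the exact count deserves an explicit sentence rather than being attributed to \cref{lem:nearorthogonality}. The paper's own proof glosses over the very same point, so this is a shared imprecision rather than a flaw unique to your argument.
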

\begin{proof}
Similar to the proof of \cref{lem:equidistribution} we have
\begin{align*}
\Pr_{y_1,...,y_k}\left[ P_{i,j}(x+ \omega\cdot y)= b_{i,j,\omega}\; \forall i,j,\omega \right] &= 
\E\left[ \prod_{i,j,\omega} \frac{1}{|\F|} \left(\sum_{\lambda_{i,j,\omega}=0}^{|\F|-1} e_\F\bigg(\lambda_{i,j,\omega} (P_{i,j}(x+ \omega\cdot y) - b_{i,j,\omega})\bigg)\right)\right]\\
&= \frac{1}{|\F|^{\dim(\cF)2^k}}\sum_{\Lambda} e_\F\bigg(\sum_{i,j,\omega} \lambda_{i,j,\omega} b_{i,j,\omega}\bigg)\E\left[ e_\F\bigg(\sum_{i,j,\omega} \lambda_{i,j,\omega}P_i(x+ \omega \cdot y)\bigg) \right]\\
&= \frac{1}{|\Sigma_\Box|}\pm \epsilon(\dim(\cF),
\end{align*}
where $i$ is always taken from $[d]$, $j$ from $[M_i]$, $\omega$ 
from $\{0,1\}^k$, and $\Lambda=(\lambda_{i,j,\omega})$ is a collection 
of coefficients from $\F$. The last inequality follows from \cref{lem:nearorthogonality}.
\end{proof}

\subsection{Proof of Proposition \ref{prop:approxtocompute_new}}
\label{sec:approxtocompute_new}
Here we will present how to adapt the proof of Proposition~5.1 from \cite{GT09} to our setting.

\begin{proofof}{of \cref{prop:approxtocompute_new}} 
Let $X$ be the set of points in $\F^n$ for which $P(x)=\tP(x)$. 
By assumption we know that $|X|\geq (1-\sigma_d)|\F|^n$. We will 
use $(d+1)$-dimensional parallelepipeds to prove that $P(x)=\tP(x)$ 
within the whole of most of the atoms of $\cF$, and then to prove 
that $P$ should be constant on each remaining atom as well. 

By \cref{lem:equidistribution} we have that for $1-O(\sqrt{\sigma_d})$ 
of the atoms $A$ of $\cF$ we have $\Pr_{x\in A}\left(P(x)=\tP(x) \right)\geq 1-O(\sqrt{\sigma_d})$; 
we will say that these atoms are \emph{almost good}. 
We will prove that almost good atoms are in fact, totally good. 

Suppose that $A=\cF^{-1}(t)$ is an almost good atom, where
$t=(t_{i,j})_{1\leq i \leq d, 1\leq j\leq M_i}$ is a vector in 
$\F^{\dim(\cF)}$. Let $A'\defeq A\cap X$ be the set of points in 
$A$ for which $P(x)=\tP(x)$. The following lemma proved in \cite{GT09} 
can be adapted to our setting by \cref{lem:equidistribution} 
and \cref{lem:parallelequidistribution}.

\begin{lemma}[Lemma 5.2 of \cite{GT09}]\label[lemma]{lem:goodatoms}
Suppose that $\sigma_d$ is a sufficiently small constant. Fix an $x\in A$. 
Then there is $h$ so that all of the vertices $x+\omega\cdot h$, 
$\omega\neq 0^{d+1}$, lie in $A'$. 
\end{lemma}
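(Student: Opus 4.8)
The plan is to realize the required $h$ as (the data defining) a random $(d{+}1)$-dimensional parallelepiped based at $x$: sample $h = (h_1,\dots,h_{d+1})\in(\F^n)^{d+1}$ uniformly, and show that with positive probability every nonzero vertex $x+\omega\cdot h$ lies simultaneously in the atom $A$ and in the good set $X$. Since the lemma only asks for the existence of one such $h$, a positive lower bound on this probability finishes the proof.

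Step one controls membership in $A$. Write $t=(t_{i,j})$, so that $x\in A=\cF^{-1}(t)$ says precisely $P_{i,j}(x)=t_{i,j}$. I would apply \cref{lem:parallelequidistribution} to the factor $\cF$ (which has degree $d-1$), with $k=d+1>d-1$, and with the constant label vector $b$ defined by $b_{i,j,\omega}:=t_{i,j}$ for every $\omega\in\{0,1\}^{d+1}$. Two things need checking: that $b$ is consistent with $\cF$, and that $b_{i,j,\emptyset}=P_{i,j}(x)$; the latter is immediate. For consistency, if $\sum_{i,j,\omega}\lambda_{i,j,\omega}P_{i,j}(x+\omega\cdot y)\equiv 0$ identically, then specializing $y=0$ and using that (after discarding linear dependencies, as done elsewhere in the paper) the $P_{i,j}$ are linearly independent forces $\sum_\omega\lambda_{i,j,\omega}=0$ for every pair $(i,j)$, whence $\sum_{i,j,\omega}\lambda_{i,j,\omega}b_{i,j,\omega}=\sum_{i,j}t_{i,j}\sum_\omega\lambda_{i,j,\omega}=0$. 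By \cref{lem:sigmabox} (applicable since $d+1>d-1$), $|\Sigma_\Box|$ is a constant depending only on $d$ and $|\F|$; choosing the uniformity parameter of $\cF$ small enough relative to $1/|\Sigma_\Box|$ — legitimate because \cref{prop:approxtocompute_new} lets us prescribe how fast $\gamma_{\ref{prop:approxtocompute_new}}$ decays in terms of $d$ — \cref{lem:parallelequidistribution} yields
\[
\Pr_h\big[\,\cF(x+\omega\cdot h)=t\ \text{ for all } \omega\in\{0,1\}^{d+1}\,\big]\;\geq\;\frac{1}{2|\Sigma_\Box|}\;=:\;c_d\;>\;0 .
\]
Call this event $E_1$; on $E_1$ all $2^{d+1}$ vertices of the parallelepiped lie in $A$.

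Step two controls membership in $X$. For a fixed $\omega\neq 0^{d+1}$, pick $i_0\in\mathrm{supp}(\omega)$; then $x+\omega\cdot h = x+h_{i_0}+\sum_{i\in\mathrm{supp}(\omega)\setminus\{i_0\}}h_i$ is uniformly distributed over $\F^n$ as $h$ varies, so $\Pr_h[x+\omega\cdot h\notin X]=|\F^n\setminus X|/|\F|^n\leq\sigma_d$ (almost-goodness of $A$ would give the sharper $O(\sqrt{\sigma_d})$, but the crude bound suffices). A union bound over the $2^{d+1}-1$ nonzero $\omega$ gives $\Pr_h[\,\exists\,\omega\neq 0^{d+1}:x+\omega\cdot h\notin X\,]\leq 2^{d+1}\sigma_d$; let $E_2$ be the complement. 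On $E_1\cap E_2$ every nonzero vertex lies in $A\cap X=A'$, and
\[
\Pr_h[E_1\cap E_2]\;\geq\;c_d-2^{d+1}\sigma_d\;>\;0
\]
provided $\sigma_d<c_d/2^{d+1}$ — which is exactly what "$\sigma_d$ a sufficiently small constant" means. Hence some $h$ with the claimed property exists.

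Because \cref{lem:parallelequidistribution} already packages the genuinely hard analytic content (resting in turn on the near-orthogonality estimate \cref{lem:nearorthogonality}), the only real point to get right is the parameter balancing in Step two: the constant density $c_d$ of admissible parallelepipeds must beat the failure probability $2^{d+1}\sigma_d$. This is what fixes the threshold for $\sigma_d$ and, through the required smallness of $1/|\Sigma_\Box|$, the decay rate demanded of $\gamma_{\ref{prop:approxtocompute_new}}$. The remaining work — verifying consistency of the constant label vector and the union bound — is routine.
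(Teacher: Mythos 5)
Your Step one is sound: the constant label vector $b_{i,j,\omega}:=t_{i,j}$ is consistent, $b_{i,j,\emptyset}=P_{i,j}(x)$, and \cref{lem:parallelequidistribution} (with $k=d+1$ applied to the degree-$(d-1)$ factor $\cF$) gives $\Pr_h[E_1]\ge 1/(2|\Sigma_\Box|)$ once $\gamma_{\ref{prop:approxtocompute_new}}$ decays fast enough. The gap is in Step two. The quantity $c_d:=1/(2|\Sigma_\Box|)$ is \emph{not} a constant depending only on $d$: $\Sigma_\Box$ is a subspace of $\F^{\dim(\cF)\cdot 2^{d+1}}$ whose dimension grows with $\dim(\cF)$; indeed $|\Sigma_\Box|\ge|\Sigma|=|\F|^{\dim(\cF)}$, since every constant labeling $b_{i,j,\omega}=s_{i,j}$ is consistent. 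Consequently your threshold $\sigma_d<c_d/2^{d+1}$ shrinks as $\dim(\cF)$ grows, whereas the lemma --- and its role inside \cref{prop:approxtocompute_new}, which in turn feeds the induction in \cref{lem:uniformrefinement} --- requires $\sigma_d$ to be a function of $d$ alone, fixed before the dimension of the regularized factor is known.

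Invoking almost-goodness of $A$ does not by itself repair this. Even the sharper bound $\Pr_h[x+\omega\cdot h\in A\setminus A']=O(\sqrt{\sigma_d})\cdot\Pr_h[x+\omega\cdot h\in A]=O(\sqrt{\sigma_d})/|\Sigma|$ cannot beat $1/|\Sigma_\Box|$ in a union bound on \emph{absolute} probabilities, since $|\Sigma_\Box|\gg|\Sigma|$ whenever $\cF$ is nontrivial. What is needed is a \emph{relative} estimate: for each nonzero $\omega$,
\[
\Pr_h\bigl[\text{all vertices in }A\text{ and }x+\omega\cdot h\in A\setminus A'\bigr]\;\le\;2^{-(d+1)}\cdot\Pr_h\bigl[\text{all vertices in }A\bigr],
\]
after which the union bound closes with a $\sigma_d$ depending on $d$ alone. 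Obtaining such a relative bound requires squaring, applying Cauchy--Schwarz over $z\in A\setminus A'$, and then estimating the probability that two $(d+1)$-dimensional cubes based at $x$ and sharing their $\omega$-vertex both have all vertices in $A$; this second-moment argument is exactly what the paper carries out in the low-characteristic setting inside the proof of \cref{lem:approxtocomputelowchar_new} (using \cref{lem:technicalprobabilities}), and it is the analytic content behind Lemma~5.2 of \cite{GT09} that a straight union bound cannot replicate.
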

 
Let $x$ be an arbitrary point in $A$, and let $h$ be as in \cref{lem:goodatoms}. 
Since $P$ is a degree $d$ polynomial, we have the following derivative identity
$$
P(x) = -\sum_{\omega\in \{0,1\}^{d+1}, \omega\neq 0} (-1)^{|\omega|} P(x+\omega\cdot h)= -\sum_{\omega\in \{0,1\}^{d+1}, \omega\neq 0} (-1)^{|\omega|} \tP(x+\omega\cdot h).
$$
Since $\tP$ is constant on $A$, this means that $P(x)= \tP(x)$. 
Since $x$ was an arbitrary element of $A$, this implies that $P$ is constant on $A$. 

We have proved that $1-\rO(\sqrt{\sigma_d})$ of the atoms of $\cF$ 
are totally good. Let $A$ be an arbitrary atom of $\cF$. The following claim 
will allow us to show that $A$ is totally good.

\begin{claim}
\label[claim]{claim:badatomsaregood}
Assume that $1-\rO(\sqrt{\sigma_d})$ of the atoms are totally good, where $\sigma_d\leq c2^{-d}$. 
Then for every atom $A$, there is a set of good atoms $A_\omega$ for $\omega\neq 0$ in 
$\{0,1\}^{d+1}$ such that for every $x\in A$ there exists $h=(h_1,...,h_{d+1})$ 
such that $x+\omega \cdot h \in A_\omega$. 
\end{claim}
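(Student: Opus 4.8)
The plan is to use the equidistribution over parallelepipeds (\cref{lem:parallelequidistribution}) to show that a random $(d+1)$-dimensional parallelepiped $(x+\omega\cdot h)_{\omega\in\{0,1\}^{d+1}}$ anchored at a fixed $x\in A$ hits the configuration space $\Sigma_\Box$ roughly uniformly, and then a union bound over the ``bad'' atoms will show that with positive probability all of the $2^{d+1}-1$ vertices $x+\omega\cdot h$ with $\omega\neq 0$ land in totally good atoms. First I would fix an arbitrary $x\in A$ with $\cF(x)=t$, and consider $h=(h_1,\dots,h_{d+1})$ chosen uniformly at random from $(\F^n)^{d+1}$. For each $\omega$, let $A_\omega$ be the (random) atom $\cF^{-1}\big(\cF(x+\omega\cdot h)\big)$; note $A_{0^{d+1}}=A$ automatically since the $\omega=0$ vertex is just $x$. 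The vector $b=(b_{i,j,\omega})$ with $b_{i,j,\omega}=P_{i,j}(x+\omega\cdot h)$ is, by construction, always consistent with $\cF$ and satisfies $b_{i,j,\emptyset}=P_{i,j}(x)=t_{i,j}$, so the hypotheses of \cref{lem:parallelequidistribution} are met for the fixed $x$.

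Next I would apply \cref{lem:parallelequidistribution} with a suitably small decreasing function $\epsilon$: for every consistent $b$ with $b_{\cdot,\cdot,\emptyset}=t$, the probability over $h$ that the parallelepiped realizes exactly $b$ is $(1\pm\epsilon)/\abs{\Sigma_\Box}$. Summing over all consistent $b$ whose $\omega$-th coordinate block lies in some \emph{bad} atom for at least one $\omega\neq 0$: since a $1-\rO(\sqrt{\sigma_d})$ fraction of atoms are totally good, and the marginal of the $\omega$-th block of a random consistent vector is (up to the $\epsilon$ error) uniform over the $\abs{\Sigma}$ atoms, the probability that a fixed $A_\omega$ is bad is at most $\rO(\sqrt{\sigma_d})+\epsilon(\dim(\cF))$. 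A union bound over the $2^{d+1}-1$ nonzero $\omega$ gives that the probability some $A_\omega$ ($\omega\neq 0$) is bad is at most $2^{d+1}\cdot\big(\rO(\sqrt{\sigma_d})+\epsilon(\dim(\cF))\big)$, which is strictly less than $1$ provided $\sigma_d\leq c2^{-d}$ for a small enough absolute constant $c$ and $\epsilon$ (hence $\gamma_{\ref{lem:nearorthogonality}}$) is chosen to decrease sufficiently rapidly. Hence there exists a choice of $h$ for which all vertices $x+\omega\cdot h$, $\omega\neq 0$, lie in good atoms; fix those atoms as the $A_\omega$.

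The remaining point is that this choice of $A_\omega$ must work \emph{simultaneously for every} $x\in A$, not just for one $x$ — the claim asserts a single family $\{A_\omega\}$ and then, for each $x\in A$, the existence of an $h$ routing $x$ into that family. Here I would use the fact that $\cF$ is constant on the atom $A$: the set of consistent configurations $b$ with $b_{\cdot,\cdot,\emptyset}=t$ depends only on $t=\cF(x)$, which is the same for all $x\in A$, and \cref{lem:parallelequidistribution} shows each such $b$ is attained with positive probability over $h$ for \emph{each fixed} $x$. So I would first pick, via the union-bound argument above applied to one fixed base point $x_0\in A$, a target configuration $b^\star$ that is consistent, agrees with $t$ on $\emptyset$, and has every $\omega\neq 0$ block landing in a good atom; define $A_\omega$ by that block of $b^\star$. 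Then for an arbitrary $x\in A$, \cref{lem:parallelequidistribution} with this same $b^\star$ gives probability $(1\pm\epsilon)/\abs{\Sigma_\Box}>0$ over $h$ that the parallelepiped at $x$ realizes $b^\star$, so such an $h$ exists. The main obstacle I anticipate is precisely this uniformity-over-$x$ issue together with bookkeeping the error terms so that the final failure probability $2^{d+1}\big(\rO(\sqrt{\sigma_d})+\epsilon\big)$ stays below $1$; everything else is a direct transcription of the Green–Tao argument (their Claim in the proof of Proposition~5.1) with \cref{lem:parallelequidistribution} substituted for their regularity-based equidistribution statement.
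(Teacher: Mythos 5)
Your proposal is correct and follows essentially the same route as the paper's proof: reduce to a single base point $x\in A$ via \cref{lem:parallelequidistribution} (your final paragraph), then union-bound over the $2^{d+1}-1$ nonzero vertices to get a positive-probability choice of $h$. The only cosmetic difference is that the paper gets the per-$\omega$ bad-atom bound directly from \cref{lem:equidistribution} (for fixed $\omega\neq 0$ and random $h$, the vertex $x+\omega\cdot h$ is itself uniform over $\F^n$), whereas you derive it by marginalizing \cref{lem:parallelequidistribution} over consistent configurations; both yield the same $\rO(\sqrt{\sigma_d})+\epsilon$ estimate.
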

\begin{proof}
Notice that by \cref{lem:parallelequidistribution} it suffices 
to prove this fact for one arbitrary point $x$ in $A$. We will pick 
$h_1,...,h_k$ uniformly at random from $\F^n$. By \cref{lem:equidistribution}, 
for each $\omega\neq 0$, the probability that $x+\omega\cdot h$ lies in a good 
atom is $1-O(\sqrt{\sigma_d})$. Choosing $\sigma_d\leq c2^{-d}$ for sufficiently 
small constant $c$, it follows by union bound that with positive probability 
there is a choice of $h=(h_1,...,h_{d+1})$ such that $x+\omega\cdot h$ is a good atom. 
\end{proof}

Existence of good atoms $A_\omega$ as above implies that $P$ has to 
be constant on $A$ as well, again using the fact that $P$ is a degree $d$ polynomial and therefore 
$$
P(x) = -\sum_{\omega\in \{0,1\}^{d+1}, \omega \neq 0} (-1)^{|\omega|}P(x+\omega\cdot h).
$$

This completes the proof of \cref{prop:approxtocompute_new}.
\end{proofof}

\section{Applications in High Characteristics}
In this section we show some applications of the uniformity notion, and our algorithm 
for uniform refinement of a factor. Throughout the section we shall assume that the 
field $\F$ has large characteristic, namely $|\F|$ is greater than the degree of the 
polynomials and the degree of the factors that we study. 

\subsection{Computing Polynomials with Large Gowers Norm}
Following is an immediate corollary of \cref{lem:algorithmicBV}, 
\cref{lem:query}, and \cref{prop:approxtocompute_new}. 

\begin{lemma}
\label[lemma]{lem:biasimpliesrankhighchar}
Suppose that an integer $d$ satisfies $0\leq d < |\F|$. Let $\delta, \sigma, \beta \in (0,1]$. 
There is a randomized algorithm that given query access to a polynomial 
$P:\F^n\rightarrow \F$ of degree $d$ such that $\bias(P)\geq \delta >0$, runs 
in $O_{\delta, \beta, \sigma}(n^d)$ and with probability $1-\beta$, returns 
a polynomial factor $\cF=\{ P_{i,j} \}_{1\leq i\leq d-1, 1\leq j\leq M_i}$ of 
degree $d-1$ and $\dim(\cF)= O_{\delta, \beta, \sigma}(1)$ such that $P$ is 
measurable in $\cF$, namely there is a function $\Gamma$ such that
$$
P(x)= \Gamma\left( \left(P_{i,j}(x)\right)_{1\leq i \leq d-1, 1\leq j \leq M_i}\right).
$$

Moreover for every query to the value of $\Gamma(\cdot)$ for a vector from $\F^{\dim(\cF)}$ 
the algorithm returns the correct answer with probability $1-\beta$.
\end{lemma}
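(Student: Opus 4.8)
The plan is to combine the three tools cited before the statement essentially as black boxes, with an induction (or rather a direct assembly) on degree. Given query access to $P$ of degree $d < |\F|$ with $\bias(P) \geq \delta$, I would first invoke the algorithmic Bogdanov--Viola lemma (\cref{lem:algorithmicBV}) with suitably chosen accuracy parameters $\tsigma, \tgamma$ coming from \cref{prop:approxtocompute_new}: this produces, with probability $1 - O(\beta)$, a function $\tP$ with $\Pr_x[P(x)\neq\tP(x)] \leq \tsigma$, polynomials $P_1,\ldots,P_C$ of degree $\leq d-1$, and a function $\Gamma_0$ with $\tP = \Gamma_0(P_1,\ldots,P_C)$, where $C = O_{\delta,\beta,\sigma}(1)$. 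Next I would apply the uniform refinement algorithm (\cref{lem:uniformrefinement}) to the factor $\{P_1,\ldots,P_C\}$ with the decreasing function $\tgamma$, obtaining with probability $1 - O(\beta)$ a $\tgamma$-uniform factor $\cF = \{P_{i,j}\}_{1 \leq i \leq d-1, 1 \leq j \leq M_i}$ of degree $d-1$, with $\dim(\cF) = O_{\delta,\beta,\sigma}(1)$, which is a genuine (exact) refinement of $\{P_1,\ldots,P_C\}$. In particular $\tP$ is $\cF$-measurable, so $\tP$ is an $\cF$-measurable function that agrees with $P$ except on a $\tsigma$-fraction of points.

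**The key step** is then to apply \cref{prop:approxtocompute_new}: since $P$ has degree $d$, $\cF$ is $\tgamma$-uniform of degree $d-1$ with $\tgamma$ decreasing suitably rapidly in $d$, $\tP$ is $\cF$-measurable, and $\Pr_x[P(x) \neq \tP(x)] \leq \tsigma \leq \sigma_{\ref{prop:approxtocompute_new}}(d)$, the proposition yields that $P$ itself is $\cF$-measurable, i.e. there is $\Gamma$ with $P = \Gamma(\cF)$. This is exactly the first conclusion. For the ``moreover'' part — query access to $\Gamma$ — I would invoke \cref{lem:query} applied to the $\tgamma$-uniform (hence $\tgamma$-unbiased) factor $\cF$ and the function $f = P$ (to which we have query access by hypothesis): for any $a \in \F^{\dim(\cF)}$, the algorithm of \cref{lem:query} samples $O(\norm{\cF}\log(1/\beta))$ random points, finds one landing in the atom $\cF^{-1}(a)$ (which is nonempty and of density $\geq \frac{1}{2\norm{\cF}}$ by equidistribution, provided $\tgamma$ decreases fast enough, which we may assume), queries $P$ there, and returns that value, which equals $\Gamma(a)$ with probability $\geq 1 - \beta$. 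All running times are $O(n^{d})$ since Bogdanov--Viola, uniform refinement, and the query algorithm each run in $O(n^{d})$ (reading $P$ already costs $\Omega(n^{d})$), and the number of invocations is a constant depending only on $\delta,\beta,\sigma$.

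**The main obstacle** — really the only nontrivial bookkeeping — is making sure the parameters chain together correctly: the uniformity parameter $\tgamma$ demanded by \cref{prop:approxtocompute_new} must be fed into \cref{lem:uniformrefinement}, the closeness $\tsigma = \sigma_{\ref{prop:approxtocompute_new}}(d)$ must be fed into \cref{lem:algorithmicBV} as its $\sigma$ parameter, and simultaneously $\tgamma$ must decrease fast enough for \cref{lem:query}'s equidistribution argument to give a nonempty atom. Since all of these are ``decreases suitably rapidly'' hypotheses and there are finitely many constraints, one can simply take $\tgamma$ to be the pointwise minimum of the finitely many required decreasing functions; there is no circularity because \cref{prop:approxtocompute_new} and \cref{lem:uniformrefinement} have already been established. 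A final minor point is that the $\tP$ returned by \cref{lem:algorithmicBV} is only defined implicitly as $\Gamma_0(P_1,\ldots,P_C)$ and we never need it explicitly: we only need that it is $\cF$-measurable, which is immediate once $\cF$ refines $\{P_1,\ldots,P_C\}$. Composing the three failure probabilities (each $O(\beta)$) and rescaling $\beta$ by a constant completes the argument.
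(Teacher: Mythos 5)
Your proposal is correct and follows essentially the same route as the paper's own proof: run the algorithmic Bogdanov--Viola lemma with closeness parameter tuned to $\sigma_{\ref{prop:approxtocompute_new}}(d)$, pass the resulting degree-$(d-1)$ factor through the uniform refinement algorithm (\cref{lem:uniformrefinement}) with $\tgamma$ taken as the pointwise minimum of the functions demanded by \cref{prop:approxtocompute_new} and \cref{lem:query}, invoke \cref{prop:approxtocompute_new} to upgrade approximation to exact measurability, and grant query access to $\Gamma$ via \cref{lem:query}. In fact your writeup is slightly cleaner than the paper's: you correctly use the fact that \cref{lem:uniformrefinement} returns an \emph{exact} refinement (so $\tP$ is automatically $\cF$-measurable), whereas the paper's prose momentarily slips into the approximate-refinement and ``strongly unbiased'' language of the low-characteristic section, but the underlying argument is identical.
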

\begin{proof}
Let $\sigma_d$ and $\gamma_1$ be as in \cref{prop:approxtocompute_new}, 
and let $\gamma_2$ be as in \cref{lem:query}. Set $\gamma_d:=\min\{\gamma_1,\gamma_2\}$. 
By \cref{lem:algorithmicBV} since $\bias(P)\geq \delta$, in time $O_{\sigma_d, \beta}(n^d)$, 
with probability $1-\frac{\beta}{2}$ we can find a polynomial factor $\cF$ of degree $d-1$  such that 
\begin{itemize}
\item $P$ is $\frac{\sigma_d}{2}$-close to a function of $\cF$
\item $\dim(\cF)\leq \frac{4|\F|^5}{\delta^2 \sigma_d \beta}$
\end{itemize}
By \cref{lem:uniformrefinement}, with probability $1-\frac{\beta}{2}$, 
we can find a $\gamma_d$-uniform factor $\cF'$ such that $\cF'$ is 
$\frac{\sigma_d}{2}$-close to being a refinement of $\cF$. 

Thus with probability greater than $1-\beta$, $P$ is $\sigma_d$-close 
to a function of the strongly $\gamma_d$-unbiased factor $\cF'$, and 
by \cref{lem:approxtocomputelowchar_new} 
$P$ is measurable in $\cF'$. The query access to $\Gamma$ can be granted 
by \cref{lem:query}.
\end{proof}

The following is an algorithmic version of Proposition~6.1 of \cite{GT09}, which states that 
given a polynomial $P$ such that $e_\F(P)$ has large Gowers norm, then we can compute $P$ by 
a few lower degree polynomials. 

\begin{prop}[Computing Polynomials with High Gowers Norm]
\label[prop]{prop:gowersimplieslowrank}
Suppose that $|\F|>d\geq 2$ and that $\delta, \beta\in (0,1]$. There is a 
randomized algorithm that given query access to a polynomial $P:\F^n \rightarrow \F$ 
of degree $d$ with $\norm{e_\F(P(x))}_{U^d}\geq \delta$, runs in $\rO_{\delta, \beta}(n^d)$ 
and with probability $1-\beta$, returns a polynomial factor $\cF$ of degree $d-1$ such that
\begin{itemize}
\item There is a function $\Gamma:\F^{\dim(\cF)}\rightarrow \F$ such that $P=\Gamma(\cF)$.
\item $\dim(\cF)=O_{d,\delta,\beta}(1)$.
\end{itemize}

Moreover, for every query to the value of $\Gamma(\cdot)$ for a vector from $\F^{\dim(\cF)}$, 
the algorithm returns the correct answer with probability $1-\beta$.
\end{prop}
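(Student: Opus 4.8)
The plan is to reduce \cref{prop:gowersimplieslowrank} to \cref{lem:biasimpliesrankhighchar} by converting the hypothesis ``$\norm{e_\F(P)}_{U^d} \geq \delta$'' into a hypothesis about the bias of the derivative of $P$, and then feeding that derivative into the machinery of the previous lemma. The key observation is the identity
$$
\norm{e_\F(P)}_{U^d}^{2^d} ~=~ \Ex{y_1,\ldots,y_d \in \F^n}{e_\F\big( D_{y_1}\cdots D_{y_d} P(x) \big)} ~=~ \bias(DP),
$$
where $DP(y_1,\ldots,y_d) \defeq D_{y_1}\cdots D_{y_d} P(x)$ is a polynomial in the $dn$ variables $y_1,\ldots,y_d$ (the dependence on $x$ disappears since $P$ has degree exactly $d$, or one may just fix $x=0$). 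Thus $\norm{e_\F(P)}_{U^d} \geq \delta$ gives $\bias(DP) \geq \delta^{2^d}$, and $DP$ is a polynomial of degree $d-1$ over $(\F^n)^d \cong \F^{dn}$ to which we can apply \cref{lem:biasimpliesrankhighchar}. Note we have query access to $DP$ since a single query $DP(y_1,\ldots,y_d)$ costs $2^d$ queries to $P$.

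First I would invoke \cref{lem:biasimpliesrankhighchar} with the polynomial $DP$, degree $d-1$, bias parameter $\delta' = \delta^{2^d}$, and suitable error parameters, to obtain (with probability $1-\beta/2$, in time $O_{d,\delta,\beta}((dn)^{d-1}) = O_{d,\delta,\beta}(n^{d-1})$) a $\gamma_{d-1}$-uniform factor $\cG = \{R_{i,j}\}$ of degree $d-2$ on $\F^{dn}$ such that $DP$ is measurable in $\cG$, together with query access to the corresponding $\Gamma_0$. This expresses $DP(y_1,\ldots,y_d) = \Gamma_0(\cG(y_1,\ldots,y_d))$. Next, since $|\F| > d$, I would use the Taylor-type identity (\ref{eq:divisionbykfact}) from the proof of \cref{lem:uniformrefinement}, namely
$$
P(x) ~=~ \frac{DP(x,x,\ldots,x)}{d!} ~+~ R(x),
$$
where $R$ is an \emph{explicit} polynomial of degree $\leq d-1$ computable from $P$ (we have the full description of $P$ after $O(n^d)$ queries). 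Setting $y_1 = \cdots = y_d = x$ in the representation of $DP$ gives $DP(x,\ldots,x) = \Gamma_0\big((R_{i,j}(x,\ldots,x))_{i,j}\big)$; each $R_{i,j}(x,\ldots,x)$ is a polynomial in $x$ of degree $\leq d-2$. Collecting these diagonal restrictions together with the (homogeneous pieces of the) explicit polynomial $R$ into a single factor $\cF$ of degree $d-1$ on $\F^n$, we get $P = \Gamma(\cF)$ for the appropriate $\Gamma$ built from $\Gamma_0$ and the division by $d!$; and $\dim(\cF) \leq \dim(\cG) + d = O_{d,\delta,\beta}(1)$. Query access to $\Gamma$ follows from query access to $\Gamma_0$ (the $d!$ division and adding back $R$ are explicit).

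The main obstacle I anticipate is bookkeeping around the reparametrization $DP$ rather than any genuine difficulty: $DP$ lives on $\F^{dn}$, so the atoms of $\cG$ and the query-access guarantees of \cref{lem:query}/\cref{lem:biasimpliesrankhighchar} are stated over that larger space, and I must check that restricting to the diagonal $\{(x,\ldots,x)\}$ preserves the measurability conclusion (it does, since $P = \Gamma(\cF)$ is then an \emph{exact} pointwise identity inherited from the exact identity $DP = \Gamma_0(\cG)$ — no equidistribution is needed for the reduction itself, only for producing the exact identity inside \cref{lem:biasimpliesrankhighchar}, which is already done). A secondary point requiring a line of care is that we need $P$ to have degree \emph{exactly} $d$ for $DP$ to be well-defined independently of the base point; if $\deg P < d$ the statement is vacuous or reduces to a lower-degree instance, so we may assume $\deg P = d$. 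I would also note that the factor $\cF$ produced is not claimed to be uniform — only that $P$ is measurable in it — which is all the proposition asserts, so no further refinement step is needed.
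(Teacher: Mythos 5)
Your approach is the same as the paper's: write $DP = \partial^d P$, observe that $\bias(\partial^d P) = \norm{e_\F(P)}_{U^d}^{2^d} \ge \delta^{2^d}$, apply \cref{lem:biasimpliesrankhighchar} to $\partial^d P$ over $\F^{dn}$, and recover $P$ via the Taylor identity $P(x) = \frac{1}{d!}\partial^d P(x,\ldots,x) + R(x)$. Two small remarks. First, a degree bookkeeping slip: $\partial^d P$ is a polynomial of total degree $d$ in $(y_1,\ldots,y_d)$ (for example $P(x)=x^d$ gives $\partial^d P = d!\, y_1\cdots y_d$), not $d-1$; correspondingly, the factor produced by \cref{lem:biasimpliesrankhighchar} has degree $d-1$, not $d-2$, and the running time of that call is $O_{d,\delta,\beta}((dn)^d) = O_{d,\delta,\beta}(n^d)$, not $O(n^{d-1})$. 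Since the diagonal restrictions then have degree $\le d-1$ and $R$ has degree $\le d-1$, the final factor $\cF$ is still of degree $d-1$, so nothing in the conclusion changes. Second, and more interestingly, you grant query access to $\Gamma$ by composing the query access to $\Gamma_0$ (already furnished by \cref{lem:biasimpliesrankhighchar}, which internally uses \cref{lem:query} on the uniform factor over $\F^{dn}$) with the explicit $d!$-division and the explicit polynomial $R$. The paper instead forms $\cF' = \widetilde{\cF}|_{\mathrm{diag}} \cup \{Q\}$, refines it by \cref{lem:uniformrefinement} to a $\gamma$-uniform factor $\cF$ over $\F^n$, and invokes \cref{lem:query} directly on that factor. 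Your route is a valid and slightly leaner way to get query access — it avoids one call to \cref{lem:uniformrefinement} — at the cost that the output factor $\cF$ is not itself uniform, which the proposition does not require (and indeed downstream uses such as \cref{thm:rmdecoding} re-refine $\cF$ to a uniform factor anyway). Your explicit check that measurability survives restriction to the diagonal because the identity $\partial^d P = \Gamma_0(\widetilde{\cF})$ is exact, not merely statistical, is exactly the right point to verify.
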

\begin{proof}
Write $\partial^dP(h_1,...,h_{d}) := D_{h_1}\cdots D_{h_{d}}P(x)$. Since $P$ 
has degree $d$, $\partial^d P$ does not depend on $x$. From the definition 
of the $U^d$ norm, we have
$$
\bias(\partial^d P)= \norm{e(P)}_{U^d}^{2^d} \geq \delta^{2^d}.
$$

Applying \cref{lem:biasimpliesrankhighchar} to $\partial^{d}P$, 
with probability $1-\frac{\beta}{2}$, we can find a factor $\widetilde{\cF}$ 
of degree $d-1$, such that $\dim(\widetilde{\cF})=O_{\delta, \beta, d}(1)$ 
and $\partial^dP$ is measurable in $\widetilde{\cF}$.

It is easy to check that since $|\F|>d$, we have the following Taylor expansion
$$
P(x) = \frac{1}{d!} \partial^{d} P(x,...,x) + Q(x),
$$
where $Q$ is a polynomial of degree $\leq d-1$. We can find explicit 
description of $P$, and therefore $Q$ in $O(n^d)$. Let $\cF':= \widetilde{\cF} \cup \{Q\}$, 
and let $\gamma$ be as in \cref{lem:query} for error parameter $\beta$. 
By \cref{lem:uniformrefinement}, with probability $1-\frac{\beta}{2}$, 
we can refine $\cF'$ to a $\gamma$-uniform factor $\cF$ of same degree $d-1$ 
with $\dim(\cF)=O_{\dim(\cF'), \beta, \gamma}(1)$. Now query access can be 
granted by \cref{lem:query}.
\end{proof}

\subsection{Decoding Reed-Muller Codes with Structured Noise}
\label{sec:reedmuller}
In this section we discuss the task of decoding Reed-Muller codes when the noise is ``structured''. 
Recall that the codewords in a Reed-Muller code of order $k$, correspond to evaluations of all 
degree $k$ polynomials in $n$ variables over $\F$. We will consider the task of decoding codewords 
when the noise is structured in the sense that the applied noise to the codeword itself is a polynomial 
of higher degree $d$. In other words, we are given a polynomial $P$ of degree $d>k$ with the 
promise that it is ``close'' to an unknown degree $k$ polynomial, and the task is to find a degree $k$ polynomial $Q$ 
that is somewhat close to $P$. The assumption of $d>k$ is made, because the case $d\leq k$ is trivial.

\begin{theorem}[Reed-Muller Decoding]\label{thm:rmdecoding}
Suppose that $d> k>0$ are integers, and let $\epsilon\in (0,1]$ and 
$\beta\in (0,1]$ be parameters. There is $\delta(\epsilon,d,k)>0$ and a 
randomized algorithm that, given query access to a polynomial $P:\F^n\rightarrow \F$ 
of degree $d$, with the promise that there exists a polynomial $Q$ of 
degree $k$ such that $\Pr(P(x)=Q(x))\geq \frac{1}{|\F|}+\epsilon$, runs in $O(n^d)$, 
and with probability $1-\beta$ returns a polynomial $\tQ$ of degree $k$ such that
$$
\Pr(P(x)\neq \tQ(x))\geq \frac{1}{|\F|}+ \delta.
$$
\end{theorem}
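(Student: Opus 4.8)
The plan is to turn the agreement promise into a Gowers-norm statement about $P$ itself, run an algorithmic form of \cref{thm:gt09} to pull out a degree-$\le k$ polynomial correlating with a scaling of $P$, and then convert correlation back into agreement. \emph{Step 1 (agreement $\Rightarrow$ Gowers norm).} If $\Pr_x(P(x)=Q(x))\ge\frac{1}{|\F|}+\epsilon$, set $R=P-Q$, which has degree $d$ since $d>k$. Averaging the identity $\mathbbm{1}_{R(x)=0}=\frac{1}{|\F|}\sum_{s\in\F}e_\F(sR(x))$ over $x$ gives $\sum_{s\neq 0}\bias(sR)\ge |\F|\cdot\Pr_x(R(x)=0)-1\ge\epsilon|\F|$, so some $t\neq 0$ has $\bias(t(P-Q))=\bias(tR)\ge\epsilon$. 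Since $tQ$ has degree $\le k$, the phase $e_\F(tQ)$ is a polynomial phase of degree $\le k$ and the $U^{k+1}$ norm is invariant under multiplication by it; combined with the monotonicity $\norm{g}_{U^1}\le\norm{g}_{U^{k+1}}$ this yields $\norm{e_\F(tP)}_{U^{k+1}}=\norm{e_\F(tR)}_{U^{k+1}}\ge\bias(tR)\ge\epsilon$. The algorithm does not know $t$, but $|\F|$ is a constant, so it can estimate $\norm{e_\F(sP)}_{U^{k+1}}$ for every $s\neq 0$ by sampling random $(k{+}1)$-dimensional parallelepipeds (as in the Gowers-norm estimation inside the proof of \cref{lem:uniformrefinement}) and retain a good $t$ — or, more crudely, run Steps 2--3 for all $|\F|-1$ choices of $s$ and output the best final candidate.

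\emph{Step 2 (algorithmic \cref{thm:gt09}).} The heart of the proof is: given query access to $P':=tP$, of degree $d<|\F|$ with $\norm{e_\F(P')}_{U^{k+1}}\ge\epsilon$, find a degree-$\le k$ polynomial $Q'$ with $\abs{\E_x e_\F(P'(x)-Q'(x))}\ge\eta_0$ for some $\eta_0=\eta_0(\epsilon,d,k,|\F|)>0$. By monotonicity $\norm{e_\F(P')}_{U^d}\ge\norm{e_\F(P')}_{U^{k+1}}\ge\epsilon$ and $\deg P'=d\ge 2$, so \cref{prop:gowersimplieslowrank} (with its internal uniformity parameter chosen small enough) produces, with high probability and in time $O(n^d)$, a $\gamma$-uniform factor $\cF$ of degree $d-1$ with $\dim(\cF)=O(1)$ and a $\Gamma$ with $P'=\Gamma(\cF)$. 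Now I would follow the Green--Tao proof of \cref{thm:gt09} over the bounded configuration space of $\cF$: expand $\norm{e_\F(P')}_{U^{k+1}}^{2^{k+1}}$ as the average over $x,y_1,\dots,y_{k+1}$ of $e_\F\big(\sum_{\omega\in\{0,1\}^{k+1}}(-1)^{\abs{\omega}}\Gamma(\cF(x+\omega\cdot y))\big)$; using near-orthogonality and equidistribution of uniform factors over parallelepipeds (\cref{lem:nearorthogonality}, \cref{lem:parallelequidistribution}) and the structure of the consistent configurations (\cref{lem:sigmabox}), the coordinates of $\cF$ of degree $>k$ are jointly near-uniform over all $2^{k+1}$ corners and average out, leaving a generalized Gowers inner product of $e_\F\circ\Gamma$ over the degree-$\le k$ sub-factor that is $\ge\epsilon^{2^{k+1}}-o(1)$; the corresponding inverse statement over this constant-size configuration space (legitimate because $|\F|>d\ge k+1$) forces correlation with a linear character, i.e. there are coefficients $\{c_{i,j}\}_{\deg(P_{i,j})\le k}$ with $Q':=\sum_{\deg(P_{i,j})\le k}c_{i,j}P_{i,j}$, of degree $\le k$, satisfying $\abs{\E_x e_\F(P'(x)-Q'(x))}\ge\eta_0$. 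The algorithmic part is then easy: since $\cF$ has $O(1)$ polynomials, enumerate all $|\F|^{O(1)}$ coefficient vectors, estimate $\abs{\E_x e_\F(P'(x)-Q'(x))}$ for each by sampling $x$ (evaluating $P'$ through its query access and $Q'$ explicitly, each in $O(n^d)$), and keep any $Q'$ whose estimate exceeds $\eta_0/2$.

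\emph{Step 3 (Gowers norm $\Rightarrow$ agreement; bookkeeping).} Given $Q'$ with $\abs{\E_x e_\F(P'(x)-Q'(x))}\ge\eta_0/2$, put $p_c:=\Pr_x(P'(x)-Q'(x)=c)$; since $\sum_c e_\F(c)=0$, $\eta_0/2\le\abs{\sum_c(p_c-\tfrac{1}{|\F|})e_\F(c)}\le\sum_c\abs{p_c-\tfrac{1}{|\F|}}=2\sum_{c:\,p_c>1/|\F|}(p_c-\tfrac{1}{|\F|})\le 2|\F|\max_c(p_c-\tfrac{1}{|\F|})$, so some $c^*$ has $p_{c^*}\ge\frac{1}{|\F|}+\frac{\eta_0}{4|\F|}$. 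The algorithm estimates $\Pr_x(P'(x)-Q'(x)=c)$ for all $c\in\F$ by sampling, takes $c^*=\argmax_c$, and outputs $\tQ:=t^{-1}(Q'+c^*)$, which has degree $\le k$ and satisfies $\Pr_x(P(x)=\tQ(x))=\Pr_x(P'(x)-Q'(x)=c^*)\ge\frac{1}{|\F|}+\delta$ with, say, $\delta:=\frac{\eta_0}{8|\F|}$ after absorbing sampling error. Every step runs in $O(n^d)$ — reading and evaluating degree-$d$ polynomials dominates, and all combinatorial objects are of constant size — and a union bound over the $O(1)$ randomized sub-steps, each run with polynomially many samples, keeps the total failure probability below $\beta$.

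I expect the main obstacle to be the extraction inside Step 2: proving that a large $U^{k+1}$ norm of $\Gamma(\cF)$, for a merely \emph{uniform} (as opposed to regular) factor $\cF$ of degree $d-1$, already forces correlation with a degree-$\le k$ linear combination of the factor polynomials. This amounts to re-running the relevant portion of the Green--Tao inverse-theorem argument with the weaker uniformity notion, which is exactly what \cref{lem:nearorthogonality}, \cref{lem:parallelequidistribution} and \cref{prop:approxtocompute_new} are designed to support, and it is also the one place where the hypothesis $|\F|>d$ is genuinely needed (beyond its use inside \cref{prop:gowersimplieslowrank}). The remaining steps are routine: the agreement/Gowers-norm dictionary is elementary harmonic analysis over $\F$, and everything algorithmic reduces to empirically estimating biases and Gowers norms and brute-forcing a constant-size search space.
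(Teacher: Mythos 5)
Your Steps 1 and 3 track the paper's proof closely: Step 1 is Claim~\ref{claim:distancetocorrelation} together with equation~(\ref{eq:gowerscorr}) (you derive the Gowers-norm bound via phase-multiplication invariance rather than the Gowers Cauchy--Schwarz form used in the paper, but both are correct and you may take $U^{k+1}$ or $U^d$ interchangeably here), and Step 3 is exactly the final claim of the paper's proof. Step 2 also starts the same way --- apply \cref{prop:gowersimplieslowrank} and then \cref{lem:uniformrefinement} to write $e_\F(tP)$ as $e_\F(\Gamma(\cF))$ for a $\gamma$-uniform $\cF$ of bounded dimension --- and your algorithmic extraction at the end of Step 2 (brute force over the $|\F|^{O(1)}$ linear combinations of $\cF$, restricted to those of degree $\le k$) is a legitimate and arguably simpler alternative to the paper's Goldreich--Levin step.

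The genuine gap is in the analytic middle of Step 2, which you correctly flag as the main obstacle but then resolve only at the level of intuition. Your sketch (``expand $\norm{e_\F(P')}_{U^{k+1}}^{2^{k+1}}$ over parallelepipeds, argue that the degree-$>k$ coordinates of $\cF$ are jointly near-uniform over the corners and average out, then invoke an inverse statement over the configuration space'') amounts to re-running the Green--Tao argument internally, and you cite \cref{lem:nearorthogonality}, \cref{lem:parallelequidistribution}, and \cref{prop:approxtocompute_new} as enablers. Two issues: first, \cref{prop:approxtocompute_new} is not the right tool --- it converts approximate measurability into exact measurability and plays no role in extracting a degree-$\le k$ correlate --- so it does not support the move you want. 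Second, and more importantly, this Green--Tao-style averaging-out argument is not carried out, and making it rigorous is considerably messier than what the paper actually does. The paper's Claim~\ref{claim:largecoeff} sidesteps the parallelepiped expansion entirely: it Fourier-decomposes $e_\F(\Gamma)$ to get $f=\sum_i c_i\, e_\F(Q_i)$ over all linear combinations $Q_i$ of $\cF$, splits $f=g+h$ into the degree-$\le k$ part $g$ and the degree-$>k$ part $h$, bounds $\norm{h}_{U^{k+1}}$ directly by the triangle inequality plus the $\gamma$-uniformity of $\cF$ (each high-degree $e_\F(Q_i)$ has tiny $U^{k+1}$ norm), deduces $\norm{g}_{U^{k+1}}\ge\epsilon/2$, and then uses $\norm{g}_{U^{k+1}}^{2^{k+1}}\le\norm{g}_2^2\norm{g}_\infty^{2^{k+1}-2}$ together with near-orthogonality (\cref{lem:nearorthogonality}) to extract a large $c_i$ with $\deg Q_i\le k$. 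You should replace your parallelepiped sketch with this Fourier-decomposition argument --- it is shorter, uses only the lemmas that are actually applicable, and is what makes the $\gamma$-uniformity notion (rather than full $F$-regularity) suffice here.
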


Before presenting the proof of \cref{thm:rmdecoding} we will look at the relations between Hamming distance of $P$ and $Q$, correlation between $e_\F(P)$ and $e_\F(Q)$, and Gowers norm of $e_\F(P)$. The following claim relates the Hamming distance of two polynomials 
$P$ and $Q$ to the correlation between $e_\F(P)$ and $e_\F(Q)$. 
\begin{claim}
\label[claim]{claim:distancetocorrelation}
Suppose that $\epsilon\in (0,1]$, and let $P,Q:\F^n \rightarrow \F$ be two functions 
such that $\Pr_x(P(x)=Q(x)) \geq 1/|\F|+\epsilon$. Then there exists a nonzero $t\in \F$ for which
$$
\big|\langle e_\F(t\cdot P), e_\F(t\cdot Q)\rangle\big| = \left| \E_x e_\F\big(t\cdot (P(x)-Q(x))\big) \right|\geq \epsilon.
$$
\end{claim}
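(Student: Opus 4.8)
The plan is to expand the indicator of the event $\{P(x)=Q(x)\}$ in the characters of $\F$ and then run a pigeonhole argument over the nonzero frequencies.

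First set $R \defeq P-Q : \F^n \to \F$, so that the hypothesis reads $\Pr_x(R(x)=0) \geq \tfrac{1}{|\F|}+\epsilon$, and note that for every $t \in \F$ we have $\langle e_\F(t\cdot P), e_\F(t\cdot Q)\rangle = \E_x\, e_\F\big(t\cdot(P(x)-Q(x))\big) = \E_x\, e_\F(tR(x))$, so it suffices to produce a nonzero $t$ with $|\E_x\, e_\F(tR(x))| \geq \epsilon$. Using the standard identity $\indicator{a=0} = \tfrac{1}{|\F|}\sum_{t\in\F} e_\F(ta)$, valid for all $a\in\F$, and taking expectation over $x$,
$$
\Pr_x(R(x)=0) \;=\; \frac{1}{|\F|}\sum_{t\in\F}\E_x\, e_\F(tR(x)) \;=\; \frac{1}{|\F|} + \frac{1}{|\F|}\sum_{t\neq 0}\E_x\, e_\F(tR(x)),
$$
where we have isolated the $t=0$ term, which contributes $\E_x\, e_\F(0)=1$.

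Rearranging and invoking the hypothesis gives $\sum_{t\neq 0}\E_x\, e_\F(tR(x)) \geq |\F|\,\epsilon$. The full sum on the left is a real number (it equals $|\F|\,\Pr_x(R(x)=0)-1$), so we may replace each summand by its real part, and then bound each real part by the corresponding absolute value, obtaining $\sum_{t\neq 0}\big|\E_x\, e_\F(tR(x))\big| \geq |\F|\,\epsilon$. Since the sum runs over the $|\F|-1$ nonzero elements of $\F$, by averaging there is some nonzero $t$ with $\big|\E_x\, e_\F(tR(x))\big| \geq \tfrac{|\F|}{|\F|-1}\,\epsilon \geq \epsilon$, which is precisely the asserted inequality.

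The argument is entirely routine and I do not anticipate any real obstacle; the only point deserving a word of care is that the individual Fourier coefficients $\E_x\, e_\F(tR(x))$ need not be real, so one must pass to real parts (equivalently use $|z|\geq \mathrm{Re}(z)$) before applying the pigeonhole step.
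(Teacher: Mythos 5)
Your proof is correct and follows essentially the same route as the paper: expand the indicator $\indicator{P(x)=Q(x)}$ over characters, peel off the $t=0$ term, and average over the $|\F|-1$ nonzero $t$. The only difference is that you explicitly record the passage to real parts before the pigeonhole step (a point the paper's proof leaves implicit), which is a worthwhile clarification but not a change in approach.
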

\begin{proof}
We have
\begin{align*}
\frac{1}{|\F|}+ \epsilon\leq \Pr_{x\in \F^n}\big( P(x)=Q(x) \big) &= \E_{x\in \F^n}\big[ 1_{P(x)=Q(x)}\big] \\ &=
\E_{x\in \F^n}\left[\frac{1}{|\F|} \sum_{t\in \F} e_\F\big(t\cdot(P(x)-Q(x))\big )\right]\\ &=
\frac{1}{|\F|} \sum_{c\in \F} \langle e_\F(t\cdot P), e_\F(t\cdot Q)\rangle\\ &=
\frac{1}{|\F|}\left(1 + \sum_{t\in \F\backslash \{0\}} \ip{e_\F(t\cdot P), e_\F(t\cdot Q)} \right),
\end{align*}
and thus there is $t\in \F\backslash \{0\}$ for which $ \big|\ip{e_\F(t\cdot P), e_\F(t\cdot Q)} \big| \geq \epsilon$. 
\end{proof}

The following lemma which was first proved in~\cite{Gow01} is a key property of Gowers norms. 
\begin{lemma}[Gowers Cauchy-Schwarz]
\label[lemma]{lem:GowersCauchySchwarz}
Let $G$ be a finite Abelian group, and consider a family of functions $f_S:G \rightarrow \mathbb{C}$, where $S \subseteq [k]$. Then
\begin{equation}
\left| \E_{X,Y_1,\ldots,Y_d\in \F^n} \left[\prod_{S \subseteq [d]} \mathcal{C}^{d-|S|} f_S(X + \sum_{i \in S} Y_i) \right]\right|\le \prod_{S \subseteq [d]} \norm{f_S}_{U^d},
\end{equation}
\end{lemma}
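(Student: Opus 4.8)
The plan is to prove the inequality by induction on $d$, peeling off one variable $Y_i$ at a time using the ordinary Cauchy–Schwarz inequality; this is exactly the ``repeated Cauchy–Schwarz'' technique that appears in the proof of \cref{claim:repeatedcauchy} above, now carried out with distinct functions $f_S$ attached to each subset $S$ rather than a single function. First I would set up notation: write $G = \F^n$ and, for a fixed choice of $Y_1,\ldots,Y_{d-1}$, group the $2^d$ factors according to whether $d \in S$. Writing $S = S' \cup \{d\}$ or $S = S'$ with $S' \subseteq [d-1]$, the product over $S \subseteq [d]$ factors as a product of two ``halves'': one half involving $f_{S'}(X + \sum_{i \in S'} Y_i)$ and one half involving $f_{S' \cup \{d\}}(X + Y_d + \sum_{i \in S'} Y_i)$. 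Averaging over $X$ and $Y_d$ and substituting $Z = X$, $Z' = X + Y_d$, the average becomes an inner product of the form $\E_{Z, Z'}[\,g(Z)\, \overline{h(Z')}\,]$ — more precisely, after taking absolute values, the Cauchy–Schwarz inequality in the variable that splits $Z$ from $Z'$ bounds this by the geometric mean of two averages, one in which every $f_{S'\cup\{d\}}$ is replaced by $f_{S'}$, and one in which every $f_{S'}$ is replaced by $f_{S'\cup\{d\}}$.

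The key step is then to observe that each of these two resulting expressions is itself an instance of the \emph{same} multilinear average, but now over the group $G$ with only $d-1$ direction variables $Y_1,\ldots,Y_{d-1}$, and with a family of $2^{d-1}$ functions indexed by $S' \subseteq [d-1]$: in the first expression the function at $S'$ is $f_{S'}$, and in the second it is $f_{S'\cup\{d\}}$. (One must also track the conjugation operators $\mathcal{C}^{d-|S|}$; since $|S'\cup\{d\}| = |S'|+1$, the power of $\mathcal{C}$ on the $d$-coordinate factors is one less, which is exactly what is needed so that each half is a genuine $(d-1)$-dimensional Gowers inner product with the correct conjugation pattern.) Applying the induction hypothesis to each half bounds the first by $\prod_{S' \subseteq [d-1]} \norm{f_{S'}}_{U^{d-1}}$ and the second by $\prod_{S'\subseteq[d-1]} \norm{f_{S'\cup\{d\}}}_{U^{d-1}}$. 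Taking the geometric mean and then raising to the appropriate power, together with the recursive definition of the Gowers norm $\norm{f}_{U^d}^{2^{d-1}} = \E_{Y}\!\big[ \norm{\Delta_Y f}_{U^{d-1}}^{2^{d-1}} \big]$ (equivalently, one more Cauchy–Schwarz in the outer expectation over $Y_1,\ldots,Y_{d-1}$), collapses the product of $(d-1)$-norms into the product of $d$-norms $\prod_{S\subseteq[d]} \norm{f_S}_{U^d}$.

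The base case $d=1$ is the statement $|\E_{X,Y}[ f_\emptyset(X)\, \overline{f_{\{1\}}(X+Y)}]| = |\E_X f_\emptyset(X)| \cdot |\E_X f_{\{1\}}(X)| \le \norm{f_\emptyset}_{U^1} \norm{f_{\{1\}}}_{U^1}$, which holds with equality once one recalls $\norm{f}_{U^1} = |\E_X f(X)|$; alternatively one can start the induction at $d=0$ trivially. The main obstacle — or rather, the step requiring the most care — is bookkeeping: one has to be scrupulous about which average over which variable is being split by Cauchy–Schwarz, about the reindexing $S \mapsto S'$ and $S \mapsto S'\setminus\{d\}$, and about keeping the conjugation operators $\mathcal{C}^{d-|S|}$ consistent so that each half genuinely matches the inductive hypothesis rather than some variant of it. There is no real analytic difficulty beyond a single clean application of Cauchy–Schwarz at each level; the content is entirely in organizing the $2^d$ factors correctly. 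I would also remark that one could instead deduce this from the ``Gowers inner product'' formalism by noting that the left-hand side is the Gowers inner product $\langle (f_S)_{S\subseteq[d]} \rangle_{U^d}$ and that this inner product is bounded by the product of the norms via the Cauchy–Schwarz–Gowers argument; but for self-containedness the direct induction sketched above is the cleanest route.
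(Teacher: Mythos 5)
The paper does not prove this lemma; it is quoted as a known fact with a citation to Gowers \cite{Gow01}, so there is no in-paper proof to compare against, and the review below is of your argument on its own merits.

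Your overall plan — induction on $d$ via repeated Cauchy--Schwarz — is the right idea, but the key middle step as written does not hold, and this is a genuine gap rather than a bookkeeping slip. You claim that after a single Cauchy--Schwarz the two resulting averages are $(d-1)$-dimensional Gowers inner products of the families $(f_{S'})_{S'\subseteq[d-1]}$ and $(f_{S'\cup\{d\}})_{S'\subseteq[d-1]}$, to which the inductive hypothesis can be applied directly. That is not what one gets. The first of the two averages is
\begin{equation*}
\E_{Y_1,\dots,Y_{d-1}}\left|\,\E_{Z}\prod_{S'\subseteq[d-1]}\mathcal{C}^{d-|S'|}f_{S'}\Bigl(Z+\textstyle\sum_{i\in S'}Y_i\Bigr)\right|^2
~=~
\E_{X,Y_1,\dots,Y_d}\prod_{S\subseteq[d]}\mathcal{C}^{d-|S|}f_{S\cap[d-1]}\Bigl(X+\textstyle\sum_{i\in S}Y_i\Bigr),
\end{equation*}
which is still a \emph{$d$-dimensional} Gowers inner product: each $f_{S'}$ now appears at both vertices $S'$ and $S'\cup\{d\}$. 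One can lower the dimension only by conditioning on $Y_d$ and pulling $\E_{Y_d}$ outside, but then the inner $(d-1)$-dimensional object is the Gowers inner product of the \emph{derivatives} $\Delta_{Y_d}f_{S'}$, not of the $f_{S'}$ themselves. The inductive hypothesis therefore yields the bound $\E_{Y_d}\prod_{S'}\norm{\Delta_{Y_d}f_{S'}}_{U^{d-1}}$, and to convert this into $\prod_{S'}\norm{f_{S'}}_{U^{d}}^{2}$ you still need an extra H\"older step with $2^{d-1}$ equal exponents together with the recursion $\norm{f}_{U^{d}}^{2^{d}}=\E_{Y}\norm{\Delta_Y f}_{U^{d-1}}^{2^{d-1}}$. (Note your version of this identity has exponent $2^{d-1}$ on the left, which is off by a factor of two; also ``one more Cauchy--Schwarz'' over $Y_1,\dots,Y_{d-1}$ does not produce the needed H\"older.) These are precisely the steps that carry the weight, and without them the induction does not close. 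The cleaner standard route avoids changing $d$ at all: iterate Cauchy--Schwarz $d$ times at fixed dimension, the $i$-th application replacing every $f_S$ by $f_{S\setminus\{i\}}$ in one factor and $f_{S\cup\{i\}}$ in the other; after $d$ steps each of the $2^d$ resulting factors is a pure box average $\norm{f_T}_{U^d}^{2^d}$, and taking $2^d$-th roots finishes.
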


Assume that $P:\F^n \rightarrow \F$ is a polynomial of degree $d$ and $Q:\F^n\rightarrow \F$ is a polynomial of degree $k<d$. Using the above lemma with $f_\emptyset:=e_\F(P-Q)$ and $f_S:=1$ for $S\neq \emptyset$ implies that 
\begin{equation}\label{eq:gowerscorr}
|\ip{e_\F(P), e_\F(Q)}| = |\E_{x\in \F^n} e_\F(P(x)-Q(x))| \leq \norm{e_\F(P(x)-Q(x))}_{U^d} = \norm{e_\F(P(x))}_{U^d},
\end{equation}
where the last equality follows from the fact that $\deg(Q(x))=k<d$. 

\begin{proofof}{of \cref{thm:rmdecoding}}
Since there exists a polynomial $Q$ of degree $k$ with $\Pr(P(x)\neq Q(x))\leq \epsilon$, 
by \cref{claim:distancetocorrelation} there exists a nonzero $t\in \F$ such that
$$
\big| \langle e_\F(t\cdot P), e_\F(t\cdot Q) \rangle \big| \geq \epsilon.
$$
Notice that
$$
\norm{e_\F(t\cdot P)}_{U^d}\geq \langle e_\F(t\cdot P), e_\F(t\cdot Q) \rangle \geq \epsilon,
$$
where the first inequality follows from (\ref{eq:gowerscorr}) since $\deg(t\cdot Q)\leq k$. Notice that we may find such a constant $t\in \F\backslash \{0\}$ with high probability by estimating the $U^d$ norm of $tP$ for all $|\F|-1$ possible choices of $t$. Set $\tP:=t\cdot P$. 
By \cref{prop:gowersimplieslowrank}, 
with probability $1-\frac{\beta}{6}$ we can find a polynomial factor $\tcF$ of 
degree $d-1$ such that $\tP$ is measurable in $\tcF$. Let 
$\gamma:\N\rightarrow \R^+$ be a decreasing function to be specified later. 
By \cref{lem:uniformrefinement}, with probability $1-\frac{\beta}{6}$, 
we can refine $\tcF$ to a $\gamma$-uniform factor $\cF=\{P_1,...,P_m\}$ of 
same degree $d-1$, with $\dim(\cF)= \rO_{\gamma, \beta}(1)$. Since $\tP$ is 
measurable in $\cF$, there exists $\Gamma:\F^{\dim(\cF)}\rightarrow \F$ 
such that $\tP= \Gamma(\cF)$. Using the Fourier decomposition of $e_\F(\Gamma)$ we can write

\begin{equation}
\label{eq:fourierdecompose}
f(x)\defeq e_\F\big(\tP(x)\big)= \sum_{i=1}^L c_i \; e_\F\big( \langle \alpha^{(i)}, \cF\rangle (x)\big),
\end{equation}

where $L=|\F|^{\dim(\cF)}=O_{\gamma, \beta}(1)$, $\alpha^{(i)}\in \F^{\dim(\cF)}$, and
$$
\langle \alpha^{(i)}, \cF\rangle(x) \defeq \sum_{j=1}^m \alpha^{(i)}_j\cdot P_j(x).
$$ 
Notice that the terms in (\ref{eq:fourierdecompose}), unlike Fourier characters, 
are not orthogonal. But since the factor is $\gamma$-uniform, \cref{lem:nearorthogonality} 
ensures approximate orthogonality. Let $Q_i:=\langle \alpha^{(i)}, \cF \rangle$. 
Choose $\gamma(u)\leq \frac{\sigma}{|\F|^{2u}}$, so that 
$\gamma(\dim(\cF))\leq \frac{\sigma}{L^2}$, where $\sigma:= \frac{\eps^{2^{k+1}}}{4}$.
It follows from the near orthogonality of the terms in (\ref{eq:fourierdecompose}) by
\cref{lem:nearorthogonality} that
\begin{equation}
\label{eq:fouriercoeff}
|c_i- \langle f, e_\F(Q_i)\rangle| \leq \frac{\sigma}{L},
\end{equation}
and 
\begin{equation}
\label{eq:l2}
\bigg|\norm{f}_2^2 - \sum_{i=1}^L c_i^2\bigg| \leq \sigma.
\end{equation}

\begin{claim}
\label[claim]{claim:largecoeff}
There exists $\delta'(\epsilon, |\F|) \in (0,1]$ such that the following 
holds. Assume that $f$ and $\cF$ are as above. Then there is $i\in [L]$, 
for which $\deg(Q_i)\leq k$ and $\big|\langle f, e_\F(Q_i)\rangle\big| \geq \delta'$.
\end{claim}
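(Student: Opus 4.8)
The plan is to work entirely with the Gowers norm $\norm{\cdot}_{U^{k+1}}$ (a genuine norm, since $k\ge 1$) together with the uniformity of $\cF$, avoiding any expansion of $\norm{f}_{U^{k+1}}^{2^{k+1}}$ into parallelepiped sums. First I would record that $\norm{f}_{U^{k+1}}\ge\epsilon$ for $f=e_\F(\tP)=e_\F(tP)$: by \cref{claim:distancetocorrelation} the scalar $t$ may be taken nonzero with $|\langle e_\F(tP),e_\F(tQ)\rangle|\ge\epsilon$, and since $\deg(tQ)\le k$, the same argument that gives (\ref{eq:gowerscorr}) — Gowers Cauchy--Schwarz (\cref{lem:GowersCauchySchwarz}), now applied over $k+1$ directions with $f_\emptyset=e_\F(tP-tQ)$ and $f_S\equiv 1$ for $S\neq\emptyset$ — yields $\epsilon\le|\langle e_\F(tP),e_\F(tQ)\rangle|\le\norm{e_\F(tP-tQ)}_{U^{k+1}}=\norm{e_\F(tP)}_{U^{k+1}}$, the last equality because $tQ$ has degree $\le k<k+1$ and hence does not affect the $(k{+}1)$st Gowers norm.

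Next, I would split the $L=|\F|^{\dim(\cF)}$ Fourier terms of (\ref{eq:fourierdecompose}) as $I_{\le k}=\{i:\deg Q_i\le k\}$ and $I_{>k}=[L]\setminus I_{\le k}$, and apply the triangle inequality to $f=\sum_i c_i e_\F(Q_i)$:
\[
\epsilon\le\norm{f}_{U^{k+1}}\le\sum_{i\in I_{\le k}}|c_i|\,\norm{e_\F(Q_i)}_{U^{k+1}}+\sum_{i\in I_{>k}}|c_i|\,\norm{e_\F(Q_i)}_{U^{k+1}}.
\]
For $i\in I_{\le k}$ use $\norm{e_\F(Q_i)}_{U^{k+1}}\le 1$; for $i\in I_{>k}$, writing $\ell=\deg Q_i\ge k+1$, use monotonicity of the Gowers norms followed by $\gamma$-uniformity of $\cF$ (\cref{dfn:uniformfactor}, applied to the degree-$\ell$ combination $Q_i=\langle\alpha^{(i)},\cF\rangle$) to get $\norm{e_\F(Q_i)}_{U^{k+1}}\le\norm{e_\F(Q_i)}_{U^{\ell}}\le\gamma(\dim(\cF))$. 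Since $\sum_i|c_i|^2=\norm{f}_2^2=1$ (Parseval over $\F^{\dim(\cF)}$; or (\ref{eq:l2})), Cauchy--Schwarz gives $\sum_{i\in I_{>k}}|c_i|\le\sqrt{L}$, and because $\gamma$ was chosen with $\gamma(\dim(\cF))\le\sigma/L^2$ we get $\epsilon\le\sum_{i\in I_{\le k}}|c_i|+\sigma/L^{3/2}\le\sum_{i\in I_{\le k}}|c_i|+\sigma$, hence $\sum_{i\in I_{\le k}}|c_i|\ge\epsilon-\sigma\ge\epsilon/2$ (recall $\sigma=\eps^{2^{k+1}}/4\le\epsilon/4$). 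By averaging, some $i\in I_{\le k}$ satisfies $|c_i|\ge\epsilon/(2|I_{\le k}|)\ge\epsilon/(2L)$, and then (\ref{eq:fouriercoeff}) gives $|\langle f,e_\F(Q_i)\rangle|\ge|c_i|-\sigma/L\ge\epsilon/(4L)$. I would then take $\delta':=\epsilon/(4L)=\epsilon/(4|\F|^{\dim(\cF)})$, which is the desired $\delta'(\epsilon,|\F|)>0$ (with $\dim(\cF)$ the fixed dimension produced in the proof of \cref{thm:rmdecoding}).

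I do not expect a genuine obstacle: the one structural fact making this work — that $\norm{e_\F(Q_i)}_{U^{k+1}}$ is exactly $1$ when $\deg Q_i\le k$ and is at most $\gamma(\dim(\cF))$ when $\deg Q_i>k$ — is immediate from monotonicity of Gowers norms plus \cref{dfn:uniformfactor}, so no parallelepiped-consistency analysis is needed. The only care required is the usual bookkeeping: making sure the slack $\sigma/L^2$ in the uniformity level, the slack $\sigma/L$ in (\ref{eq:fouriercoeff}), and the $O(\sigma)$ error in Parseval are all dominated by $\epsilon$. This is precisely why $\gamma$ must be fixed as a sufficiently fast-decaying function before $\dim(\cF)$ is pinned down, and why $\sigma$ is taken polynomially small in $\epsilon$ — both already arranged in the proof of \cref{thm:rmdecoding}. (One could instead expand $\norm{f}_{U^{k+1}}^{2^{k+1}}$, discard all but the ``degenerate'' parallelepiped tuples via \cref{lem:nearorthogonality}, and argue that every surviving tuple involves a $Q_i$ of degree $\le k$ using the fact that a polynomial of degree $<|\F|$ with vanishing $(k{+}1)$st derivative has degree $\le k$; this also works but requires tracking the dependence on the base point and is strictly more work.)
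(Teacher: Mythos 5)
Your proof is correct and takes a genuinely different route from the paper's. The paper also makes the split $f=g+h$ into low- and high-degree Fourier terms and bounds $\norm{h}_{U^{k+1}}$ via $\gamma$-uniformity exactly as you do, but then extracts a large coefficient by packaging the argument as an induction on $\deg\cF$ whose base case uses the Cauchy--Schwarz inequality (\ref{eq:gowersl2}), $\norm{f}_{U^{k+1}}^{2^{k+1}}\le\norm{f}_2^2\norm{f}_\infty^{2^{k+1}-2}$, together with $\norm{f}_\infty=1$ and the near-Parseval estimate (\ref{eq:l2}), to lower-bound $\sum_{i\in I_{\le k}}c_i^2$ and then average in $\ell^2$. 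Your argument replaces this $L^p$-comparison step with the triangle inequality $\norm{g}_{U^{k+1}}\le\sum_{i\in I_{\le k}}|c_i|\,\norm{e_\F(Q_i)}_{U^{k+1}}\le\sum_{i\in I_{\le k}}|c_i|$ and averages in $\ell^1$. This buys you three things: you do not need (\ref{eq:l2}) at all, since $\sum_i c_i^2=1$ is exact by Parseval over the configuration space $\F^{\dim(\cF)}$; you get a quantitatively better constant, $\delta'=\epsilon/(4L)$ versus the paper's $\epsilon^{2^{k+1}}/(2L)$; and you sidestep a delicate point in the paper's induction step, where ``applying the base case'' to $g$ would require controlling $\norm{g}_\infty$ in the factor $\norm{g}_\infty^{2^{k+1}-2}$, and $\norm{g}_\infty$ need not be $1$ (it can be as large as $\sqrt L$), whereas your triangle-inequality bound never touches $\norm{g}_\infty$.
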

\begin{proof}
We will induct on the degree of $\cF$. Assume for the base case 
that $\cF$ is of degree $k$, i.e. $d=k+1$, thus applying the following Cauchy-Schwarz inequality
\begin{equation}
\label{eq:gowersl2}
\epsilon^{2^{k+1}}\leq \norm{f}_{U^{k+1}}^{2^{k+1}} \leq \norm{f}_2^2 \norm{f}_{\infty}^{2^{k+1}-2}, 
\end{equation}
and (\ref{eq:l2}) imply that there exists $i\in [L]$ such that 
$c_i^2 \geq \frac{\epsilon^{2^{k+1}} - \sigma }{L}= \frac{3\epsilon^{2^{k+1}}}{4L}$, 
which combined with (\ref{eq:fouriercoeff}) implies that 
$|\langle f, e_\F(Q_i)\rangle| \geq \frac{\epsilon^{2^{k+1}}}{2L}$.

Now for the induction step, assume that $d>k+1$. We will decompose 
(\ref{eq:fourierdecompose}) into two parts, first part consisting 
of the terms of degree $\leq k$ and the second part consisting of 
the terms of degree strictly higher than $k$. Namely, letting 
$S:=\{i\in [L]: \deg(Q_i)\leq k\}$ we write $f=g+h$ where 
$g\defeq \sum_{i\in S}c_i e_\F(Q_i)$ and $h\defeq \sum_{i\in [L]\backslash S}c_i e_\F(Q_i)$. 
Notice that by the triangle inequality of Gowers norm, our choice of 
$\gamma$, and the fact that $\cF$ is $\gamma$-uniform
$$
\norm{h}_{U^{k+1}}\leq \sum_{i\in [L]\backslash S} |c_i|\cdot \norm{e_\F(Q_i)}_{U^{k+1}}\leq  L\cdot \frac{\epsilon^{2^{k+1}}}{4L^2} = \frac{\epsilon^{2^{k+1}}}{4L},
$$ 
and thus
$$
\norm{g}_{U^{k+1}} \geq \frac{\epsilon}{2}.
$$
Now the claim follows by the base case. 
\end{proof}

Let $\delta'(\epsilon, |\F|)$ be as in the above claim. 
We will use the following theorem of Goldreich and Levin~\cite{GL89} 
which  gives an algorithm to find all the large Fourier coefficients of $e_\F(\Gamma)$. 

\begin{theorem}[Goldreich-Levin theorem~\cite{GL89}]
Let $\zeta, \rho \in (0,1]$. There is a randomized algorithm, 
which given oracle access to a function $\Gamma:\F^m \rightarrow \F$, 
runs in time $\rO\big(m^2\log m \cdot \mathrm{poly}(\frac{1}{\zeta}, \log(\frac{1}{\rho}))\big)$ 
and outputs a decomposition
$$
\Gamma= \sum_{i=1}^\ell b_i \cdot e_\F(\langle \eta_i, x\rangle) + \Gamma',
$$
with the following guarantee:
\begin{itemize}
\item $\ell=\rO(\frac{1}{\zeta^2})$.
\item $\Pr \big[ \exists i: \; |b_i- \hat{\Gamma}(\eta_i)|> \zeta/2 \big]\leq \rho.$
\item $\Pr \bigg[ \forall \alpha \text{ such that } |\hat{f}(\alpha)|\geq \zeta, \exists i\; \eta_i=\alpha\bigg] \geq 1-\rho.$
\end{itemize}
\end{theorem}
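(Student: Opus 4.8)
The plan is to prove this by the Kushilevitz--Mansour ``hashing into buckets'' algorithm, carried over from the Boolean cube to $\F^m$. Write $f \defeq e_\F(\Gamma)$, a function $\F^m \to \C$ with $|f| \equiv 1$ (this is the form in which the theorem gets applied, see \eqref{eq:fourierdecompose}); recall $\hat f(\alpha) = \Ex{x \in \F^m}{f(x)\,e_\F(-\ip{\alpha,x})}$ and $\sum_{\alpha \in \F^m}|\hat f(\alpha)|^2 = \Ex{x}{|f(x)|^2} = 1$. The goal is to output, with probability $1-\rho$, a list $\eta_1,\ldots,\eta_\ell$ that contains every $\alpha$ with $|\hat f(\alpha)| \ge \zeta$, together with estimates $b_i \approx \hat f(\eta_i)$; setting $\Gamma' \defeq \Gamma - \sum_i b_i\, e_\F(\ip{\eta_i,x})$ then gives the stated decomposition. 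For $0 \le k \le m$ and $\beta \in \F^k$, let $B_{k,\beta} \defeq \{\alpha \in \F^m : \alpha_j = \beta_j \text{ for } j \le k\}$ be the bucket of frequencies agreeing with $\beta$ on the first $k$ coordinates, and $w(k,\beta) \defeq \sum_{\alpha \in B_{k,\beta}}|\hat f(\alpha)|^2$ its Fourier weight. Writing $x=(y,z)$ with $y\in\F^k$, $z\in\F^{m-k}$, applying Parseval over $\F^{m-k}$ to the partial transform $g_\beta(z) \defeq \Ex{y}{f(y,z)\,e_\F(-\ip{\beta,y})}$, and expanding $|g_\beta|^2$, one obtains the identity
\begin{equation}\label{eq:bucketweight}
w(k,\beta) ~=~ \Ex{y,y'\in\F^k,\, z\in\F^{m-k}}{f(y,z)\,\overline{f(y',z)}\,e_\F\big(-\ip{\beta,\,y-y'}\big)}\mper
\end{equation}
The right side is an expectation of a quantity of modulus $\le 1$ that can be evaluated with two oracle calls to $\Gamma$ and $O(m)$ field operations, so for any $\tau,\rho'>0$ it can be estimated to within $\pm\tau$ with confidence $1-\rho'$ from $O(\tau^{-2}\log(1/\rho'))$ samples.

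The algorithm is then a breadth-first sweep over buckets. At level $0$ the only live bucket is $B_{0,\emptyset}$; to pass from level $k$ to $k+1$, for each live $B_{k,\beta}$ we form its $|\F|$ children $B_{k+1,(\beta,c)}$ ($c\in\F$), estimate each child's weight via \eqref{eq:bucketweight} to accuracy $\tau = \zeta^2/4$, and keep a child live iff its estimate exceeds $\zeta^2/2$; after $m$ levels each surviving bucket is a singleton $\{\alpha\}$ and these are output as $\eta_1,\ldots,\eta_\ell$. Correctness rests on two invariants, valid whenever all estimates land within their error bars: \emph{completeness} --- if $|\hat f(\alpha)| \ge \zeta$ then $w(k,(\alpha_1,\ldots,\alpha_k)) \ge \zeta^2$ for every $k$, so the bucket containing $\alpha$ is never dropped; and \emph{sparsity} --- since the level-$k$ buckets partition $\F^m$ and $\sum_\alpha|\hat f(\alpha)|^2 = 1$, at most $4\zeta^{-2}$ of them have true weight $\ge \zeta^2/4$, so every live set has size $O(\zeta^{-2})$ and $\ell = O(\zeta^{-2})$. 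A routine count --- $m$ levels, $O(\zeta^{-2})$ live buckets each, $|\F|$ children apiece, $O(\zeta^{-4}\log(m/(\zeta\rho)))$ samples per weight estimate (with $\rho'$ set to $\rho$ over the total number of estimates, so a union bound controls the overall failure), $O(m)$ arithmetic per sample --- then yields running time $\rO(m^2\log m\cdot\mathrm{poly}(1/\zeta,\log(1/\rho)))$ (with $|\F|$ regarded as fixed), as claimed. Finally, for each output $\eta_i$ we estimate $\hat f(\eta_i) = \Ex{x}{f(x)\,e_\F(-\ip{\eta_i,x})}$ directly to within $\pm\zeta/4$ with confidence $1-\rho/(2\ell)$ from $O(\zeta^{-2}\log(\ell/\rho))$ samples, and take the result as $b_i$; this gives $|b_i - \hat f(\eta_i)| \le \zeta/2$ for all $i$, and together with completeness it yields the last guarantee of the theorem.

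The only part that is more than bookkeeping is identity \eqref{eq:bucketweight} itself --- a short Parseval computation over the last $m-k$ coordinates --- together with the orchestration of the error budget: one must choose $\tau = \zeta^2/4$ and $\rho'$ so that, simultaneously across all $O(m|\F|\zeta^{-2})$ estimates, no bucket of true weight $\ge \zeta^2$ is ever dropped and no bucket of true weight $< \zeta^2/4$ is ever kept, while the sample (hence query) count stays within the target. I do not expect a genuine difficulty here: this is precisely the Kushilevitz--Mansour analysis, the sole finite-field-specific point being that each of the $m$ coordinate-refinement steps branches $|\F|$-fold rather than $2$-fold, which affects only the polynomial factors in the running time and in $\ell$.
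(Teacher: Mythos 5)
The paper states this result as a black-box citation to \cite{GL89} and gives no proof of its own (it is used only as a subroutine inside the proof of \cref{thm:rmdecoding}), so there is no paper argument to compare against. Your Kushilevitz--Mansour bucket-splitting proof is the standard route to this statement over a general prime field $\F$ and is correct. Identity \eqref{eq:bucketweight} is the right partial-Parseval computation, the completeness/sparsity invariants with threshold $\zeta^2/2$ and accuracy $\tau=\zeta^2/4$ are exactly what is needed, the bound $\ell=O(\zeta^{-2})$ and the running-time count (including the $\log m$ factor coming from the union bound over the $O(m|\F|\zeta^{-2})$ weight estimates) match the stated bound once $|\F|$ is treated as a constant, and the final direct estimation of each $\hat f(\eta_i)$ to accuracy $\zeta/4$ delivers the second guarantee. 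You also implicitly repair the paper's minor notational slip --- the statement mixes $\hat\Gamma$ and $\hat f$ without defining $f$ --- by reading both as the Fourier transform of $e_\F(\Gamma)$ (so the displayed decomposition is really of $e_\F(\Gamma)$, not of $\Gamma$ itself), which is the reading the paper's application in \eqref{eq:fourierdecompose} requires.
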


We will use the above theorem with parameters $\zeta:= \frac{\delta'}{2}$ and 
$\rho:=\frac{\beta}{6}$, and use the randomized algorithm in \cref{lem:query} 
to provide answer to all its queries to $\Gamma$. Choose $\gamma$ suitably small, 
so that with probability at least $1-\frac{\beta}{6}$ our answer to all queries 
to $\Gamma$ are correct. By \cref{claim:largecoeff} there is $i\in [L]$ such that 
$\hat{\Gamma}(\alpha)=c_i\geq \frac{3\delta'}{4}$. With probability $1-\frac{\beta}{6}$ 
there is $j$ such that $\eta_j=\alpha_i$ and with probability at least $
1-\frac{\beta}{6}$, $|b_j-c_i|\leq \frac{\zeta}{2} \leq \frac{\delta'}{4}$, 
and therefore $c_i\geq \frac{\delta'}{2}$.

By a union bound, adding up the probabilities of the errors, 
with probability at least $1-\frac{5\beta}{6}$, we find $Q_i$  such that 
$$
\big|\langle f, e_\F(Q_i)\rangle\big| \geq \frac{\delta'}{4}.
$$

The following claim shows that, there is a constant shift of $e_\F(Q_i)$ that approximates $f$.

\begin{claim}
Recall that $f(x)=e_\F(\tP(x))=e_\F(t\cdot P(x))$, and we have 
$$|\langle e_\F(\tP), e_\F(Q_i) \rangle| = \big|\E_{x\in \F^n} e_\F(\tP(x)-Q_i(x))\big| \geq \frac{\delta'}{4}.$$

There is a randomized algorithm that with probability $1-\frac{\beta}{6}$ returns an $h\in \F$ for which
$$
\Pr(\tP(x)= Q_i(x)+h) \geq \frac{1}{|\F|}+\frac{\delta'}{8|\F|^2}
$$
\end{claim}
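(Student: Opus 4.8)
The plan is to extract from the correlation hypothesis that the random variable $R(x)\defeq\tP(x)-Q_i(x)=t\cdot P(x)-Q_i(x)$ takes some single value $h\in\F$ with probability noticeably larger than $1/|\F|$, and then to \emph{find} such an $h$ by empirically estimating the distribution of $R$. Write $\mu(h)\defeq\Pr_{x\in\F^n}(R(x)=h)$ for $h\in\F$. Since $\big|\langle e_\F(\tP),e_\F(Q_i)\rangle\big|=\big|\E_x e_\F(R(x))\big|=\big|\sum_{h\in\F}\mu(h)\,e_\F(h)\big|$, the hypothesis says $\big|\sum_{h\in\F}\mu(h)\,e_\F(h)\big|\ge\delta'/4$.

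First I would turn this into a pointwise lower bound on $\max_h\mu(h)$. Set $\nu(h)\defeq\mu(h)-\tfrac1{|\F|}$; then $\sum_h\nu(h)=0$ and, since $\sum_h e_\F(h)=0$, also $\sum_h\nu(h)\,e_\F(h)=\sum_h\mu(h)\,e_\F(h)$, so by the triangle inequality $\sum_{h\in\F}|\nu(h)|\ge\delta'/4$. Because the positive and negative parts of $\nu$ have the same total mass, $\sum_{h:\nu(h)>0}\nu(h)\ge\delta'/8$, and this sum has at most $|\F|$ terms, so there is $h^*\in\F$ with $\nu(h^*)\ge\frac{\delta'}{8|\F|}$, i.e.\ $\mu(h^*)\ge\frac1{|\F|}+\frac{\delta'}{8|\F|}$. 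This is the only step with any content, and the gap between this bound and the weaker bound $\frac1{|\F|}+\frac{\delta'}{8|\F|^2}$ in the statement is exactly the slack that will absorb the sampling error.

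The algorithm is then routine: sample $x_1,\dots,x_N\in\F^n$ uniformly at random, compute $R(x_\ell)=t\,P(x_\ell)-Q_i(x_\ell)$ for each $\ell$ (one query to $P$ apiece, since we hold explicit descriptions of the polynomials of $\cF$ and hence of $Q_i=\langle\alpha^{(i)},\cF\rangle$), form the empirical masses $\tmu(h)\defeq\frac1N\#\{\ell:R(x_\ell)=h\}$, and output $h\defeq\argmax_{h'}\tmu(h')$. Choosing $N=O\!\left(\frac{|\F|^2}{(\delta')^2}\log\frac{|\F|}{\beta}\right)$, a Hoeffding bound together with a union bound over the $|\F|$ field elements gives that, with probability $1-\beta/6$, $|\tmu(h')-\mu(h')|\le\frac{\delta'}{32|\F|}$ for every $h'$. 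On that event $\tmu(h)\ge\tmu(h^*)\ge\frac1{|\F|}+\frac{\delta'}{8|\F|}-\frac{\delta'}{32|\F|}$, hence $\mu(h)\ge\frac1{|\F|}+\frac{\delta'}{8|\F|}-\frac{2\delta'}{32|\F|}=\frac1{|\F|}+\frac{\delta'}{16|\F|}\ge\frac1{|\F|}+\frac{\delta'}{8|\F|^2}$, using $|\F|\ge2$. Each $R(x_\ell)$ costs one query to $P$ and $O(n^{d-1})$ arithmetic to evaluate $Q_i$, and $N=O_{\epsilon,|\F|,\beta}(1)$, so the whole procedure runs well within $O(n^d)$. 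I do not expect any real obstacle here beyond bookkeeping the constants.
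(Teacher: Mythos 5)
Your proposal is correct and follows essentially the same approach as the paper: extract a pointwise lower bound on $\max_h \Pr(\tP(x)-Q_i(x)=h)$ from the correlation hypothesis, then empirically estimate the distribution of $\tP-Q_i$ by random sampling and output the argmax. The only difference is that you spell out the averaging argument (and obtain the tighter pointwise bound $\frac{1}{|\F|}+\frac{\delta'}{8|\F|}$), whereas the paper simply asserts, without the computation, the weaker bound $\frac{1}{|\F|}+\frac{\delta'}{4|\F|^2}$ before running the same estimation step.
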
 
\begin{proof}
Since $\tP(x)-Q_i(x)$ takes values in $\F$, there must be a choice of $r\in \F$ such that $\Pr(\tP(x)-Q_i(x)=r)\geq \frac{1}{|\F|}+ \frac{\delta'}{4|\F|^2}$. 
Similar to the proof of \cref{lem:algorithmicBV} defining $\mu_0(r)= \Pr(\tP(x)-Q_i(x)=r)$, 
we can find the estimate measure $\mu_{\obs}$ by $K=\rO_{|\F|, \delta', \beta}(1)$ 
random queries to $\tP(x)-Q_i(x)$ such that
$$
\Pr \bigg(\exists r\in \F:\; \big|\mu_0(r) - \mu_{\obs}(r)\big|\geq \frac{\delta'}{8|\F|^2}\bigg)\leq \frac{\beta}{6}.
$$

Let $h\in \F$ be such that $\mu_{\obs}(h)$ is maximized. Letting $Q':= Q_i+h$ we have 
\begin{itemize}
\item $\Pr \big( \tP(x) = Q'(x) \big)\geq \frac{1}{|\F|}+ \frac{\delta'}{8|\F|^2}$
\item $\deg(Q')= k$. 
\end{itemize}
\end{proof}

Since $\tP=t\cdot P$, the same also holds between $P$ and $t^{-1}Q'$. 
The probability that all steps of the algorithm work correctly is 
bounded from below by $1-\frac{5\beta}{6}-\frac{\beta}{6} =1-\beta$.
\end{proofof}

\section{Low Characteristics}\label{sec:lowchar}
As mentioned in \cref{sec:uniformrefinement}, notions of 
regularity (\cref{dfn:regularfactor}) and uniformity 
(\cref{dfn:uniformfactor}) are not sufficient to prove 
\cref{prop:approxtocompute} and \cref{prop:approxtocompute_new}, 
respectively, without the assumption on $|\F|$ being larger than $d$. 
In this section we will use strongly unbiased factors (\cref{dfn:stronglyunbiased}) 
to handle the case of low characteristics. 

\restate{\cref{dfn:stronglyunbiased}}{
Suppose that $\gamma:\N\rightarrow \R^+$ is a decreasing function. 
Let $\cF=\{P_1,...,P_m\}$ be a polynomial factor of degree $d$ over $\F^n$ and let 
$\Delta:\cF\rightarrow \N$ assign a natural number to each polynomial 
in the factor. We say that $\cF$ is strongly $\gamma$-unbiased with degree bound $\Delta$ if 
\begin{enumerate}
\item For every $i\in [m]$, $1\leq \Delta(P_i)\leq \deg(P_i)$. 
\item For every $i\in [m]$ and $r>\Delta(P_i)$, there exists a function $\Gamma_{i,r}$ such that
$$
P_i(x+y_{[r]})= \Gamma_{i,r}\left(P_j(x+y_J): j\in [m], J\subseteq [r], |J|\leq \Delta(P_j)\right).
$$
\item For every $r\geq 0$ and collection of coefficients 
$\cC=\{c_{i,I}\in \F\}_{i\in [m], I\subseteq [r], |I|\leq \Delta(P_i)}$ 
not all of which are zero, we have
$$
\bias\left( \sum_{i\in [m], I\subseteq [r], |I|\leq \Delta(P_i)} c_{i,I}P_i(x+y_I)\right) \leq \gamma(\dim(\cF)),
$$
for random variables $x,y_1,\ldots,y_r\in \F^n$.
\end{enumerate}
We will say $\cF$ is $\Delta$-bounded if it only satisfies (1) and (2). Moreover for every $r>0$ let
$$
B(r)\defeq \sum_{1\leq i\leq m} \sum_{0\leq j\leq \Delta(P_i)} {r \choose j},
$$
be the number of pairs $(i,I)$ with $i\in [m]$, $I\subseteq [r],$ and $|I|\leq \Delta(P_i)$.
}

The following claim states that although in part (3) there is no upper bound on $r$,  
choosing a suitably faster decreasing function $\gamma'= \gamma^{2^d}$, one can assume that $r\leq d$. 

\begin{claim}
Let $\cF=\{P_1,\ldots, P_m\}$ be a polynomial factor of degree $d$, and 
$\Delta:\cF\rightarrow \N$ be a function such that $\cF$ is $\Delta$-bounded. 
Suppose that $\gamma:\N\rightarrow \R^+$ is a decreasing function such that
for every $r\in [d]$, $x\in \F^n$ and not all zero collection of coefficients 
$\{c_{i,I}\}_{i\in [m], I\subseteq [r], |I|\leq \Delta(P_i)}$
\begin{equation}\label{eq:stronglyunbiased}
\bias\left(\sum_{i\in [m], I\subseteq [r], |I|\leq \Delta(P_i)} c_{i,I}\cdot P_i(x+y_I) \right) \leq \gamma(\dim(\cF))^{2^d},
\end{equation}
where $y_1,\ldots, y_r$ are variables from $\F^n$. Then $\cF$ is strongly $\gamma$-unbiased. 
\end{claim}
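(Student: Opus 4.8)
The plan is to show both halves of the conclusion by reducing an arbitrary linear form with $r$ variables to one with at most $d$ variables, to which the hypothesis \eqref{eq:stronglyunbiased} applies directly. Fix a not-all-zero collection $\cC=\{c_{i,I}\}_{i\in[m],\,I\subseteq[r],\,|I|\leq\Delta(P_i)}$ and write $Q_\cC(x,y_1,\ldots,y_r)=\sum_{i,I}c_{i,I}\,P_i(x+y_I)$. If $\gamma(\dim(\cF))\geq 1$ there is nothing to prove since $\bias\leq 1$, so assume $\gamma(\dim(\cF))\leq 1$; then $\gamma(\dim(\cF))^{2^d}\leq\gamma(\dim(\cF))$, which already settles the cases $1\leq r\leq d$ straight from \eqref{eq:stronglyunbiased}, and also the degenerate case $r=0$, since the form $\sum_i c_{i,\emptyset}P_i(x)$ is handled by applying the $r=1$ instance of \eqref{eq:stronglyunbiased} with all $c_{i,\{1\}}$ set to zero. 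So it remains to treat $r>d$.

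For $r>d$, let $I^*$ be a set of maximum size among those $I$ for which $c_{i,I}\neq 0$ for some $i$, and put $s:=|I^*|$; since $c_{i,I^*}\neq 0$ forces $|I^*|\leq\Delta(P_i)\leq\deg(P_i)\leq d$, we have $s\leq d$. Write $I^*=\{j_1,\ldots,j_s\}$ and consider the iterated difference $\tilde Q:=D_{h_{j_1}}\cdots D_{h_{j_s}}Q_\cC$, where $D_{h_{j_a}}$ acts in the $y_{j_a}$ slot (replace $y_{j_a}$ by $y_{j_a}+h_{j_a}$ and subtract). These operators commute, and each $D_{h_{j_a}}$ annihilates any term not depending on $y_{j_a}$, so a term $c_{i,I}P_i(x+y_I)$ survives all $s$ of them only if $I^*\subseteq I$; combined with maximality of $I^*$ (so $|I|\leq s$ whenever $c_{i,I}\neq 0$) this forces $I=I^*$. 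Hence $\tilde Q=D_{h_{j_1}}\cdots D_{h_{j_s}}\big(\sum_i c_{i,I^*}P_i(x+y_{j_1}+\cdots+y_{j_s})\big)$, which, writing $z:=x+y_{j_1}+\cdots+y_{j_s}$ and relabelling $h_{j_a}\mapsto h_a$, depends only on $z,h_1,\ldots,h_s$ and equals $Q_{\cC'}(z,h_1,\ldots,h_s):=\sum_{J\subseteq[s]}(-1)^{s-|J|}\sum_i c_{i,I^*}P_i(z+h_J)$. This $Q_{\cC'}$ is a legal form with $s$ variables: its coefficient of $P_i(z+h_J)$ is $(-1)^{s-|J|}c_{i,I^*}$, nonzero only when $|J|\leq s\leq\Delta(P_i)$, and it is not all zero since the $J=[s]$ coefficients are exactly the $c_{i,I^*}$. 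Since $z$ is uniform over $\F^n$ (whether $x$ is random or fixed, as the $y_{j_a}$ are random), the distribution induced on $(z,h_1,\ldots,h_s)$ is uniform, so $\bias(\tilde Q)$ over all its variables equals $\bias(Q_{\cC'})$, which by \eqref{eq:stronglyunbiased} (applicable since $1\leq s\leq d$; the subcase $s=0$ is the degenerate one already dealt with) is at most $\gamma(\dim(\cF))^{2^d}$.

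Finally I would run the standard repeated Cauchy--Schwarz exactly as in \cref{claim:repeatedcauchy}: pulling the average over $y_{j_1}$ outside and applying Cauchy--Schwarz gives $\bias(Q_\cC)^2\leq\bias(D_{h_{j_1}}Q_\cC)$, and iterating over $j_2,\ldots,j_s$ telescopes to $\bias(Q_\cC)^{2^s}\leq\bias(\tilde Q)=\bias(Q_{\cC'})\leq\gamma(\dim(\cF))^{2^d}$; taking $2^s$-th roots and using $s\leq d$ together with $\gamma(\dim(\cF))\leq 1$ yields $\bias(Q_\cC)\leq\gamma(\dim(\cF))^{2^{d-s}}\leq\gamma(\dim(\cF))$, as required. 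The only real content beyond bookkeeping is the \emph{derivative elimination} step — one must make sure that differentiating in precisely the coordinates of a maximum-support set $I^*$ kills every term except those with $I=I^*$, so that the survivor is a form in only $s\leq d$ variables; everything else (absorbing the Cauchy--Schwarz exponent loss into the gap $2^d$, the change of variables $z=x+\sum_a y_{j_a}$, and the inequalities $\gamma^{2^d}\leq\gamma^{2^{d-s}}\leq\gamma$) is routine.
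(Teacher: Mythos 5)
Your proposal is correct and follows essentially the same route as the paper's own proof: for $r>d$, pick a maximum‑support $I^*$ among the nonvanishing indices, take iterated derivatives $D_{h_{j_1}}\cdots D_{h_{j_s}}$ to kill every term except $I=I^*$, reparametrize by $z=x+\sum_a y_{j_a}$ so the surviving form is a legal $s\le d$ variable instance of the hypothesis, and close with the repeated Cauchy--Schwarz bound $\bias(Q_\cC)^{2^s}\le\bias(\tilde Q)$ together with $\gamma^{2^{d-s}}\le\gamma$. You are slightly more careful than the paper in spelling out the trivial cases ($\gamma\ge1$, $1\le r\le d$, and $r=0$) and in getting the sign $(-1)^{s-|J|}$ right, but these are bookkeeping points; the core derivative-elimination argument and the use of \cref{claim:repeatedcauchy} coincide exactly with the paper's proof.
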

\begin{proof}
Let $r>d$ be an integer, $x\in \F^n$ be fixed and $\{c_{i,I}\}_{i\in [m], I\subseteq [r], |I|\leq \Delta(P_i)}$ be a set of coefficients, not all of which are zero. We will show that 
$$
\bias\left(\sum_{i\in [m], I\subseteq [r], |I|\leq \Delta(P_i)} c_{i,I}\cdot P_i(x+y_I) \right) \leq \gamma(\dim(\cF)).
$$
Define $Q_\cC(y_1,\ldots,y_r)\defeq \sum_{i\in [m], I\subseteq [r], |I|\leq \Delta(P_i)} c_{i,I}\cdot P_i(x+y_I)$, 
and let $I\subseteq [r]$ be maximal such that $c_{i,I}$ is nonzero for some $i\in [m]$. 
Without loss of generality assume that $I=\{1,2,\ldots, |I|\}$. We will derive 
$Q_\cC$ in directions $y_1,\ldots,y_{|I|}$, namely
$$
DQ_{\cC,y_1,\ldots,y_r}(h_1,\ldots,h_{|I|}) \defeq D_{h_1}\cdots D_{h_{|I|}} Q_\cC(y_1,\ldots,y_r),
$$
where $h_i$ is only supported on $y_i$ coordinates. Notice that, since $I$ was maximal, 
$DQ_\cC$ does not depend on any $y_i$ with $i\in [r]\backslash I$, namely, for every such $J\subseteq [r]$ with $J\not\subseteq I$, and every $i\in [m]$, $D_{h_1}\cdots D_{h_{|I|}} P_i(x+y_J) \equiv 0$. Moreover
\begin{align*}
D_{h_1}\cdots D_{h_{|I|}} Q_\cC(y_1,\ldots,y_r)&= 
\sum_{i\in [m], J\subseteq [r], |J|\leq \Delta(P_i)} c_{i,J} \cdot D_{h_1}\cdots D_{h_{|I|}} P_i(x+y_J)\\ &=
\sum_{i\in [m]} c_{i,I} \cdot  D_{h_1}\cdots D_{h_{|I|}} P_i(x+y_I) \\ &=
\sum_{i\in [m], J\subseteq I} c_{i,I} \cdot (-1)^{|J|}\cdot  P_i((x+y_I)+ h_{J}).
\end{align*}
It follows by \cref{claim:repeatedcauchy} for the random variable $x':=x+y_I$ and taking the bias over random variables $x', h_1,\ldots,h_{|I|}$ that
$$
\gamma(\dim(\cF))^{2^{d}}\geq \bias\left(DQ_{\cC,y_1,\ldots,y_r}(x',h_1,\ldots,h_{|I|})\right) \geq \bias\left(Q_\cC(y_1,\ldots,y_r)\right)^{2^{|I|}}.
$$
Now the claim follows by the fact that $|I|\leq d$. 
\end{proof}

The above claim suggests a way of refining a factor to a strongly 
uniform one: We will estimate the bias of every possible linear 
combination of the polynomials over all points $x+y_I$ where 
$I\subseteq [r]$ and $r\leq d$, and once we have detected a biased combination, refine 
using \cref{lem:algorithmicBV}. This is made possible by the 
fact that $r$ and $\dim(\cF)$ are both bounded.

\begin{lemma}[Algorithmic strongly $\gamma$-unbiased refinement]
\label[lemma]{lem:stronglyunbiasedrefinement}
Suppose that $\sigma, \beta \in (0,1]$ are parameters, and 
$\gamma:\N \rightarrow \R^+$ is a decreasing function. There is 
a randomized algorithm that given a factor $\cF=\{P_1,...,P_m\}$ 
of degree $d$, runs in $\rO_{\gamma}(n^d)$, with probability $1-\beta$, 
returns a strongly $\gamma$-unbiased factor $\cF'$ with degree bound 
$\Delta$ such that $\cF'$ is $\sigma$-close to being a refinement of $\cF$. 
\end{lemma}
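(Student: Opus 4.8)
The plan is to mirror the proof of \cref{lem:unbiasedrefinement}, adding the structural bookkeeping that conditions (1)--(3) of strong $\gamma$-unbiasedness demand. The first move is to use the claim immediately preceding this lemma to replace ``strongly $\gamma$-unbiased'' by a \emph{finite} test: with $\gamma' \defeq \gamma^{2^d}$ it suffices to guarantee $\bias(Q_\cC)\le \gamma'(\dim(\cF))$ for all $r\le d$ and all not-all-zero collections $\cC=\{c_{i,I}\}_{i\in[m],\,I\subseteq[r],\,|I|\le \Delta(P_i)}$, of which there are only $\sum_{r\le d}|\F|^{B(r)}=\rO_{\dim(\cF),d,|\F|}(1)$. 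The algorithm maintains a $\Delta$-bounded factor, initialized to $\cF$ (with linear dependencies removed in $\rO(n^d)$ time) and $\Delta(P_i)\defeq\deg(P_i)$; with this choice (1) and (2) hold trivially, since for $r>\deg(P_i)$ the identity $\sum_{J\subseteq[r]}(-1)^{|J|}P_i(x+y_J)=0$ (iterated on subsets where needed) expresses $P_i(x+y_{[r]})$ through $P_i(x+y_J)$ with $|J|\le\deg(P_i)$. At each stage and for each admissible $\cC$ we estimate $\bias(Q_\cC)$ by averaging $e_\F(Q_\cC(x,y_1,\dots,y_r))$ over $\rO_{\gamma}(\log(1/\beta))$ random tuples in $(\F^n)^{r+1}$ (evaluable in $\rO(n^d)$ time), distinguishing $\bias\ge\gamma'$ from $\bias\le\tfrac12\gamma'$ with failure probability spread by a union bound over all $\cC$ and all stages; if every estimate is small we output the current factor, which is strongly $\gamma$-unbiased by the claim.

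When some $Q_\cC$ is flagged (estimated bias $>\tfrac34\gamma'$, hence true bias $\ge\tfrac12\gamma'$), we extract from it a biased polynomial over $\F^n$ on which \cref{lem:algorithmicBV} can be run. Following the proof of the preceding claim, after the reduction ensuring $\deg(P_i)\ge|I|$ whenever $c_{i,I}\neq 0$, pick a maximal index set $I$ carrying a nonzero coefficient; differentiating $Q_\cC$ in the $|I|$ directions attached to the coordinates of $I$ and setting $x'\defeq x+y_I$ yields
$$
\widetilde{DQ}(x',h_1,\dots,h_{|I|}) \;=\; \pm\sum_i c_{i,I}\,\big(D_{h_1}\cdots D_{h_{|I|}}P_i\big)(x'),
$$
and $|I|$-fold Cauchy--Schwarz (cf.\ \cref{claim:repeatedcauchy}) gives $\bias(\widetilde{DQ})\ge\bias(Q_\cC)^{2^{|I|}}\ge(\tfrac12\gamma')^{2^d}$, a constant bounded away from $0$. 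By the triangle inequality there is a fixing $\vec h^{*}$ of $(h_1,\dots,h_{|I|})$ for which $R(x')\defeq\sum_i c_{i,I}(D_{h^{*}_1}\cdots D_{h^{*}_{|I|}}P_i)(x')$ over $\F^n$ has $\bias(R)\ge\bias(\widetilde{DQ})$, and such an $\vec h^{*}$ is found by sampling random $\vec h$, estimating $\bias_{x'}(R_{\vec h})$, and keeping one with large estimate. In the non-degenerate case $\deg(P_i)>|I|$ one has $\deg(R)\le d-1$, so \cref{lem:algorithmicBV} applied to the explicit polynomial $R$ returns, with high probability, polynomials $S_1,\dots,S_t$ over $\F^n$ of degree $<\deg(R)$ and a $\Gamma$ with $R$ being $\sigma'$-close to $\Gamma(S_1,\dots,S_t)$; the degenerate case $\deg(P_i)=|I|$ (where $R$ collapses to a constant) reduces via the same differentiation to the statement that $\sum_i c_{i,I}P_i$ has bounded rank, handled by the linear-algebraic argument of \cite{KL08}. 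The case $r=0$ needs no differentiation at all, since $Q_\cC$ is then already a biased polynomial over $\F^n$.

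We now refine as in \cite{KL08}: the relation just obtained exhibits a low-rank linear combination of elements of $\Der(\cF)$, and we respond by adjoining $S_1,\dots,S_t$ (together with the finitely many auxiliary shifts recording the $\vec h^{*}$-derivatives) to the factor while decrementing the degree bound $\Delta(P_{i^{*}})$ by one for a highest-degree $P_{i^{*}}$ occurring in $R$; conditions (1) and (2) are re-established from the structure of the relation, and a suitable potential --- a variant of the dimension vector ordered reverse-lexicographically, exactly as in \cref{lem:gtregularity} and \cite{KL08} --- strictly decreases, so the procedure halts after $\rO_{\gamma}(1)$ stages. Each stage costs $\rO_{\gamma}(n^d)$ (bias estimation, the search for $\vec h^{*}$, one call to \cref{lem:algorithmicBV}), for total running time $\rO_{\gamma}(n^d)$; a union bound over the $\rO_{\gamma}(1)$ stages bounds the overall failure probability, and --- just as in the proof of \cref{lem:unbiasedrefinement} --- allotting an error budget of $\sigma$ divided by the number of stages to each Bogdanov--Viola call makes the final factor $\sigma$-close to being a refinement of $\cF$. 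I expect the main obstacle to be precisely this last, structural step: checking that after adjoining the $S_j$ and lowering $\Delta(P_{i^{*}})$ the factor is still $\Delta$-bounded (conditions (1)--(2)) and that the chosen potential genuinely decreases --- this is where the argument of \cite{KL08} must be reproduced in the approximate, algorithmic setting, and where ``strongly unbiased'' does real work beyond ``unbiased''.
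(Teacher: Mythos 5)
Your algorithm diverges from the paper's at the decisive step, and the divergence opens a gap which you yourself flag as the ``main obstacle'' but do not close. After detecting a biased $Q_\cC(x,y_1,\dots,y_r)$, the paper applies \cref{lem:algorithmicBV} \emph{directly to $Q_\cC$}, viewed as a biased polynomial in the $(r+1)n$ variables $(x,y_1,\dots,y_r)$. The point is not merely that BV produces lower-degree approximators $Q_1,\dots,Q_s$; it is that, by inspecting the BV proof, each approximator has the form
\[
Q_j(x,y_1,\dots,y_r)=Q_\cC(x+h_x,y_1+h_1,\dots,y_r+h_r)-Q_\cC(x,y_1,\dots,y_r),
\]
which, because $Q_\cC$ is a sum of terms $c_{i,I}P_i(x+y_I)$, expands as
\[
Q_j(x,y_1,\dots,y_r)=\sum_{i,I}c_{i,I}\,\big(D_{h_x+h_I}P_i\big)(x+y_I).
\]
That is, each $Q_j$ is a combination of \emph{single-variable} polynomials $R^{(j)}_{i,I}\defeq D_{h_x+h_I}P_i$ of strictly smaller degree, evaluated along the \emph{same} linear forms $x+y_I$ already underlying the factor. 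This structural compatibility is exactly what permits adjoining the $R^{(j)}_{i,I}$ with $\Delta(R^{(j)}_{i,I})=\deg(R^{(j)}_{i,I})$, decrementing $\Delta(P_{i^*})$ to $|I|-1$, and re-establishing $\Delta$-boundedness --- because $P_{i^*}(x+y_I)$ becomes expressible through the $Q_j$'s together with the surviving lower-order shifts $\{P_j(x+y_J)\}_{J\subsetneq I}$.

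You instead differentiate $Q_\cC$ in the $|I|$ directions, fix a concrete $\vec h^*$, and apply BV to the resulting \emph{single-variable} polynomial $R(x')=\sum_i c_{i,I}(D_{h^*_1}\cdots D_{h^*_{|I|}}P_i)(x')$. The $S_1,\dots,S_t$ produced approximate $R$, but $R$ is the $|I|$-fold derivative of $\sum_i c_{i,I}P_i$ with the direction variables collapsed; the $S_j$'s carry no information about $Q_\cC$ itself, bear no relation to the linear forms $x+y_I$, and give no way to rewrite $P_{i^*}(x+y_I)$ as a function of shifted factor elements --- which is precisely condition~(2) of $\Delta$-boundedness that you must restore after lowering $\Delta(P_{i^*})$. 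The natural route from a derivative back to the polynomial --- Taylor expansion, dividing by $|I|!$ --- is exactly the high-characteristic device that \cref{sec:lowchar} is designed to avoid, so it is unavailable. You correctly identify this step as the crux, but ``refine as in \cite{KL08}'' and the unspecified ``auxiliary shifts'' do not resolve it; this is where the proof actually lives, and the paper's decision to run BV on the full multi-variable $Q_\cC$ rather than on a collapsed single-variable surrogate is what makes the bookkeeping close in low characteristic.
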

\begin{proof}
We will design a refinement process which has to stop in a finite 
number of steps which only depends on $\dim(\cF)$, $\sigma$, $\beta$, 
and $\gamma$. We will start with the initial factor being $\cF$ and 
we will set $\Delta(P_i):= \deg(P_i)$. Notice that this automatically 
satisfies the first two properties of \cref{dfn:stronglyunbiased}, 
namely $\Delta$-boundedness. This is because for every $r>\deg(P_i)$ we can use 
the derivative property
$$
P_i(x+y_{[r]}) = \sum_{I\subsetneq [r]} (-1)^{r-|I|}\cdot P_i(x+y_I). 
$$
to write $P_i(x+y_{[r]})$ as a function of $\{P_i(x+y_J)\}_{J\subseteq [r], |J|\leq d}$.

As explained in proof of \cref{lem:uniformrefinement}, at each step, 
we may assume without loss of generality that all polynomials in the factor 
are homogeneous. Moreover, we will assume that for every $r\leq d$, 
the set of polynomials $\{P_{i}(x+y_I)\}_{i\in [m], I\subseteq [r], |I|\leq \Delta(P_i)}$ 
is linearly independent, where $x,y_1,\ldots, y_r\in \F^n$ are variables. This is because 
at each step we have access to explicit description of the polynomials in the factor, 
therefore we can compute each of the $|\F|^{B(r)}$ possible linear combinations, where 
$B(r)=\sum_{i\in [m]}\sum_{j\in [\Delta(P_i)]} {r \choose j}$, 
and check whether it is equal to zero or not. Suppose that for a set of coefficients 
$\cC=\{c_{i,I}\in \F\}_{i\in [m], I\subseteq [r], |I|\leq \Delta(P_i)}$ 
we have $Q_\cC\defeq \sum_{i\in [m], I\subseteq [r], |I|\leq \Delta(P_i)}c_{i,I}P_i(x+y_I) \equiv 0$. 
Let $P_i$ be a polynomial that appears in $Q_\cC$ and let $I$ be maximal such that $c_{i,I}\neq 0$. 
We can let $\Delta(P_i):= |I|-1$, and remove $P_i$ from the factor if $\Delta(P_i)$ becomes zero.

We will stop refining if $\cF$ is of degree $1$. Notice that 
a linearly independent factor of degree $1$ is not biased at all, 
and therefore strongly $0$-unbiased. Assume that $\cF$ is not 
strongly $\gamma$-unbiased with respect to the current $\Delta$, 
then there exists $r$ and a set of coefficients 
$\cC=\{c_{i,I}\in \F\}_{i\in [m], I\subseteq [r], |I|\leq \Delta(P_i)}$ for which
$$
\bias(Q_\cC)=\bias\left( \sum_{i\in [m], I\subseteq [r], |I|\leq \Delta(P_i)} c_{i,I}P_i(x+y_I)\right) > \gamma(\dim(\cF)).
$$

To detect this, we will use the following algorithm:
Set $K:=|\F|^{B(r)}$. We can estimate the bias of each of the $K$ possible 
linear combinations $Q_\cC$ and check whether our estimate is greater 
than $\frac{3\gamma(\dim(\cF))}{4}$. Letting $\beta':=\frac{\beta}{4K}$, 
we can distinguish bias $\geq \gamma(\dim(\cF))$ from bias 
$\leq \frac{\gamma(\dim(\cF))}{2}$ correctly with probability $1-\beta'$ 
by computing the average of $e_\F\left(Q_\cC\right)$ on 
$O_{\dim(\cF))}(\frac{1}{\gamma(\dim(\cF))^2 \beta})$ random sets of vectors 
$x, y_1,\ldots,y_r$. Let $\cC$ be such that the estimated bias for $Q_\cC$ 
was greater than $\frac{3\gamma(\dim(\cF))}{4}$, and let $k=\deg(Q_\cC)$. 
By a union bound with probability $1-\frac{\beta}{4}$, 
$\bias(Q_\cC)\geq \frac{\gamma(\dim(\cF))}{2}$, and by \cref{lem:algorithmicBV} 
we can find, with probability $1-\frac{\beta}{4}$, a set of 
polynomials $Q_1,\ldots, Q_s:(\F^n)^{r+1}\rightarrow \F$ of degree $k-1$ such that 
\begin{itemize}
\item $Q_\cC$ is $\frac{\sigma}{2}$-close to a function of $Q_1,\ldots, Q_s$,
\item $s\leq \frac{16 |\F|^5}{\gamma(\dim(\cF))\cdot \sigma \cdot \beta}$.
\end{itemize}
Moreover we know from the 
proof of \cref{lem:algorithmicBV} that for each $j\in [s]$ we have
\begin{align*}
Q_j(x,y_1,\ldots ,y_r)&= Q_\cC(x+h_x, y_1+h_1,\ldots, y_r+h_r)- Q_\cC(x,y_1,\ldots, y_r)\\ &=
\sum_{i\in [m], I\subseteq [r], |I|\leq \Delta(P_i)} c_{i,I}\cdot D_{h_x+ h_I}P_i(x+y_I).
\end{align*}
for some fixed vectors $h_x,h_1,\ldots,h_r\in \F^n$. Let $R^{(j)}_{i,I}\defeq D_{h_x+h_I}P_i$ so that 
$\deg(R^{(j)}_{i,I})< \deg(P_i)$ and $Q_j$ is measurable in 
$\{R^{(j)}_{i,I}(x+y_I)\}_{i\in [m], I\subseteq [r],|I|\leq \Delta(P_i)}$.

Let $P_i$ be a polynomial of maximum degree that appears in $Q_\cC$, 
and let $I$ be maximal such that $c_{i,I}\neq 0$. We will add each 
$R^{(j)}_{i,I}$ to $\cF$ with $\Delta(R^{(j)}_{i,I})=\deg(R^{(j)}_{i,I})$ 
and let $\Delta(P_i):=|I|-1$ and discard $P_i$ if $|I|$ becomes zero. 
Notice that the new factor is $\Delta$-bounded because $P_i(x+y_I)$ 
can be written as a function of $\{Q_j(x,y_1,\ldots,y_r)\}_{j\in [s]}$ and
$\{P_j(x+y_J)\}_{J\subseteq I, |J|\leq \Delta(P_j)}$. Moreover each $Q_j$ is 
measurable in $\{R^{(j)}_{i,I}(x+y_I)\}_{i\in [m], I\subseteq [r], |I|\leq \Delta(P_i)}$.

We can prove that the process stops after a constant number of steps by a strong 
induction on the dimension vector $(M_1,\ldots,M_d)$ of the factor $\cF$, 
where $M_i$ is the number of polynomials of degree $i$ in the factor and 
$d$ is the degree of the factor. The induction step follows from the fact that at 
every step we discard $P_i$ or decrease the $\Delta(P_i)$ 
for some $i$ and add polynomials of lower degree. Let $\cF'$ and 
$\Delta:\cF'\rightarrow \N$ be the output of the algorithm. The 
claim follows by our choices for error distances and the probabilities. 
\end{proof}
\subsection{From Approximation to Computation}\label{sec:approxtocomputelowchar}
Our goal is to prove the following analogue of 
\cref{lem:approxtocomputelowchar} for our notion of strongly unbiased factors.

\begin{lemma}[Approximation by Strongly Unbiased Factor implies Computation]\label[lemma]{lem:approxtocomputelowchar_new}
For every $d\geq 0$ there exists a constant $\sigma_d$ and a decreasing 
function $\gamma:\N\rightarrow \R^+$ such that the following holds. 
Let $P:\F^n\rightarrow \F$ be a polynomial of degree $d$, $\cF=\{P_1,\ldots, P_m\}$ 
be a strongly $\gamma$-unbiased polynomial factor of degree $d-1$ with 
degree bound $\Delta$, and let $\Gamma:\F^m\rightarrow \F$ be a function 
such that $P$ is $\sigma_d$-far from $\Gamma(\cF)$. Then $P$ is in 
fact measurable in $\cF$. 
\end{lemma}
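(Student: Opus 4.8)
The plan is to follow the proof of \cref{prop:approxtocompute_new} essentially verbatim, replacing its two equidistribution inputs by versions that hold in every characteristic. The plain equidistribution \cref{lem:equidistribution} can be reused unchanged, since a strongly $\gamma$-unbiased factor is in particular $\gamma$-unbiased: taking $r=0$ in condition (3) of \cref{dfn:stronglyunbiased} forces $I=\emptyset$ and $y_\emptyset=0$, so condition (3) then reads $\bias(\sum_i c_{i,\emptyset}P_i)\le\gamma(\dim(\cF))$ for every non-zero $(c_{i,\emptyset})$, which is exactly $\gamma$-unbiasedness. We may assume $d\ge1$ (for $d=0$, $P$ is constant and there is nothing to prove). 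Set $\tP:=\Gamma(\cF)$, so that $\Prob{x}{P(x)\ne\tP(x)}\le\sigma_d$, and $X:=\{x\in\F^n:P(x)=\tP(x)\}$. As $\tP$ is $\cF$-measurable it is constant on every atom of $\cF$, so it suffices to show that $P$ too is constant on every atom, i.e.\ measurable in $\cF$. Choosing $\gamma$ to decrease fast enough that $\gamma(\dim(\cF))\ll|\F|^{-\dim(\cF)}$, \cref{lem:equidistribution} gives that $\cF$ has exactly $\norm{\cF}$ non-empty atoms, each of density $\tfrac{1}{\norm{\cF}}\pm\gamma$; a Markov argument then shows that at least a $1-O(\sqrt{\sigma_d})$ fraction of atoms $A$ are \emph{almost good}, meaning $\Prob{x\in A}{x\in X}\ge1-O(\sqrt{\sigma_d})$.

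The one genuinely new ingredient is the analogue, for strongly $\gamma$-unbiased factors, of the parallelepiped equidistribution \cref{lem:parallelequidistribution}. For an $r$-dimensional parallelepiped, the relevant ``primitive'' coordinates are $P_i(x+y_I)$ with $i\in[m]$ and $I\subseteq[r]$, $|I|\le\Delta(P_i)$, of which there are $B(r)$; by property (2) of \cref{dfn:stronglyunbiased}, $P_i(x+y_J)$ for $|J|>\Delta(P_i)$ is a function of these, so fixing the primitive coordinates fixes the whole parallelepiped image $(\cF(x+\omega\cdot y))_{\omega\in\{0,1\}^r}$. First, since $\gamma<1$, condition (3) forces the family $\{P_i(x+y_I)\}$ of primitive-coordinate polynomials in the variables $x,y_1,\dots,y_r$ to be linearly independent (a non-trivial linear dependence would give a non-zero coefficient vector whose combination is the zero polynomial, of bias $1$). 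Second, although condition (3) bounds the bias of such a combination over a \emph{random} base point $x$, the repeated Cauchy--Schwarz computation of \cref{claim:repeatedcauchy} (derive the combination in the directions of a maximal $I$ with a non-zero coefficient and change variables $x'=x+y_I$) upgrades it: for any \emph{fixed} $x$ and any non-zero $\{c_{i,I}\}_{|I|\le\Delta(P_i)}$,
\[
\bias\bigl(\sum_{i,I}c_{i,I}\,P_i(x+y_I)\bigr)^{2^{d}}\;\le\;\gamma(\dim(\cF)),
\]
the bias being over $y_1,\dots,y_r$ (here we use $|I|\le\Delta(P_i)\le d-1$). Expanding the indicator of $\{P_i(x+y_I)=b_{i,I}\ \forall(i,I):|I|\le\Delta(P_i)\}$ into additive characters, exactly as in the proof of \cref{lem:equidistribution}, the $\Lambda=0$ term contributes $|\F|^{-B(r)}$ and every other term is $|\F|^{-B(r)}$ times such a fixed-$x$ bias of a non-zero combination; hence for every fixed $x$ and every target $b\in\F^{B(r)}$,
\[
\Prob{y_1,\dots,y_r}{P_i(x+y_I)=b_{i,I}\ \ \forall(i,I):|I|\le\Delta(P_i)}=\frac{1}{|\F|^{B(r)}}\pm\gamma(\dim(\cF))^{2^{-d}},
\]
which is positive for $\gamma$ small enough. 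Taking $r=d+1$ (legitimate since condition (3) is imposed for all $r$) gives exactly the parallelepiped equidistribution that the argument needs.

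With these two facts in hand, the remainder is the atom argument of \cref{prop:approxtocompute_new}, run with $(d+1)$-dimensional parallelepipeds; the derivative identity $P(x)=-\sum_{\omega\in\{0,1\}^{d+1},\,\omega\ne0}(-1)^{|\omega|}P(x+\omega\cdot h)$ it relies on holds for any polynomial of degree $\le d$ over any field (the $(d+1)$-st iterated additive difference of a total-degree-$\le d$ polynomial vanishes), so characteristic plays no role. Choosing $\gamma$ so that moreover $\gamma(\dim(\cF))^{2^{-d}}\ll|\F|^{-B(d+1)}$, the equidistribution above holds with multiplicative $(1\pm o(1))$ precision and so controls conditional distributions. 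As in \cref{lem:goodatoms}, for an almost good atom $A=\cF^{-1}(t)$ and any $x\in A$ one can then find $h$ with every vertex $x+\omega\cdot h$ ($\omega\ne0$) lying in $A\cap X$ (condition on all vertices landing in $A$, achievable with probability $\approx|\F|^{-B(d+1)}$, and use that, conditionally, each vertex is still roughly uniform in $A$, where $A\setminus X$ has relative density $O(\sqrt{\sigma_d})$); then $P(x)=-\sum_{\omega\ne0}(-1)^{|\omega|}\tP(x+\omega\cdot h)=\tP|_A$, so $P\equiv\tP|_A$ on $A$. Thus all but an $O(\sqrt{\sigma_d})$ fraction of atoms are totally good, and \cref{claim:badatomsaregood}, which needs only $\sigma_d\le c2^{-d}$ for a small absolute constant $c$, upgrades every atom: fix $x_0\in A$, pick one $h^*$ sending each nonzero vertex into a good atom $A_\omega$, record $b_{i,I}:=P_i(x_0+h^*_I)$, and apply the equidistribution statement for every $x\in A$ (valid since $b_{i,\emptyset}=P_i(x_0)=P_i(x)$) to get an $h'$ with $x+\omega\cdot h'\in A_\omega$ for all $\omega$; the derivative identity then gives $P(x)=-\sum_{\omega\ne0}(-1)^{|\omega|}(P|_{A_\omega})$, a value depending only on $A$. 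Hence $P$ is constant on every atom, i.e.\ measurable in $\cF$.

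The main obstacle is the second step: extracting from condition (3) of strong $\gamma$-unbiasedness a parallelepiped equidistribution strong enough to substitute for \cref{lem:nearorthogonality}/\cref{lem:parallelequidistribution} in the uniform case — in particular making sure the primitive-coordinate bookkeeping is right (primitive coordinates are linearly unconstrained, hence equidistribute over all of $\F^{B(d+1)}$; the remaining faces are determined by property (2)) and that the estimate can be made multiplicative so that it governs the conditional distributions used in \cref{lem:goodatoms}. The remaining care is quantitative: one should take $\sigma_d=2^{-\Theta(d)}$ (as with $\epsilon_d$ in \cref{lem:approxtocomputelowchar}) so that all union bounds in the atom argument close, and choose $\gamma$ decreasing fast enough in $\dim(\cF)$ that the additive $\pm\gamma(\dim(\cF))^{2^{-d}}$ errors are dominated by the main terms $|\F|^{-B(d+1)}$ and $1/\norm{\cF}$.
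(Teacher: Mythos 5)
Your proof takes a genuinely different route from the paper's, and it has a gap at the central step.

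\textbf{What the paper does.} The paper follows the Kaufman--Lovett blueprint. After establishing that $1-O(\sqrt{\sigma_d})$ of the atoms are almost good, it shows an almost good atom $A$ is totally good \emph{without} proving any conditional-uniformity statement: it sets $B = A\setminus X$, picks $x\in B$ (assuming for contradiction $B\neq\emptyset$), and bounds $p_B := \Pr[\text{all vertices in }A\text{ and }x+Y_{I_0}\in B]$ via Cauchy--Schwarz by a pair-of-hypercubes count, i.e.\ $p_B^2 \le |B|\sum_{x_0\in B}\Pr[\cdots\text{ and }x+Y_{I_0}=x_0]^2$, which is exactly the quantity controlled by \cref{lem:technicalprobabilities} (2) (adapted from Lemma~7 of~\cite{KL08}, in turn resting on \cref{cor8ofKL} and \cref{lem:equidistlowchar}). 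This gives the weaker but sufficient bound $p_B/p_A\lesssim\sqrt{|B|/|A|}$, after which a union bound over the non-empty $I_0\subseteq[d+1]$ and \cref{claim:badatomsaregood} finish the job.

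\textbf{Where your argument diverges and where the gap is.} You instead transplant the Green--Tao route (\cref{lem:goodatoms}): condition on all $2^{d+1}$ vertices landing in $A$ and assert that, \emph{conditionally}, each non-trivial vertex $x+\omega\cdot h$ is ``roughly uniform in $A$,'' so that it lands in $A\setminus X$ with probability $O(\sqrt{\sigma_d})$ and a union bound closes. That conditional-uniformity claim does not follow from the label-level parallelepiped equidistribution you derive. Your equidistribution controls the joint law of the \emph{factor labels} $\bigl(P_i(x+y_I)\bigr)_{i,\,1\le|I|\le\Delta(P_i)}$, but ``$x+h_{I_0}$ is roughly uniform on $A$'' is a statement about the \emph{point} $x+h_{I_0}\in\F^n$, not just its label. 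Concretely, after a change of variables with $u:=h_{I_0}$, one needs, for every fixed $z\in A$, an estimate on $\Pr[\text{remaining primitive labels}=t \mid h_{I_0}=z-x]$; when $|I_0|=1$ this is a two-base-point parallelepiped count (directions $h_2,\dots,h_{d+1}$ shared between bases $x$ and $z$), which is precisely the pair-of-hypercubes structure that \cref{lem:technicalprobabilities}(2) handles. You do not supply this, and it is not a formal consequence of the single-base equidistribution: once you condition on $h_{I_0}$, some of the primitive coordinates become deterministic functions of $u$, so the joint law of $(u,\text{labels})$ is emphatically not product, and one has to redo the character-sum argument over the restricted family of polynomials in the variables $h_2,\ldots,h_{d+1}$ with both $x$ and $z$ frozen. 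This is a real additional step, and it is the very step the Cauchy--Schwarz trick in the paper is designed to avoid.

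Two smaller points. First, the ``primitive-coordinate'' equidistribution you write is over $\F^{B(r)}$, but since $P_i(x+y_\emptyset)=P_i(x)$ is fixed once $x$ is fixed, the correct space has dimension $B(r)-m$ (non-empty $I$ only); this needs to be tracked for the counting to close. Second, your Cauchy--Schwarz exponent should use the fact that $|I|\le\Delta(P_i)\le d-1$, so the exponent is $2^{|I|}\le 2^{d-1}$, not $2^d$; it is a minor bookkeeping discrepancy but matters when you calibrate $\gamma$. Your treatment of the remaining ($\le O(\sqrt{\sigma_d})$ fraction of) atoms via \cref{claim:badatomsaregood} is fine, since there the equidistribution you derived is used only unconditionally. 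If you wish to pursue your route, you would need to state and prove the pair-of-hypercubes analogue; at that point you would essentially have re-derived \cref{lem:technicalprobabilities}, and the paper's Cauchy--Schwarz phrasing is then the cleaner way to use it.
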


\begin{remark}{\bf(Exact refinement)} 
One can use the above lemma to modify the refinement process of \cref{lem:stronglyunbiasedrefinement} 
to achieve exact refinement instead of approximate refinement. This can be done by using induction on the degree 
of the factor and at every round refining the new polynomials that are to be added to the factor to a strongly unbiased 
set of polynomials. Then \cref{lem:approxtocomputelowchar_new} ensures that the removed polynomial is 
measurable in the new set of polynomials and therefore at every step we have exact refinement of the original 
factor. 
\end{remark}

In order to prove \cref{lem:approxtocomputelowchar_new}, we will try to adapt the proof by~\cite{KL08} to our setting. 
To do this we will need some tools. First lemma which follows immediately 
from parts (1) and (2) of \cref{dfn:stronglyunbiased}.

\begin{lemma}[\cite{KL08}]\label[lemma]{lem:distributionreduction}
Let $\cF=\{P_1,\ldots,P_m\}$ be a $\Delta$-bounded factor. 
Let $x,x'\in \F^n$ be two fixed vectors for which $P_i(x)=P_i(x')$ for all 
$i\in [m]$, namely $x$ and $x'$ belong to the same atom of $\cF$. Let $z_1,\ldots, z_k\in \F^n$ be vectors for some $k\geq 1$, 
and let $Y_1,\ldots, Y_k\in \F^n$ be $k$ random variables. Then the 
following two events are equivalent:
\begin{enumerate}
\item $A=[P_i(x+Y_I)= P_i(x'+z_I) \text{ for all } i\in [m] \text{ and } I\subseteq [k]]$
\item $B=[P_i(x+Y_I)= P_i(x'+z_I) \text{ for all } i\in [m] \text{ and } I\subseteq [k], 1\leq |I|\leq \Delta(P_i)]$.
\end{enumerate}
\end{lemma}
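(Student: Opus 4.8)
The plan is to establish the two implications $A \Rightarrow B$ and $B \Rightarrow A$ separately; in fact the claimed equivalence will be proved pointwise in $(Y_1,\ldots,Y_k)$, \ie I fix an arbitrary realization of these vectors and argue deterministically, so that $A$ and $B$ are shown to be the literally same subset of the sample space. The implication $A \Rightarrow B$ is immediate, since the equalities defining $B$ are precisely those equalities defining $A$ with $1 \le |I| \le \Delta(P_i)$, a subcollection of the conditions defining $A$. All the content is in $B \Rightarrow A$, and the only ingredient needed is $\Delta$-boundedness, i.e.\ parts (1) and (2) of \cref{dfn:stronglyunbiased}.

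For $B \Rightarrow A$, assume $B$, fix $i \in [m]$ and $I \subseteq [k]$, and show $P_i(x + Y_I) = P_i(x' + z_I)$ by splitting on $|I|$. If $I = \emptyset$ this reads $P_i(x) = P_i(x')$, which holds by the hypothesis that $x$ and $x'$ lie in the same atom of $\cF$. If $1 \le |I| \le \Delta(P_i)$, it is literally one of the equalities asserted by $B$. The only remaining case is $|I| > \Delta(P_i)$.

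In that case set $r := |I|$ and write $I = \{j_1 < \cdots < j_r\}$, which gives a bijection $\ell \mapsto j_\ell$ from $[r]$ onto $I$ carrying a subset $J \subseteq [r]$ to $I_J := \{ j_\ell : \ell \in J \} \subseteq I$ with $|I_J| = |J|$. Since $r > \Delta(P_i)$, part (2) of \cref{dfn:stronglyunbiased} provides a function $\Gamma_{i,r}$, depending only on $i$ and $r$, for which $P_i(x + y_{[r]}) = \Gamma_{i,r}\big( P_j(x + y_J) : j \in [m], J \subseteq [r], |J| \le \Delta(P_j) \big)$ holds identically in the variables $x, y_1, \ldots, y_r$. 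Substituting $y_\ell \mapsto Y_{j_\ell}$ (so that $y_{[r]} \mapsto Y_I$ and $y_J \mapsto Y_{I_J}$) gives
$$
P_i(x + Y_I) = \Gamma_{i,r}\big( P_j(x + Y_{I_J}) : j \in [m],\ J \subseteq [r],\ |J| \le \Delta(P_j) \big),
$$
and instead substituting $y_\ell \mapsto z_{j_\ell}$ together with renaming the base variable $x \mapsto x'$ gives
$$
P_i(x' + z_I) = \Gamma_{i,r}\big( P_j(x' + z_{I_J}) : j \in [m],\ J \subseteq [r],\ |J| \le \Delta(P_j) \big),
$$
with the same $\Gamma_{i,r}$. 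It then remains to match the arguments: for each relevant pair $(j,J)$, if $I_J = \emptyset$ then $P_j(x + Y_{I_J}) = P_j(x) = P_j(x') = P_j(x' + z_{I_J})$ by the atom hypothesis, while if $1 \le |I_J| \le \Delta(P_j)$ then $P_j(x + Y_{I_J}) = P_j(x' + z_{I_J})$ by event $B$ (valid since $I_J \subseteq I \subseteq [k]$). Hence the two right-hand sides above coincide, so $P_i(x + Y_I) = P_i(x' + z_I)$, which finishes this case and hence the lemma.

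The argument is essentially bookkeeping and I do not expect a real obstacle; the one point that must be used carefully is that the reduction function $\Gamma_{i,r}$ furnished by $\Delta$-boundedness is the same around $x$ and around $x'$ — it is point-independent — and this is exactly what lets the coincidences on the ``small'' subsets $|I_J| \le \Delta(P_j)$ propagate to the coincidence on the ``large'' subset $I$. A secondary point to get right is the relabelling of the $k$ random directions so that an arbitrary $I \subseteq [k]$ plays the role of $[r]$ in part (2) of \cref{dfn:stronglyunbiased}.
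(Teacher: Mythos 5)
Your proof is correct and fills in exactly the argument the paper has in mind: the paper merely asserts that the lemma ``follows immediately from parts (1) and (2) of \cref{dfn:stronglyunbiased}'', and your relabel-then-substitute argument is the natural unpacking of that claim, correctly noting that the identity in part (2) holds in the free variables $x,y_1,\ldots,y_r$ so it can be instantiated at $(x, Y_{j_1},\ldots,Y_{j_r})$ and at $(x', z_{j_1},\ldots,z_{j_r})$ with the same $\Gamma_{i,r}$, after which the matching of arguments reduces to the $|I|=0$ case (same atom) and the $1\le|I|\le\Delta(P_j)$ case (event $B$).
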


The following equidistribution property of strongly unbiased factors 
will be a main building block of the proof of \cref{lem:approxtocomputelowchar_new}. 

\begin{lemma}[Equidistribution of strongly unbiased factors]\label[lemma]{lem:equidistlowchar}
Let $\gamma:\N\rightarrow \R^+$ be a decreasing function. 
Let $\cF=\{P_1,\ldots,P_m\}$ be a strongly $\gamma$-unbiased factor of degree $d$
with degree bound $\Delta$. Let $Y=(Y_1,\ldots Y_k)\in (\F^n)^k$ be 
random variables. For any non-empty $I\subseteq [k]$, let $x^I\in \F^n$ be 
fixed and $a^{(I)}=(a^{(I)}_1,\ldots,a^{(I)}_k)\in \F^k$ be a vector such that
\begin{itemize}
\item For every $i\in I$, $a^{(I)}_i\neq 0$,
\item For every $i\notin I$, $a^{(I)}_i=0$ 
\end{itemize}
Then the joint distribution of 
$$
\left( P_i\bigg(x^I+\sum_{j\in I}a^{(I)}_j Y_j\bigg): i\in [m], I\subseteq [k], 1\leq |I|\leq \Delta(P_i)\right)
$$
is $\gamma(\dim(\cF))^{1/2^{d}}$-close to the uniform distribution on $\F^{B(k)}$.
\end{lemma}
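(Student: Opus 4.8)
The plan is to run the Fourier / character-sum argument of \cref{lem:equidistribution}, with the needed bias bounds supplied by part~(3) of \cref{dfn:stronglyunbiased}, after a normalization step that disposes of the extra data (the base points $x^I$ and the scalars $a^{(I)}_j$) not present in that definition. Write $Q_\cC(Y)\defeq\sum_{i,I}c_{i,I}\,P_i\big(x^I+\sum_{j\in I}a^{(I)}_jY_j\big)$ for a coefficient tuple $\cC=\{c_{i,I}\}$ indexed by the same set as the coordinates of our random vector. Expanding the indicator of the event $\{P_i(x^I+\sum_{j\in I}a^{(I)}_jY_j)=b_{i,I}\ \forall (i,I)\}$ in additive characters gives
\[
\Pr[\text{event}]=\frac{1}{|\F|^{B(k)}}\sum_{\cC}e_\F\!\Big(-\textstyle\sum_{i,I}c_{i,I}b_{i,I}\Big)\cdot\E_Y\big[e_\F(Q_\cC(Y))\big].
\]
The term $\cC=0$ contributes exactly $|\F|^{-B(k)}$, so — exactly as in \cref{lem:equidistribution}, where ``close'' is meant coordinate-by-coordinate — it suffices to prove $\bias(Q_\cC)\le\gamma(\dim(\cF))^{1/2^d}$ for every nonzero $\cC$; the remaining $|\F|^{B(k)}-1$ terms then contribute at most $\gamma(\dim(\cF))^{1/2^d}$ in absolute value.

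To bound $\bias(Q_\cC)$, choose $I^*$ maximal with respect to inclusion among the sets $I$ for which some $c_{i,I^*}\ne 0$; for such an $i$ we have $1\le|I^*|\le\Delta(P_i)\le d$. Differentiate $Q_\cC$ in the $|I^*|$ directions $\{Y_j:j\in I^*\}$, introducing fresh variables $h_j$: a summand $P_i(x^I+\sum_{j\in I}a^{(I)}_jY_j)$ depends on $Y_j$ iff $j\in I$, so after all $|I^*|$ successive differences only summands with $I\supseteq I^*$ survive, and by maximality of $I^*$ these are precisely the ones with $I=I^*$ — which all share the base point $x^{I^*}$ and the scalar vector $a^{(I^*)}$. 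Thus $DQ_\cC$ equals $\sum_i c_{i,I^*}\sum_{J\subseteq I^*}(\pm1)\,P_i\big(x^{I^*}+\sum_{j\in I^*}a^{(I^*)}_jY_j+\sum_{j\in J}a^{(I^*)}_jh_j\big)$, with sign depending only on $|J|$. Substituting $x'\defeq x^{I^*}+\sum_{j\in I^*}a^{(I^*)}_jY_j$ and $z_j\defeq a^{(I^*)}_jh_j$ (both uniform and independent over $\F^n$, since every $a^{(I^*)}_j\ne 0$), this becomes $\sum_{i,\,J\subseteq I^*}c'_{i,J}\,P_i(x'+z_J)$ with $\{c'_{i,J}\}$ a nonzero tuple satisfying the degree-bound constraint $|J|\le|I^*|\le\Delta(P_i)$; hence, after relabelling $I^*$ as $\{1,\dots,|I^*|\}$, part~(3) of \cref{dfn:stronglyunbiased} (with $r=|I^*|$) gives $\bias(DQ_\cC)\le\gamma(\dim(\cF))$.

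Finally, \cref{claim:repeatedcauchy}, applied with the $|I^*|$ differentiation directions, yields $\bias(Q_\cC)^{2^{|I^*|}}\le\bias(DQ_\cC)\le\gamma(\dim(\cF))$. Since we may assume $0<\gamma(\dim(\cF))\le 1$ (otherwise the conclusion is vacuous) and $|I^*|\le d$, this gives $\bias(Q_\cC)\le\gamma(\dim(\cF))^{1/2^{|I^*|}}\le\gamma(\dim(\cF))^{1/2^d}$, as needed. The main thing to be careful about is the normalization step: verifying that after differentiating in the $I^*$-directions every surviving summand really does carry the common base point $x^{I^*}$, and that the rescaled coefficient collection $\{c'_{i,J}\}$ is indexed in exactly the form demanded by part~(3) of \cref{dfn:stronglyunbiased} (in particular respecting $|J|\le\Delta(P_i)$), so that the definition is genuinely applicable; the rest is the routine character expansion already used for \cref{lem:equidistribution}.
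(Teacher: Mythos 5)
Your proof is correct and follows essentially the same route as the paper's: a character-sum reduction to bounding $\bias(Q_\cC)$ for each nonzero coefficient tuple $\cC$, picking a maximal $I^*$ with a nonzero coefficient, differentiating in the directions $\{Y_j : j\in I^*\}$ so that only the $I=I^*$ terms survive, changing variables to land in the form required by part~(3) of \cref{dfn:stronglyunbiased}, and invoking \cref{claim:repeatedcauchy} to convert the bound on $\bias(DQ_\cC)$ into a bound on $\bias(Q_\cC)^{2^{|I^*|}}$. The one place where you are actually \emph{more} careful than the paper's writeup is the change of variables: you set $z_j = a^{(I^*)}_j h_j$ and explicitly use $a^{(I^*)}_j\neq 0$ to preserve uniformity and independence, whereas the paper's text loosely describes $z_j$ as just ``the projection of $h_j$,'' omitting the rescaling by $a^{(I^*)}_j$; the substance is the same, but your version makes the applicability of the definition transparent.
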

\begin{proof}
Similar to the proof of \cref{lem:parallelequidistribution}, 
through Fourier analysis it suffices to show that each nonzero linear combination of polynomials
$P_i\big(x^I+\sum_{j\in I}a^{(I)}_j Y_j\big)$ has small bias. Suppose that 
$\cC=\{c_{i,I}\in \F\}_{i\in [m], I\subseteq [k], 1\leq |I|\leq \Delta(P_i)}$ is a collection of 
coefficients, not all of which are zero. Let 
$$
Q_\cC(Y_1,\ldots,Y_k)\defeq \sum_{i\in [m], I\subseteq [k], |I|\leq \Delta(P_i)} c_{i,I}\cdot P_i(x^I+Y_I). 
$$
Let $J$ be a set with maximal support such that there is a nonzero $c_{i,J}$ for some $i\in [m]$. Without 
loss of generality assume that $J=\{1,\ldots, r\}$, for $r=|J|$. We will derive $Q_\cC$ in 
directions of $Y_1,\ldots, Y_r$. Namely, 
$$
DQ(Y_1,\ldots, Y_k, h_1,\ldots, h_r) \defeq D_{h_1}\cdots D_{h_r} Q_\cC(Y_1,\ldots, Y_k),
$$
where $h_i$ is only supported on $Y_i$ coordinates. Then we have
\begin{align*}
DQ(Y_1,\ldots, Y_k, h_1,\ldots, h_r) &= \sum_{i\in [m], I\subseteq [k], |I|\leq \Delta(P_i)}c_{i,I}D_{h_1}\cdots D_{h_r} \big(P_i(x^I+\sum_{j\in I}a^{(I)}_j Y_j)\big) \\ &=
\sum_{i\in [m]: \Delta(P_i)\geq |J|} c_{i,J}\cdot (D_{z_1}\cdots D_{z_r} P_i)(x^J+\sum_{j\in J}a^{(J)}_j Y_j) \\ &=
\sum_{i\in [m]: \Delta(P_i)\geq |J|} c_{i,J}\cdot \sum_{I\subseteq J} (-1)^{|I|} P_i(x^J+\sum_{j\in J}a^{(J)}_j Y_j + z_I),
\end{align*}
where $z_i$ is the projection of $h_i$ on $Y_I$ coordinates. Now by the change of variable $X:=x^J+\sum_{j\in J}a^{(J)}_j Y_j$, we have
$$
DQ(Y_1,\ldots, Y_k, h_1,\ldots, h_r)= \sum_{i\in [m], I\subseteq J, |I|\leq \Delta(P_i)} (-1)^{|I|}c_{i,J} \cdot P_i(X+ z_I).
$$
Now $\cF$ being strongly $\gamma$-unbiased with degree bound $\Delta$ and \cref{claim:repeatedcauchy} imply that
$$
\bias(Q_\cC)^{2^{|J|}} \leq \bias(DQ) \leq \gamma(\dim(\cF)).
$$
\end{proof}

Following is an immediate corollary of \cref{lem:distributionreduction} 
and \cref{lem:equidistlowchar}.

\begin{corollary}\label[corollary]{cor8ofKL}
Suppose that $\cF=\{P_1,...,P_m\}$ is a strongly $\gamma^{2^d}$-unbiased 
factor of degree $d$ with degree bound $\Delta$, for some decreasing function 
$\gamma:\N\rightarrow \R^+$. Let $x,x'\in \F^n$ be two fixed vectors such that 
$P_i(x)=P_i(x')$ for all $i\in [m]$. Let $z_1,\ldots, z_k\in \F^n$ be 
values for some $k\geq 1$, and let $Y_1,\ldots,Y_k\in \F^n$ be $k$ random variables. Then
$$
\Pr \bigg[ P_i(x+Y_I)= P_i(x'+z_I): \forall i\in [m], \forall I\subseteq [k] \bigg]= \big(1\pm \gamma(\dim(\cF))\big) \cdot |\F|^{-B(k)}.
$$
\end{corollary}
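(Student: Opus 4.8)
The plan is to combine \cref{lem:distributionreduction} and \cref{lem:equidistlowchar} in the most direct way possible. First I would observe that, because $\cF$ is $\Delta$-bounded (parts (1) and (2) of \cref{dfn:stronglyunbiased}), \cref{lem:distributionreduction} tells us that the event
$$
A = \big[\,P_i(x+Y_I) = P_i(x'+z_I)\ \text{ for all } i\in[m],\ I\subseteq[k]\,\big]
$$
is \emph{equivalent} to its restriction
$$
B = \big[\,P_i(x+Y_I) = P_i(x'+z_I)\ \text{ for all } i\in[m],\ I\subseteq[k],\ 1\leq |I|\leq \Delta(P_i)\,\big].
$$
So it suffices to compute $\Pr[B]$. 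Here I would also use the hypothesis $P_i(x) = P_i(x')$ for all $i$ to dispose of the $|I|=0$ (i.e.\ $I=\emptyset$) constraints: those are satisfied deterministically, so they contribute nothing, which is exactly why the count $B(k)$ only ranges over $0\leq j\leq \Delta(P_i)$ but the event $B$ effectively only constrains $1\leq |I|\leq \Delta(P_i)$.

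Next I would express $\Pr[B]$ via Fourier expansion over $\F$, exactly as in the proofs of \cref{lem:equidistribution} and \cref{lem:parallelequidistribution}: writing each indicator $\mathbbm 1_{P_i(x+Y_I) = P_i(x'+z_I)}$ as $\frac{1}{|\F|}\sum_{c\in\F} e_\F\big(c\,(P_i(x+Y_I) - P_i(x'+z_I))\big)$ and expanding the product over all $B(k)$ relevant pairs $(i,I)$. This yields
$$
\Pr[B] = \frac{1}{|\F|^{B(k)}}\sum_{\cC} e_\F\!\Big(-\!\sum_{i,I} c_{i,I} P_i(x'+z_I)\Big)\ \E_{Y_1,\dots,Y_k}\!\Big[\, e_\F\!\Big(\sum_{i,I} c_{i,I} P_i(x+Y_I)\Big)\Big],
$$
where $\cC=\{c_{i,I}\}$ ranges over all coefficient tuples indexed by $(i,I)$ with $1\leq |I|\leq \Delta(P_i)$. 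The all-zero term contributes exactly $1$. For every nonzero $\cC$, the inner expectation is, up to the harmless translation $x^I := x$ for all $I$, precisely the bias of the polynomial $\sum_{i,I} c_{i,I} P_i(x^I + Y_I)$ appearing in \cref{lem:equidistlowchar} (with the $a^{(I)}_j$ all equal to $1$). Since $\cF$ is strongly $\gamma^{2^d}$-unbiased with degree bound $\Delta$, that lemma bounds the bias of each such nonzero combination by $(\gamma^{2^d})^{1/2^d} = \gamma(\dim(\cF))$. Summing the error over the at most $|\F|^{B(k)}$ nonzero terms would overshoot, so instead I would bound the whole error sum $\frac{1}{|\F|^{B(k)}}\big|\sum_{\cC\neq \underline 0} \cdots\big| \leq \frac{1}{|\F|^{B(k)}}\cdot |\F|^{B(k)}\cdot\gamma(\dim(\cF))$ — wait, that still gives $\gamma(\dim(\cF))$ as an \emph{additive} error, and we want it as a multiplicative $\pm\gamma(\dim(\cF))$ factor on $|\F|^{-B(k)}$; so in fact I would instead note the additive form $\Pr[B] = |\F|^{-B(k)} \pm \gamma(\dim(\cF))$ is already what we want once we absorb constants, but to get the stated multiplicative form one chooses $\gamma$ decreasing fast enough that $\gamma(\dim(\cF))\cdot|\F|^{B(k)}$ can be re-read as $\gamma'(\dim(\cF))$ — this is the standard bookkeeping done throughout the paper (cf.\ the choice $\gamma(x)=\tfrac{1}{2p^x}$ in \cref{lem:query}), and I would simply state the conclusion in the $\big(1\pm\gamma(\dim(\cF))\big)|\F|^{-B(k)}$ form after renaming $\gamma$.

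The only real subtlety — and the step I expect to need the most care — is matching the indexing and the application of \cref{lem:equidistlowchar} to the present situation: in that lemma the fixed shifts $x^I$ and scalars $a^{(I)}_j$ are allowed to depend on $I$, whereas here we want the single shift $x$ (independent of $I$) and all scalars equal to $1$, which is a legitimate special case. I would also double-check that the translation of the $P_i(x'+z_I)$ terms into a pure phase $e_\F(-\sum c_{i,I} P_i(x'+z_I))$ — which has modulus $1$ and does not affect any of the magnitude estimates — is correctly handled, and that the $\Delta$-boundedness invocation of \cref{lem:distributionreduction} genuinely requires nothing about the coefficients, only about the structure of $\cF$. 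Everything else is the routine Fourier-expansion calculation already carried out twice in the paper.
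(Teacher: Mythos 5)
Your argument is correct and follows the approach the paper has in mind: the paper presents \cref{cor8ofKL} as an immediate consequence of \cref{lem:distributionreduction} and \cref{lem:equidistlowchar}, and that is precisely the combination you carry out (your Fourier expansion essentially re-proves the conclusion of \cref{lem:equidistlowchar} inline rather than citing its statement as a black box, but this is the same computation). The additive-versus-multiplicative discrepancy you flag --- the calculation naturally yields $|\F|^{-B(k)}\pm\gamma(\dim(\cF))$ rather than the stated $\big(1\pm\gamma(\dim(\cF))\big)|\F|^{-B(k)}$ --- is a correct reading of a notational looseness the paper carries throughout this section, not a gap in your proof.
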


The next lemma from \cite{KL08} is the last technical tool we 
will need for the proof of \cref{lem:approxtocomputelowchar_new}. This lemma can be 
adapted to our setting by \cref{cor8ofKL} and 
\cref{lem:equidistlowchar}.

\begin{lemma}[Analogue of \cite{KL08} Lemma~7]\label[lemma]{lem:technicalprobabilities}
Suppose that $\cF$ is a strongly $\gamma^{2^d}$-unbiased factor of degree $d$ with 
degree bound $\Delta$, for a decreasing function $\gamma:\N\rightarrow \R^+$. 
Let $A$ be an atom of $\cF$ and $x\in\F^n$ be a point in $A$. Then:
\begin{enumerate}
\item Let $Y_1,\ldots,Y_{d+1}$ be random variables in $\F^n$. Then
$$
\Pr\big[x+Y_I\in A, \forall I\subseteq [d+1]\big] = \big(1\pm\gamma(\dim(\cF))\big)\cdot |\F|^{-B(d+1)}
$$
\item Let $Y_1,\ldots, Y_{d+1}, Z_1,\ldots, Z_{d+1}$ be random 
variables in $\F^n$. For any non-empty $I_0\in [d+1]$
$$
\Pr\bigg[x+Y_I, x+Z_I\in A, \forall I\subseteq [d+1]\bigg| Y_{I_0}=Z_{I_0} \bigg]\leq \big(1+\gamma(\dim(\cF))\big)|\F|^m \left( |\F|^{-B(d+1)} \right)^2.
$$
\end{enumerate}
\end{lemma}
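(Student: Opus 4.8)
The plan is to follow the structure of the proof of Lemma~7 in \cite{KL08}, replacing each invocation of their equidistribution statements for strongly $F$-regular factors with the corresponding statements we have already established for strongly $\gamma^{2^d}$-unbiased factors, namely \cref{lem:equidistlowchar} and \cref{cor8ofKL}. First I would dispose of part (1). Fix the atom $A = \cF^{-1}(t)$ and a point $x \in A$. The event $[x + Y_I \in A \ \forall I \subseteq [d+1]]$ is the event that $P_i(x+Y_I) = t_i$ for all $i \in [m]$ and all $I \subseteq [d+1]$. By \cref{lem:distributionreduction} (applied with $x' = x$ and $z_I$ chosen so $x + z_I \in A$, e.g. $z_i = 0$; more directly one just observes the reduction argument applies verbatim to a single point), this event is equivalent to the event that $P_i(x+Y_I) = t_i$ for all $i$ and all $I$ with $1 \le |I| \le \Delta(P_i)$ (the empty-set condition $P_i(x) = t_i$ holds by assumption). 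The number of such pairs $(i,I)$ is exactly $B(d+1)$. Now \cref{lem:equidistlowchar}, applied with $k = d+1$, $x^I \equiv x$, and $a^{(I)}$ the $0/1$ indicator of $I$, says precisely that the joint distribution of the tuple $(P_i(x+Y_I))_{i,I}$ over these $B(d+1)$ coordinates is $\gamma(\dim(\cF))$-close to uniform on $\F^{B(d+1)}$ (here we use that the factor is strongly $\gamma^{2^d}$-unbiased, so the closeness parameter is $(\gamma^{2^d})^{1/2^d} = \gamma$). Hence the probability of hitting the single target tuple $(t_i)_{i,I}$ is $(1 \pm \gamma(\dim(\cF))) \cdot |\F|^{-B(d+1)}$, which is part (1).

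For part (2), I would condition on the value of the common vector $w := Y_{I_0} = Z_{I_0}$. Once $w$ is fixed, the event $[x + Y_I \in A, \ x + Z_I \in A \ \forall I \subseteq [d+1]]$ factors: the $Y$-side constrains the $Y_j$ for $j \in [d+1]$ subject to $Y_{I_0} = w$, and the $Z$-side constrains the $Z_j$ subject to $Z_{I_0} = w$, and conditionally on $w$ these two collections of variables are independent. So the conditional probability is the product of the $Y$-probability and the $Z$-probability, each of which is a probability of the form ``a strongly unbiased system hits a prescribed atom everywhere on the parallelepiped, subject to one linear constraint on the directions.'' I would bound each factor by writing out the Fourier expansion as in \cref{lem:equidistlowchar}: the constraint $Y_{I_0} = w$ removes the freedom corresponding to the characters supported only on sets $I$ with $I \subseteq I_0$ (equivalently, it is a codimension-$n$ restriction of the direction space), so the relevant count of free coordinates drops from $B(d+1)$ to $B(d+1) - B'$ where $B'$ counts the pairs $(i,I)$ with $\emptyset \ne I \subseteq I_0$, $|I| \le \Delta(P_i)$; since there is at least one such pair for every $i$ (as $\Delta(P_i) \ge 1$), we have $B' \ge m$. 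The near-orthogonality from strong unbiasedness gives that each conditional factor is at most $(1 + \gamma(\dim(\cF))) \cdot |\F|^{-(B(d+1) - B')} \le (1 + \gamma(\dim(\cF))) \cdot |\F|^{m} \cdot |\F|^{-B(d+1)}$, hmm --- wait, I need to be careful about the direction of the bound. Actually the cleaner route, which is what \cite{KL08} does, is: the $Y$-factor equals $\Pr[P_i(x + Y_I) = t_i \ \forall i, I \mid Y_{I_0} = w]$, and rather than computing it exactly I bound it above by $1$ on the ``$I \subseteq I_0$'' coordinates (which are now determined by $w$, so contribute a factor that is either $0$ or $1$ --- and we take the upper bound $1$) and by near-uniformity $(1 + \gamma)|\F|^{-(B(d+1)-B')}$ on the remaining coordinates. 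Multiplying the two factors and using $B' \ge m$ gives the claimed bound $(1 + \gamma(\dim(\cF))) |\F|^m (|\F|^{-B(d+1)})^2$; here the two occurrences of $|\F|^{-(B(d+1)-B')}$ combine with $|\F|^{2B'} \le |\F|^{2 \cdot (\text{something})}$ --- I should double-check in the write-up that $2B' - (\text{number of }I\subseteq I_0\text{ coordinates we upper-bounded by }1)$ works out to exactly $m$, matching the statement; this bookkeeping is the only delicate point.

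The main obstacle I anticipate is exactly this last piece of combinatorial bookkeeping: tracking precisely how the single linear constraint $Y_{I_0} = Z_{I_0}$ interacts with the indexing set of pairs $(i,I)$, and verifying that the loss is exactly a factor of $|\F|^m$ and not something larger. The substantive analytic content --- that conditioning preserves near-uniformity on the unconstrained coordinates --- is immediate from \cref{lem:equidistlowchar} and \cref{cor8ofKL} together with the observation that conditioning on a linear statistic of a near-uniform-on-$\F^{B(d+1)}$ distribution leaves the conditional distribution near-uniform on the appropriate fiber, which is a standard Fourier argument identical in form to the proof of \cref{cor8ofKL}. So the proof is essentially a careful transcription of \cite{KL08}'s argument with our equidistribution lemmas substituted in, and I would present it at that level of detail, spelling out the coordinate count in part (2) explicitly and referring to \cref{lem:equidistlowchar}, \cref{lem:distributionreduction}, and \cref{cor8ofKL} for the substitutions.
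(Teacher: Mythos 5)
Your treatment of part (1) is fine and matches the paper's approach: the paper derives it directly from \cref{cor8ofKL} (take $x'=x$ and $z_i=0$, so $P_i(x'+z_I)=P_i(x)=t_i$), which is the same thing you do via \cref{lem:distributionreduction} plus \cref{lem:equidistlowchar}.

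Part (2) has a genuine gap, and it is not just ``bookkeeping.'' You assert that conditioning on $Y_{I_0}=w$ ``removes the freedom corresponding to the characters supported only on sets $I$ with $I\subseteq I_0$'' and that the coordinates $P_i(x+Y_I)$ with $I\subseteq I_0$ ``are now determined by $w$.'' This is false whenever $|I_0|\geq 2$: fixing the single sum $Y_{I_0}=\sum_{j\in I_0}Y_j$ determines only the coordinate for $I=I_0$ (namely $P_i(x+w)$), not $P_i(x+Y_I)$ for proper nonempty $I\subsetneq I_0$, which are still genuinely random. So the count $B'$ of pairs $(i,I)$ with $\emptyset\neq I\subseteq I_0$ overcounts; the number of coordinates actually frozen by the conditioning is $\#\{i:|I_0|\leq\Delta(P_i)\}\leq m$, with equality exactly when $|I_0|=1$. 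A second, related problem is that after you substitute out one variable using $Y_{I_0}=w$ (say $Y_s=w-\sum_{j\in I_0\setminus\{s\}}Y_j$), the form $Y_I$ for $I\ni s$ becomes supported on $I\triangle I_0$ with $\pm 1$ coefficients, so two distinct original index sets $I$ and $I'=I\triangle I_0$ collide onto the same support set; the resulting system is no longer literally of the shape handled by \cref{lem:equidistlowchar}, and the claim that ``conditioning on a linear statistic of a near-uniform-on-$\F^{B(d+1)}$ distribution leaves the conditional distribution near-uniform on the appropriate fiber'' does not apply, since the conditioning is on a linear functional of the \emph{directions} $(Y_1,\dots,Y_{d+1})\in(\F^n)^{d+1}$, not on a coordinate of the image tuple in $\F^{B(d+1)}$. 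One still expects the bound to hold with loss $|\F|^{\#\{i:|I_0|\leq\Delta(P_i)\}}\leq|\F|^m$, and that is what the Kaufman--Lovett argument delivers once one carefully tracks the shared vertex $x+Y_{I_0}=x+Z_{I_0}$; but the near-orthogonality needed after the substitution has to be established for a combined $(Y,Z)$ cube with a single linear constraint, not read off from \cref{lem:equidistlowchar} as you suggest.

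For comparison, the paper's own proof of this lemma is brief: part (1) is cited directly to \cref{cor8ofKL}, and for part (2) the paper states that the proof of Lemma~7 in~\cite{KL08} carries over verbatim once their equidistribution statements are replaced by \cref{lem:equidistlowchar} and \cref{cor8ofKL}, and omits the details. Your high-level plan (condition on $w$, split into a product of two identical conditional factors) is a reasonable way to organize the KL argument, but to make it correct you need to (a) identify the frozen coordinates as exactly those with $I=I_0$, and (b) prove an equidistribution statement for the doubled parallelepiped modulo the shared face, rather than try to deduce it from the single-parallelepiped equidistribution lemma by a ``remove the $I\subseteq I_0$ characters'' heuristic.
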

\begin{proof}
Part (1) follows immediately from \cref{cor8ofKL}. Proof of 
part~(2) almost exactly mimics the proof of Lemma~7 in \cite{KL08}. Their proof (which is a bit
technical) only requires 
equidistribution properties proved in \cref{lem:equidistlowchar},
and thus follows directly from our notion of regularity. We omit the details.
\end{proof}

\begin{proofof}{of \cref{lem:approxtocomputelowchar_new}}
The proof idea is similar to the proof of \cref{prop:approxtocompute_new}. 
Let $\tP\defeq \Gamma(\cF)$. Let $X$ be the set of points in $\F^n$ 
for which $P(x)=\tP(x)$. By assumption we know that 
$|X|\geq (1-\sigma_d)|\F|^n$. We will use $(d+1)$-dimensional parallelepipeds 
to prove that $P(x)=\tP(x)$ within the whole of most of the atoms of $\cF$, 
and then to prove that $P$ should be constant on each remaining atom as well. 

By \cref{lem:equidistribution} we have that for $1-O(\sqrt{\sigma_d})$ 
of the atoms $A$ of $\cF$ we have 
$\Pr_{x\in A}\left(P(x)=\tP(x) \right)\geq 1-O(\sqrt{\sigma_d})$; 
we will say that these atoms are \emph{almost good}. We will prove 
that almost good atoms are in fact, totally good. 

Suppose that $A=\cF^{-1}(t)$ is an almost good atom, where 
$t=(t_{i,j})_{1\leq i \leq d, 1\leq j\leq M_i}$ is a vector in 
$\F^{\dim(\cF)}$. Let $A':= A\cap X$ be the set of points in $A$ 
for which $P(x)=\tP(x)$. We will prove that $A'=A$, namely $P$ is equal to $\tP$ on $A$. 

Let $B=A\backslash A'$ be the set of bad points in the atom $A$. 
Choosing $\sigma_d\leq 2^{-4(d+1)}$, since $A$ is a good atom then 
$|B|\leq 2^{-2(d+1)}|A|$. Assume that $B$ is non-empty. Let 
$Y_1,\ldots, Y_{d+1}$ be random variables in $\F^n$. Choosing $x\in B$, 
by \cref{lem:technicalprobabilities} part (1),
$$
p_A\defeq \Pr\big[x+Y_I\in A, \forall I\subseteq [d+1]\big] \geq \big(1-\gamma(\dim(\cF))\big)\cdot |\F|^{-B(d+1)}.
$$

We will upper bound the event that while all $X+Y_I$ are in $A$, $X+Y_J\in B$ 
for some $J$. To do this, we will apply Cauchy-Schwarz inequality to reduce the 
problem to counting pairs of hypercubes. Fix a non-empty set $I_0\subseteq [d+1]$, and let
\begin{align*}
p_B&\defeq \Pr\big[ x+Y_I \in A, \forall I\subseteq [d+1] \text{ and } x+Y_{I_0}\in B\big]\\&= 
\sum_{x_0\in B} \Pr\big[ x+Y_I\in A, \forall I\subseteq [d+1] \text{ and } x+Y_{I_0}= x_0 \big].
\end{align*}

Let $Z_1,\ldots, Z_{d+1}$ be random variables from $\F^n$ we have
\begin{align*}
p_B^2 &= \left(\sum_{x_0\in B} \Pr\big[ x+Y_I\in A, \forall I\subseteq [d+1] \text{ and } x+Y_{I_0}= x_0 \big] \right)^2\\ &\leq
|B| \sum_{x_0\in B} \Pr\big[ x+Y_I\in A, \forall I\subseteq [d+1] \text{ and } x+Y_{I_0}= x_0 \big]^2 \\ &=
|B|\Pr\bigg[ x+Y_I, x+Z_I\in A, \forall I\subseteq [d+1] \text{ and } x+Y_{I_0}=x+Z_{I_0}  \bigg] \\ &=
|B|\;|\F|^{-n}\Pr\bigg[ x+Y_I,x+Z_I\in A, \forall I\subseteq [d+1] \bigg| x+Y_{I_0}=x+Z_{I_0}\bigg]\\ &\leq
|B|\; |\F|^{m-n} p_A^2\; (1+\gamma(\dim(\cF))),
\end{align*}
where the first inequality is the Cauchy Schwarz inequality and the last 
inequality follows from \cref{lem:technicalprobabilities}, part (2). 
By \cref{lem:equidistribution}, $|A|= \big(1\pm \gamma(\dim(\cF))\big) |\F|^{\dim(\cF)-n}$, thus
$$
p_B^2\leq \frac{|B|}{|A|} p_A^2 \big(1\pm 2\gamma(\dim(\cF))\big) \leq 2^{-2(d+1)}p_R^2,
$$
and thus $\frac{p_B}{p_A}\leq 2^{-(d+1)}\big(1\pm 2\gamma(\dim(\cF))\big)$. 
Now by a union bound over all non-empty $I_0\subseteq [d+1]$ the probability 
that there is some $I_0$ for which $x+Y_{I_0}\in B$ is strictly less than $1$ 
for a small enough $\gamma$. Therefore, there exists $y_1,\ldots,y_{d+1}\in \F^n$ 
for which $x+y_I\in A\backslash B$, for every non-empty $I\subseteq [d+1]$. This 
implies $x\in B$, since $\deg(P)=d$ and therefore
$$
P(x) = \sum_{I\subseteq [d+1], |I|\neq 0} (-1)^{|I|+1}P(x+y_I) = \sum_{I\subseteq [d+1], |I|\neq 0} (-1)^{|I|+1}\tP(x+y_I) = \tP(x).
$$

This proves that every almost good atom is in fact totally good. 
It remains to prove that $P$ is constant on each of the remaining 
$O(\sqrt{\sigma_d})$ of the atoms. Now the proof follows from \cref{claim:badatomsaregood}.
\end{proofof}

\subsection{Applications}
\subsubsection{Computing a biased Polynomial}
Having \cref{lem:stronglyunbiasedrefinement} and 
\cref{lem:approxtocomputelowchar_new} in hand we immediately 
have the following analogue of \cref{thm:biasimplieslowrank}, which 
states that if a polynomial is biased then we can find a factor that computes it. 

\begin{theorem}[Computing a biased polynomial]
\label{thm:biased_compute}
Let $\beta\in(0,1]$ be an error parameter. There is a randomized algorithm 
that given query access to a polynomial $P:\F^n \rightarrow \F$ of degree $d$ 
such that $\bias(P)\geq \delta>0$, runs in $O_{\delta,\beta}(n^d)$ and with 
probability $1-\beta$, returns a polynomial factor $\cF=\{P_1,...,P_{c(d,\delta)}\}$ 
of degree $d-1$ such that $P$ is measurable in $\cF$.
\end{theorem}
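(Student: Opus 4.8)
The plan is to obtain \cref{thm:biased_compute} by chaining together the three algorithmic ingredients established above: the algorithmic Bogdanov--Viola lemma (\cref{lem:algorithmicBV}), the algorithmic strongly $\gamma$-unbiased refinement (\cref{lem:stronglyunbiasedrefinement}), and the ``approximation implies computation'' statement for strongly unbiased factors (\cref{lem:approxtocomputelowchar_new}). None of these requires $|\F|$ to be large, so neither does the resulting algorithm.

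Concretely, let $\sigma_d$ and $\gamma_d:\N\to\R^+$ be the constant and the decreasing function furnished by \cref{lem:approxtocomputelowchar_new} for the degree $d$ of $P$; both depend only on $d$. First run \cref{lem:algorithmicBV} on the input polynomial $P$ with failure probability $\beta/2$, approximation parameter $\sigma:=\sigma_d/2$, and the given $\delta$. In time $\rO_{\delta,\beta}(n^d)$ this reads off the explicit description of $P$ and, with probability $1-\beta/2$, returns an explicit factor $\cF_0=\{P_1,\ldots,P_C\}$ of degree $\le d-1$ with $C=\rO_{\delta,\beta}(1)$, together with a function $\Gamma_0$, such that $\tP:=\Gamma_0(\cF_0)$ satisfies $\Pr_x[P(x)\neq\tP(x)]\le\sigma_d/2$. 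Then feed $\cF_0$ into \cref{lem:stronglyunbiasedrefinement} with the very function $\gamma_d$ fixed above, closeness parameter $\sigma_d/2$, and failure probability $\beta/2$. With probability $1-\beta/2$ this returns, in time $\rO_{\delta,\beta}(n^d)$, a strongly $\gamma_d$-unbiased factor $\cF'=\{Q_1,\ldots,Q_{m'}\}$ of degree $d-1$ with a degree bound $\Delta$, such that $\cF'$ is $(\sigma_d/2)$-close to being a refinement of $\cF_0$, i.e. there is $\Gamma_{\mathrm{ref}}$ with $\Pr_x[(P_1(x),\ldots,P_C(x))\neq\Gamma_{\mathrm{ref}}(\cF'(x))]\le\sigma_d/2$, and $\dim(\cF')=\rO_{\delta,\beta}(1)$.

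By a union bound both runs succeed with probability at least $1-\beta$. Setting $\Gamma':=\Gamma_0\circ\Gamma_{\mathrm{ref}}$ and adding the two error estimates gives $\Pr_x[P(x)\neq\Gamma'(\cF'(x))]\le\sigma_d$, so \cref{lem:approxtocomputelowchar_new} applies to the strongly $\gamma_d$-unbiased factor $\cF'$ and yields that $P$ is in fact measurable in $\cF'$, which is the desired degree-$(d-1)$ factor. The only content beyond quoting these three lemmas is parameter bookkeeping: one must check that the two approximation errors (from Bogdanov--Viola and from the approximate refinement) sum to at most the threshold $\sigma_d$, and that the decreasing function handed to the refinement algorithm is exactly the $\gamma_d$ demanded by \cref{lem:approxtocomputelowchar_new}; there is no circularity since $\sigma_d$ and $\gamma_d$ depend only on $d$ and are committed to before the refinement is run. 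I do not expect a genuine obstacle here --- the real difficulty, showing that a merely strongly \emph{unbiased} factor still has enough equidistribution to force ``approximation implies computation,'' is precisely what \cref{lem:approxtocomputelowchar_new} (via \cref{lem:equidistlowchar} and \cref{lem:technicalprobabilities}) already handles. Finally, to make $\dim(\cF')$ depend on $d,\delta$ alone rather than also on $\beta$, as in the statement, run the whole procedure with a fixed constant failure probability and repeat $\rO(\log(1/\beta))$ times, using the explicit descriptions of $\Gamma_0,\Gamma_{\mathrm{ref}}$ (or the query access of \cref{lem:query}) together with \cref{lem:equidistribution} to estimate $\Pr_x[P(x)\neq\Gamma'(\cF'(x))]$, and output the first run whose estimate falls below $\sigma_d$; by \cref{lem:approxtocomputelowchar_new} any such run certifies exact measurability of $P$ in its factor.
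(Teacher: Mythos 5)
Your proposal is correct and follows essentially the same chain as the paper's proof: algorithmic Bogdanov--Viola (\cref{lem:algorithmicBV}) to get an approximating factor, \cref{lem:stronglyunbiasedrefinement} to pass to a strongly $\gamma_d$-unbiased refinement, then \cref{lem:approxtocomputelowchar_new} to upgrade the $\sigma_d$-approximation to exact measurability. Your closing remark about repeating with constant failure probability to make $\dim(\cF')$ independent of $\beta$ is a reasonable extra bit of care that the paper glosses over (its own bound on $\dim(\cF)$ from Bogdanov--Viola still carries a $1/\beta$ factor), but the core argument is the same.
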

\begin{proof}
Let $\sigma_d$ and $\gamma_d$ be as in \cref{lem:approxtocomputelowchar_new}. 
By \cref{lem:algorithmicBV} since $\bias(P)\geq \delta$, 
in time $O_{\sigma_d, \beta}(n^d)$, with probability $1-\frac{\beta}{2}$ 
we can find a polynomial factor $\cF$ of degree $d-1$  such that 
\begin{itemize}
\item $P$ is $\frac{\sigma_d}{2}$-close to a function of $\cF$
\item $\dim(\cF)\leq \frac{4|\F|^5}{\delta^2 \sigma_d \beta}$
\end{itemize}
By \cref{lem:stronglyunbiasedrefinement}, with probability 
$1-\frac{\beta}{2}$, we can find a strongly $\gamma_d$-unbiased factor 
$\cF'$ with degree bound $\Delta$ such that $\cF'$ is $\frac{\sigma_d}{2}$-close 
to being a refinement of $\cF$. 

Thus with probability greater than $1-\beta$, $P$ is $\sigma_d$-close 
to a function of the strongly $\gamma_d$-unbiased factor $\cF'$, and 
it follows from \cref{lem:approxtocomputelowchar_new} that 
$P$ is measurable in $\cF'$. 
\end{proof}

\subsubsection{Worst Case to Average Case Reduction}
Here we will show how \cref{thm:biased_compute} implies an algorithmic 
version of worst case to average case reduction from~\cite{KL08}. To present the 
result, we first have to define what it means for a factor to approximate a 
polynomial.

\begin{definition}[$\delta$-approximation]
We say that a function $f:\F^n \rightarrow \F$ $\delta$-approximates a polynomial $P:\F^n \rightarrow \F$ if 
$$
\left| \E_{x\in\F^n} \left[ e_\F\big(P(x)-f(x)\big)\right]\right|\geq \delta.
$$
\end{definition}

Kaufman and Lovett use \cref{lem:approxtocomputelowchar} to show the following reduction.

\begin{theorem}[Theorem 3 of~\cite{KL08}]
Let $P(x)$ be a polynomial of degree $k$, $g_1,...,g_c$ polynomials of degree $d$, and 
$\Lambda:\F^c\rightarrow \F$ a function such that composition $\Lambda(g_1(x),\ldots,g_c(x))$ 
$\delta$-approximates $P$. Then there exist $c'$ polynomials $h_1,\ldots, h_{c'}$ 
and a function $\Gamma:\F^{c'}\rightarrow \F$ such that
$$
P(x)= \Gamma(h_1(x),\ldots, h_{c'(x)}).
$$
Moreover, $c'=c'(d,c,k)$ and each $h_i$ is of the form $p(x+a)-p(x)$ or $g_j(x+a)$, 
where $a\in \F^n$. In particular, if $k\leq k-1$ then $\deg(h_i)\leq k-1$ also. 
\end{theorem}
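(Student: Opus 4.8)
The plan is to reduce the claim, via a Fourier expansion over $\F^c$, to a statement about a single \emph{biased} polynomial, and then to run the Bogdanov--Viola plus strong-regularity pipeline of \cref{sec:bv} and \cref{sec:lowchar} while tracking the shape of every polynomial produced. For the reduction, set $\Phi:=e_\F\circ\Lambda:\F^c\to\C$ and expand it in the Fourier basis, $e_\F(\Lambda(u))=\sum_{\alpha\in\F^c}\widehat\Phi(\alpha)\,e_\F(\langle\alpha,u\rangle)$. The $\delta$-approximation hypothesis then reads $\delta\le\bigl|\sum_{\alpha}\overline{\widehat\Phi(\alpha)}\cdot\E_x e_\F\bigl(P(x)-\langle\alpha,g(x)\rangle\bigr)\bigr|$, and since $\sum_\alpha|\widehat\Phi(\alpha)|^2=\E_u|e_\F(\Lambda(u))|^2=1$, Cauchy--Schwarz yields an $\alpha^\ast\in\F^c$ with $\bigl|\E_x e_\F(P(x)-Q(x))\bigr|\ge\delta':=\delta\,|\F|^{-c/2}$, where $Q:=\sum_{j}\alpha^\ast_j g_j$ has degree at most $d$. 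Hence $R:=P-Q$ is a polynomial with $\bias(R)\ge\delta'$ and $\deg R\le m:=\max(k,d)$; when $d\le k-1$ in fact $\deg R=k$, the degree-$k$ part of $R$ being that of $P$.

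Next, feed $R$ into the Bogdanov--Viola lemma (\cref{lem:algorithmicBV}, whose mere output existence is enough, or \cite{BogV}): for a small constant $\sigma>0$ fixed below, $R$ is $\sigma$-close to being measurable in a factor whose polynomials are additive derivatives $D_{a_1}R,\dots,D_{a_t}R$ with $t=\rO_{m,\delta',|\F|}(1/\sigma)$. Since $D_aR=D_aP-\sum_j\alpha^\ast_j D_ag_j$, with $D_aP(x)=P(x+a)-P(x)$ of degree $\le k-1$ and $D_ag_j(x)=g_j(x+a)-g_j(x)$ of degree $\le d-1$, each $D_aR$ is an $\F$-linear combination of polynomials of degree $\le m-1$, each of the form $p(x+a)-p(x)$ with $p\in\{P,g_1,\dots,g_c\}$. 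Hence $R$ is $\sigma$-close to measurable in the factor $\cF_1:=\{D_{a_i}P\}_i\cup\{D_{a_i}g_j\}_{i,j}$, of degree $\le m-1$, all of whose members have the prescribed form.

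Now apply the Strong-Regularity Lemma (\cref{lem:strongregularity}) to $\cF_1$ with a growth function increasing fast enough for \cref{lem:approxtocomputelowchar}, obtaining a strongly regular refinement $\cF_2$ of degree $\le m-1$ with $\dim(\cF_2)=\rO_{m,\delta',|\F|}(1)$; fix $\sigma$ below the threshold constant $\epsilon_m=2^{-\Omega(m)}$ of \cref{lem:approxtocomputelowchar} for degree $m$, which keeps $t$ and $\dim(\cF_2)$ bounded. Since $\cF_2$ refines $\cF_1$, $R$ is still $\sigma$-close to measurable in $\cF_2$, so \cref{lem:approxtocomputelowchar} gives that $R$ is \emph{exactly} measurable in $\cF_2$. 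Finally $P=R+Q$ and $Q$ is an $\F$-linear combination of the polynomials $g_j(x)=g_j(x+0)$; so $\{h_1,\dots,h_{c'}\}:=\cF_2\cup\{g_j(x+0):j\in[c]\}$ together with the induced composition $\Gamma$ yields $P=\Gamma(h_1,\dots,h_{c'})$ with $c'=\dim(\cF_2)+c=c'(d,c,k)$.

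The main obstacle is verifying that every $h_i$ has the stated form, i.e.\ that the strong-regularity refinement never leaves the class of iterated additive derivatives of $P$ or of shifts of the $g_j$. The refinement step of \cref{lem:strongregularity} (as in the proof of \cref{lem:gtregularity}) only ever replaces a polynomial $S$ of the current factor by lower-degree polynomials that here are additive derivatives of polynomials already present. Inducting over the refinement steps: if a current member is of the form $q(x)=q_0(x+b)-q_0(x)$ for an already-admitted polynomial $q_0$, then $D_hq(x)=q_0(x+h+b)-q_0(x+h)-q_0(x+b)+q_0(x)=q_1(x+h)-q_1(x)$ with $q_1(x):=q_0(x+b)-q_0(x)$ itself such an iterate, and the $h$-derivative of $x\mapsto g_j(x+b)$ equals $x\mapsto g_j(x+b+h)-g_j(x+b)$, again of this shape. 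Hence every member of $\cF_2$ is of the form $p(x+a)-p(x)$ for some iterate $p$ of $P$ or of a shift of a $g_j$, necessarily of degree strictly below $\deg(p)$; when $d\le k-1$, all of these --- together with the bare shifts $g_j(x+0)$ --- have degree $\le\max(k-1,d)=k-1$, which is the ``in particular'' clause. The remaining points, that $\sigma$ can be taken as small as the constant $\epsilon_m$ of \cref{lem:approxtocomputelowchar} requires while $t$ stays bounded, and that $R$ of degree $m$ against the degree-$(m-1)$ strongly regular factor $\cF_2$ satisfies the hypotheses of that lemma, are routine.
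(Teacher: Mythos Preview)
The paper does not give its own proof of this theorem; it is quoted verbatim from \cite{KL08} and then the paper proves an \emph{algorithmic} analogue (\cref{thm:worst_to_average}). Your overall strategy---Fourier-expand $e_\F(\Lambda)$ to locate a single $\alpha^\ast$ with $\bias(P-\sum_j\alpha^\ast_j g_j)\ge\delta'$, then feed $R:=P-Q$ through Bogdanov--Viola, strong regularity, and approximation-to-computation---is exactly the route taken both in \cite{KL08} and in the paper's algorithmic version (where the last three steps are packaged into \cref{thm:biased_compute}). So the core argument is correct and matches the intended proof; the paper's own \cref{thm:worst_to_average} does not even attempt to track the explicit form of the $h_i$, which you do.

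The one place your write-up has a genuine gap is the final paragraph's verification of that form. Your inductive step is circular: you write $D_hq=q_1(x+h)-q_1(x)$ with $q_1(x):=q_0(x+b)-q_0(x)=q(x)$, i.e.\ you have merely restated that $D_hq$ is the $h$-derivative of $q$. This shows nothing about staying inside the class of derivatives \emph{of $P$} or shifts \emph{of the $g_j$}; under your reading, $p$ is allowed to be any previously admitted polynomial, which makes the form constraint vacuous. What is actually needed (and what \cite{KL08} uses) is the elementary identity
\[
D_hD_aP \;=\; D_{h+a}P - D_hP - D_aP,
\]
so every iterated additive derivative of $P$ is an $\F$-linear combination of \emph{single} derivatives $D_bP=P(x+b)-P(x)$; likewise iterated derivatives of $g_j$ are linear combinations of shifts $g_j(x+b)$. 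Since the refinement in \cref{lem:strongregularity} only introduces polynomials from $\Der(\cF')$, every member of $\cF_2$ is then measurable in a bounded family of single derivatives of $P$ and shifts of $g_j$, which is the asserted form (with $p=P$, not a floating $p$). Once this is said, your degree conclusion for $d\le k-1$ follows. The fix is short, but as written the argument does not establish the claimed shape of the $h_i$.
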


Here we will design a randomized algorithm that given $g_1,\ldots,g_k$ 
can compute a set of $h_1,\ldots,h_{c'}$ efficiently. 

\begin{theorem}[Worst-case to average case reduction]
\label{thm:worst_to_average}

Let $\delta, \beta \in (0,1]$ be parameters. There is a randomized algorithm that takes as input
\begin{itemize}
\item A polynomial $P:\F^n\rightarrow \F$ of degree $d$
\item A polynomial factor $\cF=\{P_1,\ldots, P_m\}$ of degree $d-1$
\item A function $\Lambda$ such that $\Lambda(\cF)$ $\delta$-approximates $P$
\end{itemize}
and with probability at least $1-\beta$, returns a polynomial factor 
$\cF'= \{R_1,\ldots,R_{m'}\}$ and a function $\Gamma:\F^{\dim(\cF')}\rightarrow \F$ such that
$$
P(x)= \Gamma(R_1(x),\ldots, R_{m'}(x)),
$$
moreover $c'= O_{\dim(\cF), \beta, \delta, d}(1)$. 
\end{theorem}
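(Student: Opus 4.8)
The plan is to reduce to \cref{thm:biased_compute} by extracting from the given approximation an honest polynomial of noticeable bias. Write $m\defeq\dim(\cF)$, and for $\alpha\in\F^m$ let $\langle\alpha,\cF\rangle\defeq\sum_{j=1}^m\alpha_jP_j$, a polynomial of degree at most $d-1$. Although $\Lambda(\cF)$ is not itself a polynomial, expanding $g\defeq e_\F\circ\Lambda:\F^m\to\C$ in Fourier characters of $\F^m$ yields $e_\F(\Lambda(\cF(x)))=\sum_{\alpha\in\F^m}\hat g(\alpha)\,e_\F(\langle\alpha,\cF\rangle(x))$, and therefore
\[
\E_{x}\,e_\F\big(P(x)-\Lambda(\cF(x))\big)=\sum_{\alpha\in\F^m}\overline{\hat g(\alpha)}\cdot\E_x\,e_\F\big(P(x)-\langle\alpha,\cF\rangle(x)\big).
\]
Since $\sum_\alpha|\hat g(\alpha)|^2=1$, Cauchy--Schwarz gives $\sum_\alpha|\hat g(\alpha)|\le|\F|^{m/2}$, so the hypothesis $|\E_x e_\F(P-\Lambda(\cF))|\ge\delta$ forces some $\alpha$ with $\bias(P-\langle\alpha,\cF\rangle)=|\E_x e_\F(P(x)-\langle\alpha,\cF\rangle(x))|\ge\delta'\defeq\delta/|\F|^{m/2}$. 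Note $P-\langle\alpha,\cF\rangle$ is a genuine polynomial of degree exactly $d$, since we only subtracted something of degree at most $d-1$.

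For the algorithm, observe that there are only $|\F|^m=O_{\dim(\cF)}(1)$ candidate vectors $\alpha$, so we may estimate $|\E_x e_\F(P(x)-\langle\alpha,\cF\rangle(x))|$ for each of them to additive accuracy $\delta'/4$ using $\rO_{\delta,\dim(\cF)}(\log(1/\beta))$ uniformly random samples of $x$ --- each sample is computable since we have query access to $P$ and explicit descriptions of $P_1,\dots,P_m$. Let $\alpha^*$ achieve the largest estimate; by a Chernoff bound and a union bound, with probability $1-\beta/2$ all estimates are accurate, whence $\bias(P-\langle\alpha^*,\cF\rangle)\ge\delta'/2$. Set $P'\defeq P-\langle\alpha^*,\cF\rangle$; queries to $P'$ are answered by one query to $P$ plus an evaluation of $P_1,\dots,P_m$, so we may apply \cref{thm:biased_compute} to the degree-$d$ polynomial $P'$ with error parameter $\beta/2$, obtaining in time $\rO_{\delta,\dim(\cF),\beta}(n^d)$ and with probability $1-\beta/2$ a degree-$(d-1)$ polynomial factor $\mathcal{G}=\{Q_1,\dots,Q_r\}$ with $r=O_{d,\delta,\dim(\cF)}(1)$ and a function $\Gamma'$ with $P'=\Gamma'(Q_1,\dots,Q_r)$. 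Finally output the factor $\cF'\defeq\{Q_1,\dots,Q_r,P_1,\dots,P_m\}$ (of degree $d-1$, with $m'\defeq r+m=O_{d,\delta,\dim(\cF)}(1)$) together with $\Gamma:\F^{r+m}\to\F$ given by $\Gamma(q_1,\dots,q_r,p_1,\dots,p_m)\defeq\Gamma'(q_1,\dots,q_r)+\sum_{j=1}^m\alpha^*_j\,p_j$; then $P(x)=P'(x)+\langle\alpha^*,\cF\rangle(x)=\Gamma(\cF'(x))$ and the total failure probability is at most $\beta$.

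The main obstacle is precisely the first step: \cref{thm:biased_compute} applies only to polynomials, while we are only promised that $P$ is well approximated by a \emph{function of a factor}, so one must first convert this into a bias statement about the honest polynomial $P-\langle\alpha^*,\cF\rangle$ (recovering $\alpha^*$ is then easy, since there are only constantly many candidates). Everything after that is a black-box invocation of \cref{thm:biased_compute} plus bookkeeping of error probabilities. We remark that this reduction is purely Fourier-analytic and that \cref{thm:biased_compute}, proved via the strongly unbiased factors of \cref{sec:lowchar}, holds over every prime field, so unlike the results of \cref{sec:uniformrefinement} the statement of \cref{thm:worst_to_average} needs no assumption on the characteristic of $\F$.
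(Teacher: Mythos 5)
Your proof is correct and follows essentially the same route as the paper: Fourier-decompose $e_\F\circ\Lambda$, find an $\alpha^*$ among the $|\F|^m$ candidates such that $P-\langle\alpha^*,\cF\rangle$ has noticeable bias (by sampling), and then invoke \cref{thm:biased_compute}. Two small points of comparison: you get the sharper bound $\delta'\ge\delta/|\F|^{m/2}$ via Parseval and Cauchy--Schwarz where the paper is content with $\delta/|\F|^m$, and you are slightly more careful than the paper's write-up in explicitly appending $P_1,\dots,P_m$ to the output factor so that $\Gamma$ is a genuine function of the output factor alone (the paper leaves this implicit when it writes $P=\Gamma'(R_1,\dots,R_{\tilde m})+Q_{\alpha^*}$). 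Neither difference is substantive.
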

\begin{proof}
Looking at the Fourier decomposition of $e_\F(\Lambda)(y_1,\ldots, y_m)$, 
since $\Lambda(P_1,\ldots, P_m)$ $\delta$-approximates $P$, there must exist 
$\alpha=(\alpha_1,\ldots,\alpha_m) \in \F^c$ such that 
$Q_\alpha(x)= \sum_{i\in [m]} \alpha_i\cdot P_i(x)$, $\delta'$-approximates $P$, 
where $\delta'\geq \frac{\delta}{|\F|^m}$. We will estimate $|\bias(P-Q_\alpha))|$ 
for every $\alpha\in \F^m$. For each $\alpha$, we can distinguish 
$\bias(P-Q_\alpha))\leq \frac{\delta'}{2}$ from $\bias(P-Q_\alpha)\geq \delta'$, 
with probability $1-\frac{\beta}{3|\F|^m}$, by evaluating $e_\F(P(x)-Q_\alpha(x))$ 
on $C=O_{\dim(\cF)}\left(\frac{1}{\delta'^2} \log(\frac{1}{\beta})\right)$ random inputs. 
Let $\alpha^*\in\F^m$ be such that our estimate for $\bias(P-Q_{\alpha})$ is greater 
than $\frac{3\delta'}{4}$. 

With probability at least $1-\frac{\beta}{3|\F|^m}$ we will find such 
$\alpha^*$, and by a union bound with probability at least $1-\frac{\beta}{3}$, 
$\bias(P-Q_{\alpha^*})\geq \frac{\delta'}{2}$. Now applying 
\cref{thm:biased_compute} to $P-Q_{\alpha^*}$ with parameters 
$\frac{\beta}{3}$ and $\frac{\delta'}{2}$, we find a polynomial factor $\cF'=\{R_1,\ldots, R_{\tilde{m}}\}$ 
of degree $d-1$, such that with probability $1-\frac{\beta}{3}$, 
$P-Q_{\alpha^*}$ is measurable in $\cF'$. Namely there exists $\Gamma'$ such 
that $P-Q_{\alpha^*}(x)= \Gamma'(\cF'(x))$ and therefore
$$
P(x)= \Gamma'(R_1(x),\ldots, R_{\tilde{m}}(x))+Q_{a^*}(x).
$$
\end{proof}


\bibliographystyle{amsalpha}
\bibliography{regularity}


\end{document}